\documentclass[12pt]{article}

\usepackage{amssymb}
\usepackage{latexsym}
\usepackage{amsmath}
\usepackage{hyperref}
\usepackage{amsthm}
\usepackage{todonotes}
\usepackage[capitalise]{cleveref}
\usepackage{microtype}
%%%%%%%%%%%%%%%%%%%%%%%% for tables
\topmargin-1cm
\textheight 222mm \textwidth 170mm
\oddsidemargin-.3cm \evensidemargin-.3cm
\parskip0.5ex
\parindent7mm
%%%%%%%%%%%%%%%%%%%

\definecolor{dkgreen}{rgb}{0.2,0.7,0.2}

%%%%%%%%%%%%%%%%%%%%%%%%
\theoremstyle{definition}
\newtheorem{theo}{Theorem}[section]
\newtheorem{prop}[theo]{Proposition}
\newtheorem{lemma}[theo]{Lemma}
\newtheorem{conjecture}[theo]{Conjecture}
\newtheorem{cor}[theo]{Corollary}
\newtheorem{exa}[theo]{Example}
\newtheorem{defi}[theo]{Definition}
\newtheorem{rem}[theo]{Remark}
\theoremstyle{remark}

\numberwithin{equation}{section}
%%%%%%%%%%%%%%%%%%%%%

\newcommand{\N}{\mathbb{N}}

\newcommand{\F}{\mathbb{F}}

\newcommand{\cC}{\mathcal{C}}

\newcommand{\cG}{\mathcal{G}}

\newcommand{\cR}{\mathcal{R}}
\newcommand{\cS}{\mathcal{S}}
\newcommand{\cT}{\mathcal{T}}
\newcommand{\cU}{\mathcal{U}}
\newcommand{\cV}{\mathcal{V}}
\newcommand{\cW}{\mathcal{W}}

\newcommand{\PGnq}{\text{PG}(n-1,q)}
\newcommand{\Gkn}{\cG_q(k,n)}

\newcommand{\GL}{\text{GL}}
\newcommand{\PGL}{\text{PGL}}
\newcommand{\GLhat}{\text{GL}_{n/s}(q^s)}
\newcommand{\Fqn}{\F_{q^n}}
\newcommand{\Fqs}{\F_{q^s}}
\newcommand{\Fqa}{\F_{q^a}}
\newcommand{\cUhat}{\widehat{\cU}}
\newcommand{\cVhat}{\widehat{\cV}}
\newcommand{\mmid}{\!\mid\!}
\newcommand{\Gal}{\text{Gal}}
\newcommand{\Aut}{\text{Aut}}
\newcommand{\spann}{\text{span}}

\newcommand{\im}{\text{im}}
\newcommand{\subgroup}[1]{\mbox{$\langle{#1}\rangle$}}
\newcommand{\inner}[2]{\mbox{$\langle{#1}\!\mid\!{#2}\rangle$}}
\newcommand{\T}{^{\sf t}}
\newcommand{\ds}{\text{d}}
\newcommand{\dd}{\text{d}_{\rm{s}}}

\newcommand{\qbinom}[2]{\genfrac{[}{]}{0pt}{}{#1}{#2}}

%%%%%%%%%%%%%%%%%%%%%%%%
\newcounter{alp}
\newcounter{ara}
\newcounter{rom}
\newenvironment{romanlist}{\begin{list}{(\roman{rom})\hfill}{\usecounter{rom}
			\topsep0ex \labelwidth.8cm \leftmargin.8cm \labelsep0cm
			\rightmargin0cm \parsep0ex \itemsep.4ex
			\partopsep1ex}}{\end{list}}
\newenvironment{alphalist}{\begin{list}{(\alph{alp})\hfill}{\usecounter{alp}
			\topsep0ex \labelwidth.7cm \leftmargin.7cm \labelsep0cm
			\rightmargin0cm \parsep0ex \itemsep0.4ex
			\partopsep0ex}}{\end{list}}

%%%%%%%%%%%%%%%%%%%%%%%%

\DeclareMathOperator{\Orb}{Orb}

%%%%%%%%%%%%%%%%%%%%

\begin{document}
		
	\title{Automorphism Groups and Isometries\\ for Cyclic Orbit Codes}
	\author{Heide Gluesing-Luerssen$^\ast$ and Hunter Lehmann\footnote{HGL was partially supported by the grant \#422479 from the Simons Foundation.
  HGL and HL are with the Department of Mathematics, University of Kentucky, Lexington KY 40506-0027, USA;
\{heide.gl, hunter.lehmann\}@uky.edu.}}

\date{January 23, 2021}
\maketitle
	
\begin{abstract}\label{sec:Abstract}
We study orbit codes in the field extension~$\Fqn$.
First we show that the automorphism group of a cyclic orbit code is contained in the normalizer of the Singer subgroup 
if the orbit is generated by a subspace that is not contained in a proper subfield of $\Fqn$. 
We then generalize to orbits under the normalizer of the Singer subgroup.
In that situation some exceptional cases arise and some open cases remain.
Finally we characterize linear isometries between such codes.
\end{abstract}
	
%%%%%%%%%%%%%%%%%%%%%%%%%%%%%%%%%%%%%%%%%%%%%%%%%%%%%%
\section{Introduction}\label{sec:Introduction}
%%%%%%%%%%%%%%%%%%%%%%%%%%%%%%%%%%%%%%%%%%%%%%%%%%%%%%

In~\cite{KoKsch08} Koetter/Kschischang introduced subspace codes for random network coding. 
As they demonstrated, these codes, together with rank-metric codes, are the appropriate tools for information transmission
with error correction through a network with multiple sources and receivers.
As a consequence,~\cite{KoKsch08} led to an intense study of both classes of codes.

Mathematically, a subspace code is simply a collection of subspaces of some vector space~$\F_q^n$, endowed with the subspace distance.
One class  that garnered particular attention are orbit codes; see~\cite{EKW10,KoKu08,TMBR13}.
These are, by definition, orbits of a subspace of~$\F_q^n$ under a subgroup of $\GL_n(\F_q)$ 
(acting naturally on the set of subspaces). 
If the group is cyclic, these codes are known as cyclic orbit codes.
However, in most of the literature the latter notion is reserved for orbit codes under the Singer subgroup, and we 
will follow this custom in this paper.

A Singer subgroup is, by definition, a cyclic subgroup of $\GL_n(\F_q)$ of order $q^n-1$.
Its meaning is best understood by identifying $\F_q^n$ with the field extension $\Fqn$ as $\F_q$-vector spaces. 
The matrix group $\GL_n(\F_q)$ turns into the group of $\F_q$-vector space automorphisms of $\Fqn$, and we will denote this group by $\GL_n(q)$.
The subgroup consisting of the multiplication maps $x\mapsto ax$ for any $a\in\Fqn^*$ is isomorphic to $\Fqn^*$ and 
thus a Singer subgroup of~$\GL_n(q)$.
In fact, all Singer subgroups of $\GL_n(q)$ are conjugate to $\Fqn^*$ and can be interpreted as a group of multiplication maps; see 
\cref{L-ConjSinger} and the paragraph thereafter.
In this setting, a cyclic orbit code is thus the orbit of an $\F_q$-subspace~$\cU$ of $\Fqn$ under $\Fqn^*$, i.e., $\{\omega^i\cU\mid i=0,\ldots,q^n-2\}$, where $\omega$ is a primitive element of~$\Fqn$.

First examples of cyclic orbit codes with good distance appeared already in \cite{EtVa11}, and in fact, in most of the literature, cyclic orbit codes have been studied in this setting, see for instance~\cite{GLMT15} for details on the orbit length and some distance results, \cite{BEGR16,OtOz17,RRT18,ChLi18,ZhTa19} for constructions of unions of cyclic orbit codes with good distance with the aid of subspace polynomials and \cite{GLL19} for a study of the distance distribution of cyclic orbit codes.
The aforementioned unions of cyclic orbit codes are in fact orbits codes under the normalizer of~$\Fqn^*$ in $\GL_n(q)$. 
The normalizer is isomorphic to $\Gal(\Fqn\mmid\F_q)\rtimes\Fqn^*$, and thus its orbits are simply unions of at most~$n$ 
cyclic orbit codes. 
This insight has also been utilized in \cite{BEOVW16}, where the authors succeeded in finding a $q$-Steiner system of type $\cS_2[2,3,13]$:
it consists of~$15$ orbit codes under the normalizer group.

In this paper we will study the automorphism groups of cyclic orbit codes and orbit codes under the Singer normalizer.
As usual, the automorphism group of a subspace code is defined as the subgroup of $\GL_n(q)$ that leaves the code invariant.
We will prove the following result.
Let~$\cU$ be a subspace of~$\Fqn$ containing~$1$ (which is no restriction) and let~$\Fqs$ be the smallest subfield of~$\Fqn$ containing~$\cU$.
Then the automorphism group is contained in the normalizer of the extension-field subgroup $\GLhat$, where the latter is defined as the 
subgroup of all $\Fqs$-linear automorphisms of~$\Fqn$. 
In particular, if~$\cU$ is generic, i.e., not contained in a proper subfield of~$\Fqn$, the automorphism group of the cyclic orbit code generated by~$\cU$ is contained in the normalizer of the Singer subgroup~$\Fqn^*$.
In order to prove these results we will derive a lower bound on the length of the $\GLhat$-orbit of~$\cU$ 
 for any given divisor~$s$ of~$n$.
A crucial role will be played by the parameter $\delta_s(\cU)$, which is the $\Fqs$-dimension of the $\Fqs$-subspace generated 
by~$\cU$.
Note that $\delta_s(\cU)=1$ iff $\cU\subseteq\Fqs$. 

We then turn to orbit codes under the normalizer of the Singer subgroup and derive the same results for the automorphism groups as long as the orbit code is generated by a subspace~$\cU$ satisfying $\delta_s(\cU)\neq2$. 
The case $\delta_s(\cU)=2$ is of particular interest:
the above results hold for many parameter cases, while there exist counterexamples for others.
We strongly believe that these examples are the only exceptions to our main result on the automorphism group.

We finally discuss linear isometries, i.e., maps from $\GL_n(q)$, between cyclic orbit codes and orbit codes under the Singer normalizer.
Our results on the automorphism groups immediately imply the following facts for orbits generated by generic subspaces:
(i) a linear isometry between cyclic orbit codes is in the normalizer of~$\Fqn^*$; 
(ii) linearly isometric orbit codes under the Singer normalizer are in fact equal -- with the possible exception of orbits generated by 
subspaces~$\cU$ with $\delta_s(\cU)=2$ for some~$s$.
This drastically reduces the work load for testing isometry between such codes.
The nature of our counterexamples leads us to believe that the last statement does not need the assumption on $\delta_s(\cU)$.
We close the paper with some examples listing the number of distinct isometry classes of cyclic orbit codes and, making use of~\cite{GLL19}, also provide the weight distribution for each class.

%%%%%%%%%%%%%%%%%%%
\section{Singer Subgroups and Extension-Field Subgroups}\label{S-Prelim}
Throughout we fix a finite field~$\F_q$. 
The field extension $\F_{q^n}$ is taken as our model for the $n$-dimensional vector space over~$\F_q$.
We denote by $\PGnq$ the \emph{$n$-dimensional projective geometry over} $\F_q$, that is, the set of all subspaces 
of $\Fqn$.
Accordingly, $\GL_n(q)$ denotes the group of all $\F_q$-automorphisms of~$\Fqn$.
Specific subgroups will play a crucial role.

%%%%%%%%%%%%%%
\begin{defi}\label{D-GLhat}
Let $\Fqs$ be a subfield of~$\Fqn$, thus $\Fqn$ is an $\Fqs$-vector space of dimension $n/s$. 
The \emph{extension-field subgroup of degree~$s$} is defined as 
\[
    \GLhat=\{\phi\in\GL_n(q)\mid \phi\text{ is $\Fqs$-linear}\}.
\]
The subgroup~$\GL_1(q^n)$ will be identified with the multiplicative group~$\Fqn^*$ via the map $a\mapsto m_a$, where~$m_a$ is the multiplication by~$a$, that is, 
\begin{equation}\label{e-ma}
     m_a:\Fqn\longrightarrow\Fqn,\ x\longmapsto ax.
\end{equation}
\end{defi}
%%%%%%%%%%%%%%

Clearly, $\GL_1(q^n)$ is a cyclic subgroup of order $q^n-1$.  Subgroups of $\GL_n(q)$ of this form are well known.

%%%%%%%%%%%%%%
\begin{defi}\label{D-Singer}
A cyclic subgroup of $\GL_n(q)$ of order $q^n-1$ is called a \emph{Singer subgroup}.
\end{defi}
%%%%%%%%%%%%%%%

%%%%%%%%%%%%%%%
\begin{lemma}[\mbox{\cite[Lem.~3]{Gill16}}]\label{L-ConjSinger}
Every Singer subgroup of $\GL_n(q)$ is conjugate to $\Fqn^*$.
\end{lemma}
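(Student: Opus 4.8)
The plan is to exhibit an explicit conjugating element. Let $G=\subgroup{\sigma}\leq\GL_n(q)$ be a Singer subgroup, so $\sigma$ is an $\F_q$-linear automorphism of $\Fqn$ of multiplicative order exactly $q^n-1$. The key observation is that $\GL_n(q)$, viewed as $\GL$ of the $\F_q$-vector space $\Fqn$, is the unit group of the $\F_q$-algebra $\End_{\F_q}(\Fqn)$, and the subalgebra $\F_q[\sigma]$ generated by $\sigma$ is a commutative $\F_q$-algebra of dimension at most $n$. First I would argue that $\F_q[\sigma]$ is in fact a field isomorphic to $\Fqn$: since $\sigma$ is a unit of order $q^n-1$, the multiplicative group of the finite commutative ring $\F_q[\sigma]$ contains a cyclic subgroup of order $q^n-1$, forcing $|\F_q[\sigma]|\geq q^n$; but $\dim_{\F_q}\F_q[\sigma]\leq n$ gives $|\F_q[\sigma]|\leq q^n$, so $\F_q[\sigma]$ has exactly $q^n$ elements and no zero divisors (a finite commutative ring whose unit group is that large can have none — alternatively, the minimal polynomial of $\sigma$ must then be irreducible of degree $n$), hence it is a copy of $\Fqn$ inside $\End_{\F_q}(\Fqn)$.

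Next I would use this field structure to build the conjugation. Through the embedding $\F_q[\sigma]\cong\Fqn$, the space $\Fqn$ becomes a vector space over the field $\F_q[\sigma]$; comparing $\F_q$-dimensions, it is a one-dimensional $\F_q[\sigma]$-vector space. Pick any nonzero vector $v\in\Fqn$; then $x\mapsto x\cdot v$ (the action of $\F_q[\sigma]$ on $\Fqn$) is an $\F_q$-linear bijection $\psi:\F_q[\sigma]\to\Fqn$ which intertwines the multiplication-by-$\sigma$ map on $\F_q[\sigma]$ with the operator $\sigma$ itself. Transporting the standard identification $\F_q[\sigma]\cong\Fqn$ through $\psi$, we obtain an $\F_q$-linear isomorphism $h\in\GL_n(q)$ such that $h^{-1}\sigma h$ is the multiplication map $m_a$ for the element $a\in\Fqn^*$ corresponding to $\sigma$; since $a$ has order $q^n-1$ it is primitive, so $h^{-1}Gh=\subgroup{m_a}=\Fqn^*$.

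The step I expect to require the most care is the identification of $\F_q[\sigma]$ with $\Fqn$ — specifically, ruling out the possibility that $\F_q[\sigma]$ is a proper subring or has zero divisors. The clean way is via the minimal polynomial $\mu(t)$ of $\sigma$: write $\mu=\prod p_i^{e_i}$ with the $p_i$ distinct irreducibles; then $\F_q[\sigma]\cong\prod\F_q[t]/(p_i^{e_i})$, and its unit group has exponent dividing $\lcm_i\big(q^{\deg p_i}-1\big)\cdot(\text{$p$-part})$. Requiring an element of order $q^n-1$ and comparing with $\deg\mu\leq n$ forces a single factor with $e_1=1$ and $\deg p_1=n$, i.e. $\mu$ irreducible of degree $n$, which is exactly the statement that $\F_q[\sigma]$ is a field of order $q^n$. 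Everything else is routine linear algebra, and the identification with the normalized form $m_a$ of \cref{D-GLhat} is immediate once the field structure is in place.
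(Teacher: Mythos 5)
Your proposal is correct. Note that the paper does not prove this lemma at all: it is quoted with a citation to \cite[Lem.~3]{Gill16}, so there is no internal argument to compare against, and you have supplied a self-contained proof where the authors supply a reference. Your argument is the standard one and all steps check out: an element $\sigma$ of multiplicative order $q^n-1$ generates a commutative subalgebra $\F_q[\sigma]\subseteq\text{End}_{\F_q}(\Fqn)$ of $\F_q$-dimension at most $n$ (degree bound on the minimal polynomial) whose unit group already contains $\subgroup{\sigma}$ of order $q^n-1$; the two bounds force $|\F_q[\sigma]|=q^n$ with every nonzero element a unit, so $\F_q[\sigma]$ is a field of order $q^n$ and, equivalently, the minimal polynomial of $\sigma$ is irreducible of degree $n$. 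The dimension count $\dim_{\F_q[\sigma]}(\Fqn)=1$ and the evaluation map $x\mapsto x(v)$ at a nonzero vector $v$ then give the intertwiner $h$ with $h^{-1}\sigma h=m_a$ for a primitive $a$, hence $h^{-1}\subgroup{\sigma}h=\Fqn^*$. This is in substance the argument in Gill's paper, and it is also the content of the paragraph the authors place immediately after the lemma, which explains that conjugation by $\rho\in\GL_n(q)$ corresponds to transporting the field structure; your proof runs that observation in reverse by constructing the field structure directly from the Singer generator.
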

%%%%%%%%%%%%%%

Let us briefly comment on this result.
Consider the extension-field subgroups $\GLhat$ from Definition~\ref{D-GLhat}, and let $\rho\in\GL_n(q)$.
Then the $\F_q$-linear isomorphism~$\rho$ leads to new field structures $\rho(\Fqn)$ and $\rho(\F_q)$ with identity $\rho(1)$ 
(they turn $\rho$ into a ring homomorphism).
The conjugate group $\rho\GLhat\rho^{-1}$ is now the group of all $\rho(\Fqs)$-linear automorphisms of the field $\rho(\Fqn)$, and in particular
the conjugate Singer subgroup $\rho\Fqn^*\rho^{-1}$ is the group of all $\rho(\Fqn)$-linear 
auto\-morphisms of the field $\rho(\Fqn)$.
Thus, conjugation of any of these subgroups corresponds to an isomorphic field structure.
For this reason we may and will restrict ourselves to the Singer subgroup $\Fqn^*$.

The following results will be needed later on and are well known.
The normalizer of a subgroup~$H$ in a group~$G$ is denoted by $N_G(H)$.

%%%%%%%%%%%%%%%%%%%%
\begin{theo}\label{T-SingerNorm}
Let $S=\subgroup{\tau}\leq \GL_n(q)$ be a Singer subgroup. 
\begin{alphalist}
\item The normalizer of~$S$ is $N_{\GL_n(q)}(S)=\subgroup{\tau,\sigma}\cong\Gal(\Fqn\mmid\F_q)\rtimes S$,  
        where $\sigma\in\GL_n(q)$ is the Frobenius homomorphism of order~$n$.  
        Moreover, $N_{\GL_n(q)}(S)$ is self-normalizing in $\GL_n(q)$.
\item The only Singer subgroup contained in  $N_{\GL_n(q)}(S)$ is~$S$.
\item Let $H\leq \GL_n(q)$ such that $S\leq H$. Then there is a divisor~$s$ of~$n$ such that $\GLhat\unlhd H$.
\item $N_{\GL_n(q)}(\GLhat)\cong\Gal(\F_{q^s}\mmid\F_q)\rtimes\GLhat$.
\end{alphalist}
\end{theo}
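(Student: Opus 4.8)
\emph{Part (a), after a reduction.} By \cref{L-ConjSinger} and the discussion following it, I may assume throughout that $S=\langle\tau\rangle$ is the standard Singer subgroup $\Fqn^*$ of multiplication maps $m_a$: conjugating only replaces the pair $(\Fqn,\F_q)$ by an isomorphic pair of field structures and carries every $\GLhat$ along. Writing $\sigma\colon x\mapsto x^q$ for the Frobenius, one checks that $\sigma\in\GL_n(q)$, that $\sigma$ has order exactly $n$, and that $\sigma m_a\sigma^{-1}=m_{a^q}$; hence $\langle\tau,\sigma\rangle\le N_{\GL_n(q)}(S)$, while $\sigma^i=m_a$ forces $a=1$ and $n\mid i$ (evaluate at~$1$), so $\langle\sigma\rangle\cap S=\{\mathrm{id}\}$. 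For the reverse inclusion, let $\phi\in N_{\GL_n(q)}(S)$ and write $\phi m_a\phi^{-1}=m_{f(a)}$ with $f\in\Aut(\Fqn^*)$; applying this to $\phi(1)$ gives $\phi(a)=f(a)\phi(1)$, so after composing $\phi$ with $m_{\phi(1)}^{-1}\in S$ we may assume $\phi(1)=1$. Then $\phi$ is additive and coincides with the multiplicative map $f$ on $\Fqn^*$, so $\phi$ is a field automorphism of $\Fqn$ fixing $\F_q$, i.e.\ a power of $\sigma$. Thus $N_{\GL_n(q)}(S)=S\langle\sigma\rangle=S\rtimes\langle\sigma\rangle\cong\Gal(\Fqn\mmid\F_q)\rtimes S$; the self-normalizing assertion I would defer to the end of~(b).

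\emph{Part (b), with the self-normalizing claim.} Let $S'\le N_{\GL_n(q)}(S)$ be a Singer subgroup; by~(a) it has a generator $g=m_a\sigma^i$. Put $d=\gcd(i,n)$ and $k_0=n/d$, so the image of $g$ in $N_{\GL_n(q)}(S)/S\cong\Z/n\Z$ has order $k_0$, whence $g^{k_0}=m_b$ for some $b\in\Fqn^*$ and $\mathrm{ord}(g)=k_0\cdot\mathrm{ord}(b)$. Since $m_b$ is a power of $g=m_a\sigma^i$ it commutes with $\sigma^i$, which forces $b^{q^i}=b$, i.e.\ $b\in\F_{q^d}$ and $\mathrm{ord}(b)\mid q^d-1$. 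Therefore $\mathrm{ord}(g)\le(n/d)(q^d-1)$, and since $q^n-1=(q^d-1)(1+q^d+\dots+q^{n-d})>(q^d-1)(n/d)$ for every proper divisor $d$ of $n$, the equality $\mathrm{ord}(g)=q^n-1$ is only possible when $d=n$, i.e.\ $\sigma^i=\mathrm{id}$ and $g\in S$; hence $S'=S$. The self-normalizing claim follows: if $g$ normalizes $N_{\GL_n(q)}(S)$, then $gSg^{-1}$ is a Singer subgroup contained in $N_{\GL_n(q)}(S)$, so $gSg^{-1}=S$ and $g\in N_{\GL_n(q)}(S)$.

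\emph{Part (d).} This mirrors~(a). Fix an $\Fqs$-basis $b_1,\dots,b_{\nh}$ of $\Fqn$ and set $\widetilde\gamma\colon\sum_ic_ib_i\mapsto\sum_ic_i^q b_i$; then $\widetilde\gamma\in\GL_n(q)$ has order $s$, conjugation by $\widetilde\gamma$ carries $\Fqs$-linear maps to $\Fqs$-linear maps, and $\widetilde\gamma^j$ is $\Fqs$-linear only when $s\mid j$, so $\langle\widetilde\gamma\rangle$ normalizes $\GLhat$ with $\langle\widetilde\gamma\rangle\cap\GLhat=\{\mathrm{id}\}$. Conversely, let $\rho\in N_{\GL_n(q)}(\GLhat)$. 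Conjugation by $\rho$ fixes $Z(\GLhat)=\{m_c\mid c\in\Fqs^*\}$ setwise, so $\rho m_c\rho^{-1}=m_{\gamma(c)}$ for a field automorphism $\gamma$ of $\Fqs$ fixing $\F_q$; since $\widetilde\gamma m_c\widetilde\gamma^{-1}=m_{c^q}$, replacing $\rho$ by $\widetilde\gamma^{-j}\rho$ for a suitable $j$ makes $\rho$ commute with every $m_c$, $c\in\Fqs$, i.e.\ $\Fqs$-linear, i.e.\ $\rho\in\GLhat$. Hence $N_{\GL_n(q)}(\GLhat)=\GLhat\rtimes\langle\widetilde\gamma\rangle\cong\Gal(\Fqs\mmid\F_q)\rtimes\GLhat$.

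\emph{Part (c), the main obstacle.} The plan is to reduce to an absolutely irreducible situation and then cite a classical theorem. Observe first that the $\F_q$-subalgebras of $\mathrm{End}_{\F_q}(\Fqn)$ containing $\Fqn$ (the multiplication maps) are precisely the algebras $\widehat A_t$ of $\F_{q^t}$-linear endomorphisms, $t\mid n$: writing $\mathrm{End}_{\F_q}(\Fqn)=\bigoplus_{j=0}^{n-1}\Fqn\sigma^j$ as a left $\Fqn$-vector space, the summands $\Fqn\sigma^j$ are pairwise non-isomorphic simple $(\Fqn,\Fqn)$-bimodules, so any $\Fqn$-containing subalgebra is the sum of the $\Fqn\sigma^j$ over a subgroup of $\Z/n\Z$, i.e.\ one of the $\widehat A_t$. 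Since $\F_q[H]\supseteq\F_q[S]=\Fqn$, this gives a divisor $s\mid n$ with $H\le\GLhat$ and $\F_q[H]=\widehat A_s$ the full $\Fqs$-matrix algebra; equivalently $H$ is absolutely irreducible over $\Fqs$ and contains the Singer cycle $S$ of $\GL_{\nh}(\qh)$. Now I would invoke Kantor's classification of irreducible subgroups of $\GL_m(Q)$ containing a Singer cycle: it yields a divisor $s'$ of $\nh$ such that, with $m'=n/(ss')$ and $Q'=q^{ss'}$, one has $\mathrm{SL}_{m'}(Q')\unlhd H$; the determinant being a surjective norm on $S$ then gives $\GL_{m'}(Q')\le H$, and since $\mathrm{SL}_{m'}(Q')$ acts absolutely irreducibly its centralizer in $\GL_n(q)$ is the scalar group $\F_{Q'}^*$, which $H$ normalizes, so $H$ also normalizes the $\F_{Q'}$-linear group $\GL_{m'}(Q')$; thus $\GL_{m'}(Q')=\GL_{n/(ss')}(q^{ss'})\unlhd H$, and $ss'$ is the required divisor. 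The genuinely hard ingredient is Kantor's theorem itself; making this self-contained would amount to reproving it, which is why the proposition is quoted as known.
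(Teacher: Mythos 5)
Your proposal is correct, but it is worth noting that the paper does not actually prove \cref{T-SingerNorm}: all four parts are quoted from the literature (Huppert for~(a), Cossidente--Resmini for~(b), Kantor and Gill for~(c), Gill for~(d)). You instead give complete, elementary proofs of (a), (b) and (d), and your arguments are sound: the normalizer computation in~(a) via normalizing $\phi$ so that $\phi(1)=1$ and observing that an additive map agreeing with a multiplicative automorphism on $\Fqn^*$ is a field automorphism; the order bound $\mathrm{ord}(m_a\sigma^i)\le (n/d)(q^d-1)<q^n-1$ for $d=\gcd(i,n)<n$ in~(b), which also yields self-normalization cleanly; and the center-of-$\GLhat$ argument in~(d). (In~(d) you should spell out why the induced map $\gamma$ on $\Fqs^*$ is additive -- use $m_c+m_{c'}=m_{c+c'}$ inside $\mathrm{End}_{\F_q}(\Fqn)$ and the linearity of conjugation -- but this is routine.) For~(c) your reduction via the bimodule decomposition $\mathrm{End}_{\F_q}(\Fqn)=\bigoplus_j\Fqn\sigma^j$, showing that $\F_q[H]$ must be $\mathrm{End}_{\Fqs}(\Fqn)$ for some $s\mid n$ and hence that $H$ is an absolutely irreducible subgroup of $\GL_{\nh}(\qh)$ containing a Singer cycle, is a genuine and correct addition; but the remaining step is exactly Kantor's theorem, i.e.\ the same deep classification result the paper cites, so on this point your route and the paper's coincide. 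In short: where the paper cites, you prove (except for the unavoidable appeal to Kantor), which buys self-containedness at the cost of length; the paper's citations buy brevity and, for~(c), are indispensable either way.
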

%%%%%%%%%%%%%%%%%%

\begin{proof}
(a) is in \cite[Ch.~II, Satz~7.3(a) and its proof]{Hup67}, (b) in \cite[Prop.~2.5]{CoRe04},
(c) is in \cite[p.~232]{Kan80} and \cite[Thm.~7]{Gill16}, and (d) is in \cite[Sec.~2]{Gill16}.
\end{proof}

The following is immediate.

%%%%%%%%%%%%%%
\begin{cor}\label{C-NormMax}
Let $S\leq \GL_n(q)$ be a Singer subgroup. If~$n$ is an odd prime or $n=2$ and $q\geq3$, then $N_{\GL_n(q)}(S)$ is a maximal subgroup of $\GL_n(q)$.
\end{cor}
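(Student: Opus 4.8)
The statement to prove is Corollary~\ref{C-NormMax}: that $N:=N_{\GL_n(q)}(S)$ is a maximal subgroup of $\GL_n(q)$ when $n$ is an odd prime, or when $n=2$ and $q\geq 3$. The plan is to invoke parts (a)--(c) of \cref{T-SingerNorm}. Suppose $N\leq H\leq \GL_n(q)$ with $H\neq \GL_n(q)$; we must show $H=N$. Since $S\leq N\leq H$, part~(c) gives a divisor $s$ of $n$ with $\GLhat\unlhd H$. Because $n$ is prime, the only divisors are $s=1$ and $s=n$. If $s=n$, then $\GLhat=\GL_1(q^n)=\Fqn^*=S$, so $S\unlhd H$, hence $H\leq N_{\GL_n(q)}(S)=N$, and combined with $N\leq H$ we get $H=N$, as desired. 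So it remains to rule out $s=1$: if $s=1$ then $\GLhat=\GL_n(q)$, forcing $H=\GL_n(q)$, contradicting $H\neq\GL_n(q)$. Hence the only possibility is $s=n$, giving $H=N$ and proving maximality.

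Wait — the above handles $s=1$ only by noting it contradicts $H\neq \GL_n(q)$, which is fine: we assumed $H$ proper, so $s=1$ cannot occur, and $s=n$ is forced. That closes the argument cleanly for $n$ an odd prime. (The oddness of the prime, as opposed to $n=2$, is what guarantees $N\neq \GL_n(q)$ itself, i.e.\ that $N$ is a \emph{proper} subgroup and hence genuinely maximal rather than vacuously so; for $n=2$ one needs $q\geq 3$ for the same reason, since $|N|=2(q^2-1)$ must be less than $|\GL_2(q)|=(q^2-1)(q^2-q)$, which holds precisely when $q^2-q>2$, i.e.\ $q\geq 3$. For $q=2$, $n=2$ one checks $\GL_2(2)\cong S_3$ has order $6=|N|$, so $N$ is the whole group.) So a complete proof should also remark that under the stated hypotheses $N\subsetneq \GL_n(q)$, which is an easy order comparison.

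For the $n=2$, $q\geq 3$ case the same skeleton applies: divisors of $2$ are $1$ and $2$; $s=1$ gives $\GLhat=\GL_2(q)=H$, excluded since $H$ is proper; $s=2$ gives $\GLhat = \Fqn^* = S \unlhd H$, so $H\leq N$, hence $H=N$. Then one adds the order check $|N| < |\GL_2(q)|$ for $q\geq 3$ to conclude $N$ is a proper subgroup, hence maximal.

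**Main obstacle.** There is essentially no obstacle here: everything is a direct corollary of \cref{T-SingerNorm}(a)--(c) together with the observation that a prime (or $2$) has no nontrivial divisors. The only point requiring a line of care is verifying that $N$ is a \emph{proper} subgroup of $\GL_n(q)$ under the stated hypotheses — which is why the hypotheses ``$n$ odd prime'' versus ``$n=2$, $q\geq 3$'' are stated separately — so I would include the short order estimate $|N_{\GL_n(q)}(S)| = n(q^n-1) < \prod_{i=0}^{n-1}(q^n-q^i) = |\GL_n(q)|$, valid in exactly these cases, and otherwise let parts~(a) and~(c) of the theorem do all the work.
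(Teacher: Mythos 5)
Your proof is correct and is exactly the argument the paper has in mind when it calls the corollary ``immediate'' from \cref{T-SingerNorm}: parts (a) and (c) plus the fact that a prime $n$ admits only the divisors $s=1$ and $s=n$, together with the order comparison showing $N_{\GL_n(q)}(S)$ is proper precisely outside the case $(n,q)=(2,2)$. Nothing is missing.
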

%%%%%%%%%%%%%%%

All of the above can, of course, be translated  into matrix groups.
In order to do so, we consider the following  isomorphism. 
Fix a primitive element $\omega$ of~$\Fqn$, and let $f=x^n-\sum_{i=0}^{n-1}f_ix^i\in\F_q[x]$ 
be its minimal polynomial over~$\F_q$.
Let 
\begin{equation}\label{e-CompMat}
   M_f=\begin{pmatrix} &1& &  \\ & &\ddots &  \\ & & & 1\\ f_0&f_1&\cdots&f_{n-1}\end{pmatrix}\in\F_q^{n\times n}
\end{equation}
be the companion matrix of~$f$. 
Then $1,\omega,\ldots,\omega^{n-1}$ form a basis of $\Fqn$ over~$\F_q$, and we 
have the isomorphism
\begin{equation}\label{e-FqnIso}
   \Phi:\Fqn\longrightarrow\F_q^n,\quad \sum_{i=0}^{n-1}a_i\omega^i\longmapsto (a_0,\ldots,a_{n-1}).
\end{equation}
It satisfies
\begin{equation}\label{e-omegaMat}
   \Phi(c\, \omega^i)=\Phi(c) M_f^i\ \text{ for all $c\in\Fqn$ and all }i\in\N_0.
\end{equation}
In other words, $M_f^i$ is the matrix representation of the linear map $m_{\omega^i}$ with respect to the basis $1,\omega,\ldots,\omega^{n-1}$.

%%%%%%%%%%%%%%%%%
\begin{rem}\label{R-GLhat}
Denote by $\GL_n(\F_q)$ the general linear group of invertible $n\times n$-matrices over~$\F_q$ and
identify a matrix $A\in\GL_n(\F_q)$ in the usual way with the isomorphism $\F_q^n\longrightarrow\F_q^n,\ v\longmapsto vA$.
Then we have the group isomorphism
\[
      \GL_n(\F_q)\longrightarrow \GL_n(q),\ A\longmapsto \phi_A=\Phi^{-1}\circ A\circ\Phi,
\]
which satisfies
\begin{equation}\label{e-PhiA}
    \phi_A(a)=\Phi^{-1}\big(\Phi(a)A\big) \text{ for all }a\in\Fqn.
\end{equation}
Let now $s$ be a divisor of~$n$ and set $N=(q^n-1)/(q^s-1)$. Thus~$\omega^N$ is a primitive element of $\Fqs$.
Then for any $A\in\GL_n(\F_q)$
\[
   \phi_A\text{ is $\Fqs$-linear}\Longleftrightarrow AM_f^N=M_f^NA.
\]
As a consequence, the subgroup $\{A\in\GL_n(\F_q)\mid AM_f^N=M_f^NA\}$ may be identified with the extension-field subgroup
$\GLhat$.
Consider the special case $s=1$.
From~\cite[Thm.~2.9]{Hes70} it is known that~$\subgroup{M_f}$ is self-centralizing, i.e., $\{A\in\GL_n(\F_q)\mid AM_f=M_fA\}=\subgroup{M_f}$
(see also \cite[Cor.~2 and Cor.~3]{GoCa93}). Since $\GL_1(q^n)\cong\F_{q^n}^*$, this simply reflects the well-known isomorphism
$\F_{q^n}^*\cong\subgroup{M_f}$ (and $\Fqn\cong\F_q[M_f]$).
\end{rem}
%%%%%%%%%%%%

%%%%%%%%%%%%%%%%%%%
\section{Orbit Codes and Linear Isometries}\label{S-OrbCodes}
In this section we turn to subspace codes and, more specifically, orbit codes.
We endow the projective geometry $\PGnq$ with the \emph{subspace distance}
\begin{equation}\label{e-ds}
	\ds(\cV,\cW):=\dim\cV+\dim\cW-2\dim(\cV\cap\cW)
\end{equation}
for $\cV,\cW\in\PGnq$.
The subspace distance is a metric on $\PGnq$; see \cite[Lem.~1]{KoKsch08}.
A subset of $\PGnq$ with at least two elements is called a \emph{subspace code (of block length~$n$)}.
The \emph{subspace distance} of a code~$\cC$ is, as usual, 
\begin{equation}\label{e-distCC}
     \dd(\cC):=\min\{\ds(\cV,\cW)\mid\cV,\,\cW\in\cC,\,\cV\neq \cW\}.
\end{equation}

The subspace codes defined next are \emph{constant-dimension codes}, that is, they are contained in some $\Gkn$, where $\Gkn$
denotes the Grassmannian consisting of the $k$-dimensional subspaces of $\Fqn$.

%%%%%%%%%%%%%%%%%
\begin{defi}\label{GOrbit}
Let $G\leq\GL_n(q)$ be a subgroup and let $\cU\in\Gkn$. 
Then the $G$-orbit of $\cU$, defined as $\Orb_G(\cU)=\{\phi(\cU) \mid \phi\in G\}$, 
is called an \emph{orbit code}.
For a Singer subgroup~$S$, the orbit $\Orb_{S}(\cU)$ is called a \emph{cyclic orbit code}.
\end{defi}
%%%%%%%%%%%%%%	

Two classes of orbit codes will be in the focus of this paper: 
orbits under the Singer subgroup~$\Fqn^*$ and orbits under the normalizer of~$\Fqn^*$.
They take the following explicit form. 
Let~$\omega$ be a primitive element of~$\Fqn$. 
Furthermore, for $\cU\in\Gkn$ define $\cU^{[i]}:=\{u^{[i]}\mid u\in\cU\}$, where we use the standard notation $[i]:=q^i$.
Consider the Singer subgroup $\Fqn^*$ and its normalizer $N:=N_{\GL_n(q)}(\Fqn^*)\cong\Gal(\Fqn\mmid\F_q)\rtimes \Fqn^*$. Then
\begin{equation}\label{e-S&N-orbits}
    \Orb_{\Fqn^*}(\cU)=\{\omega^i\cU\mid i=0,\ldots,q^n-2\}\ \text{ and }
    \Orb_N(\cU)=\bigcup_{i=0}^{n-1}\Orb_{\Fqn^*}(\cU^{[i]}).
\end{equation}

For later reference we record the following simple fact about the sizes of these orbits.
%%%%%%%%%%%%%%%
\begin{rem}[\mbox{\cite[Cor.~3.13]{GLMT15}}]\label{R-SingerOrbSize}
Let $\cU\in\Gkn$. Suppose $\F_{q^t}$ is the largest subfield of~$\Fqn$ such that~$\cU$ is closed under multiplication by scalars from~$\F_{q^t}$ (i.e., $\cU$ is an $\F_{q^t}$-vector space with respect to the ordinary multiplication in $\Fqn$). Then 
\[
   |\Orb_{\Fqn^*}(\cU)|=\frac{q^n-1}{q^t-1}.
\]
As a consequence, $ |\Orb_{N}(\cU)|\leq n(q^n-1)/(q^t-1)$ for the normalizer $N:=N_{\GL_n(q)}(\Fqn^*)$.
\end{rem}
%%%%%%%%%%%%%%%%

Let us return to general $G$-orbits. 
In matrix notation, they take the following form. This is the setting in which they have been studied in~\cite{TMBR13}.

%%%%%%%%%%%%%%%%%
\begin{rem}\label{R-OrbCodeMat}
Let $\cU\in\Gkn$ and $G\leq\GL_n(q)$. 
Define $\tilde{G}:=\{\Phi\circ \phi\circ\Phi^{-1}\mid \phi\in G\}$ and 
$\tilde{\cU}=\Phi(\cU)$, where~$\Phi$ is the isomorphism  from \eqref{e-FqnIso}.
Then $\tilde{G}\leq\GL_n(\F_q)$ and $\cU\subseteq\F_q^n$, and~\eqref{e-PhiA} shows that 
\[
           \Phi(\Orb_G(\cU))=\Orb_{\tilde{G}}(\tilde{\cU}):=\{\tilde{\cU} A\mid A\in\tilde{G}\}.
\]
\end{rem}
%%%%%%%%%%%%%%%%%%%%%

In this paper we want to study linear isometries between orbit codes. 

%%%%%%%%%%%%%
\begin{defi}\label{D-IsoPG}
An \emph{isometry} on $\PGnq$ is a distance-preserving map $\varphi: \PGnq \to \PGnq$, thus,
$\ds(\cU,\cV)=\ds(\varphi(\cU),\varphi(\cV))$ for all $\cU,\cV \in \PGnq$.
\end{defi}
%%%%%%%%%%%%%

It is clear that an isometry is bijective. 
In \cite[2.3--2.8]{Trau13} it has been shown that the dimension-preserving isometries are 
precisely the elements of the projective general semi-linear group $\GL_n(q)/Z \rtimes \Aut(\F_q)$,
where~$Z$  is the center of~$\GL_n(q)$, that is, 
$Z=\{m_a\mid a\in\F_q^*\}$ with~$m_a$ as in~\eqref{e-ma}.
Thanks to the  Fundamental Theorem of Projective Geometry, these are exactly the automorphisms 
(i.e., incidence-preserving bijections) of $\PGnq$.
In this paper we will only consider linear isometries, that is, maps in the projective linear group $\PGL_n(q)=\GL_n(q)/Z$.
Note that a map $\phi\in\GL_n(q)$ is in~$Z$ if and only if it fixes every $\F_q$-subspace of $\Fqn$, which is why we may factor out~$Z$.
For ease of notation, we will simply consider linear isometries in $\GL_n(q)$.
This will have no impact on our considerations (one can just factor out Z in all groups occurring below).

%%%%%%%%%%%%%%%%
\begin{defi}\label{D-LinIso}
Let $G\leq\GL_n(q)$ and $\cU_1,\,\cU_2\in\Gkn$.
Consider the $G$-orbits $\cC_i=\Orb_G(\cU_i)$ for $i=1,2$.
Then $\cC_1$ and $\cC_2$ are called \emph{(linearly) isometric} if there exists an isomorphism $\psi\in\GL_n(q)$ such that
$\psi(\cC_1)=\cC_2$, where $\psi(\cC_1):=\{\psi(\cV)\mid \cV\in\cC_1\}$. 
In this case~$\psi$ is called a \emph{(linear) isometry} between~$\cC_1$ and $\cC_2$.
In the special case, where $G=S$ is a Singer subgroup and $\psi(\cC_1)=\cC_2$ for some $\psi\in N_{\GL_n(q)}(S)$,
we call the cyclic orbit codes $\Orb_S(\cU_1)$ and $\Orb_S(\cU_2)$ \emph{Frobenius-isometric} and~$\psi$ a \emph{Frobenius-isometry}.
\end{defi}
%%%%%%%%%%%%%%%%

The terminology Frobenius-isometry is motivated by the fact that, thanks to \cref{T-SingerNorm}(a),
$N_{\GL_n(q)}(S)\cong\Gal(\Fqn\mmid\F_q)\rtimes S$.

Later in Section~\ref{S-Isom} we will see that -- just like for block codes with the Hamming metric -- not every weight-preserving 
bijection between cyclic orbit codes is an isometry. 
Hence not every such map extends to an isometry on $\PGnq$.

The following is easy to see. 

%%%%%%%%%%%%%%
\begin{theo}[\mbox{see also \cite[Thm.~10]{TMBR13}}]\label{T-isoOrbCodes}
 Let $G \leq \GL_n(q),\;\psi\in \GL_n(q)$, and $\cU \in \Gkn$.
\begin{alphalist}
\item Set $G'=\psi G\psi^{-1}$ and 
         $\cU'=\psi(\cU)$. Then the orbit codes $\cC = \Orb_G(\cU)$ and $\cC'=\Orb_{G'}(\cU')$ are linearly isometric 
         with $\cC'=\psi(\cC)$.
\item Let $\cC = \Orb_G(\cU)$ and $\cC'=\psi(\cC)$. Then $\cC'=\Orb_{\psi G\psi^{-1}}(\cU')$ with $\cU'=\psi(\cU)$. 
        As a consequence, if $\psi\in N_{\GL_n}(G)$, then $\cC$ and $\cC'$ are isometric $G$-orbit codes.
\end{alphalist}
\end{theo}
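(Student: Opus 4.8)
The statement is purely formal, and the plan is to derive everything from a single identity describing the $\psi$-image of a $G$-orbit. The key computation I would carry out is
\[
   \psi\big(\Orb_G(\cU)\big)=\{(\psi\circ\phi)(\cU)\mid\phi\in G\}
   =\{(\psi\phi\psi^{-1})\big(\psi(\cU)\big)\mid\phi\in G\}=\Orb_{\psi G\psi^{-1}}\big(\psi(\cU)\big),
\]
where the middle step just inserts $\psi^{-1}\circ\psi$ in front of $\cU$, and the last equality uses that $\phi\mapsto\psi\phi\psi^{-1}$ is a bijection from $G$ onto the conjugate subgroup $\psi G\psi^{-1}\leq\GL_n(q)$, together with $\psi(\cU)\in\Gkn$.

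Reading this identity with the notation of part~(a), namely $G'=\psi G\psi^{-1}$ and $\cU'=\psi(\cU)$, yields $\cC'=\Orb_{G'}(\cU')=\psi(\cC)$; since $\psi\in\GL_n(q)$, this is precisely the statement that $\cC$ and $\cC'$ are linearly isometric in the sense of \cref{D-LinIso}. If one wants to see that they are genuinely isometric as metric spaces, it suffices to note that $\psi$, being an $\F_q$-automorphism of $\Fqn$, preserves intersections and dimensions, hence preserves~$\ds$ by~\eqref{e-ds}. For part~(b) the same identity, now read with $\cC'=\psi(\cC)$ as the hypothesis, gives $\cC'=\Orb_{\psi G\psi^{-1}}(\cU')$ with $\cU'=\psi(\cU)$; and if moreover $\psi\in N_{\GL_n(q)}(G)$, then $\psi G\psi^{-1}=G$ by definition of the normalizer, so $\cC'=\Orb_G(\cU')$ is again a $G$-orbit code, linearly isometric to $\cC$ by part~(a).

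I do not anticipate any genuine obstacle here: the only non-bookkeeping ingredient is the remark that elements of $\GL_n(q)$ act as isometries on $\PGnq$, which is already recorded in the text preceding \cref{D-LinIso}, and the conjugation identity above is immediate from the definitions. The result is, in effect, a restatement of \cite[Thm.~10]{TMBR13} in the notation of this paper.
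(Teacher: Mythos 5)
Your proof is correct, and since the paper offers no proof of this theorem (it is declared ``easy to see''), your conjugation identity $\psi(\Orb_G(\cU))=\Orb_{\psi G\psi^{-1}}(\psi(\cU))$ is exactly the standard argument the authors are implicitly relying on. Both parts then follow by reading this identity in the two directions, as you do.
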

%%%%%%%%%%%%%%%%

In order to study isometries between cyclic orbit codes, we need to understand their automorphism groups. 
This is the subject of the next section. 
For these considerations it will suffice to restrict to orbit codes generated by subspaces $\cU\in\Gkn$, where $k\leq n/2$. 
In order to see this, we need to briefly introduce the dual code.
Let~$\omega$ be a primitive element of~$\Fqn$ and choose the symmetric, non-degenerate, $\F_q$-bilinear form 
$\inner{\cdot}{\cdot}$ on $\Fqn$ defined via $\inner{\omega^i}{\omega^j}=\delta_{i,j}$ for all $i,j=0,\ldots,n-1$ (this is simply the standard dot 
product on $\F_q^n$ under the isomorphism in~\eqref{e-FqnIso}).
Define the dual of a subspace~$\cW\leq\Fqn$ in
the usual way as $\cW^\perp=\{v\in\Fqn\mid \inner{v}{w}=0\text{ for all }w\in\cW\}$.
Clearly, $\dim\cW^\perp=n-\dim\cW$.
The \emph{dual} of a subspace code $\cC\subseteq\Fqn$ is simply defined as $\cC^\perp:=\{\cW^\perp\mid \cW\in\cC\}$.
We can now describe the dual of an orbit code.
For an $\F_q$-linear map~$\phi:\Fqn\longrightarrow\Fqn$ denote by~$\phi^\dagger$ its adjoint map, that is, the unique linear map satisfying
$\inner{\phi(x)}{y}=\inner{x}{\phi^\dagger(y)}$ for all $x,y\in\Fqn$.
Clearly $\phi^\dagger\in\GL_n(q)$ for any $\phi\in\GL_n(q)$.

%%%%%%%%%%%%%%%
\begin{rem}\label{R-DualOrbit}
Suppose $\cC=\Orb_G(\cU)$ for some subgroup~$G\leq\GL_n(q)$.
Then $\cC^\perp=\Orb_{G^\dagger}(\cU^\perp)$, where $G^\dagger=\{\phi^\dagger\mid \phi\in G\}$, which is clearly a subgroup of $\GL_n(q)$. 
This follows immediately from $\phi(\cU)^\perp=(\phi^\dagger)^{-1}(\cU^\perp)$.
We call $G^\dagger$ the \emph{adjoint group of} $G$.
\end{rem}
%%%%%%%%%%%%%%%%

In the setting of \cref{R-OrbCodeMat}, where subgroups of the matrix group $\GL_n(\F_q)$ act on subspaces in $\F_q^n$, this fact
 also appears in \cite[Thm.~18]{TMBR13}.
 
The following surprising result tells us that the adjoint groups of all groups of interest in this paper are conjugate to the 
group itself, and even more, we may choose the same conjugation matrix for all these groups.

%%%%%%%%%%%%%%%%%%%
\begin{theo}\label{T-AdjGroups}
There exists a map $\rho\in\GL_n(q)$ such that 
\[
  \rho^{-1}G^\dagger\rho=G\ \text{ for all } G\in\{\Fqn^*,\Gal(\Fqn\mmid\F_q)\}\cup\{\GLhat\mid s\text{ divisor of }n\}.
\]
\end{theo}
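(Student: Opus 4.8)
The plan is to exhibit a single $\F_q$-linear map $\rho$ that simultaneously conjugates the adjoint of every group on the list back to the group itself, and the natural candidate is the map that intertwines multiplication by $\omega$ with its adjoint. Concretely, in the matrix picture of \cref{R-GLhat} and the bilinear form $\inner{\cdot}{\cdot}$ fixed before \cref{R-DualOrbit}, the Singer generator $m_\omega$ is represented by the companion matrix $M_f$, and its adjoint $m_\omega^\dagger$ is represented by $M_f\T$. So I would first look for a matrix $R$ with $R^{-1}M_f\T R=M_f$; since $M_f$ is a companion matrix (a nonderogatory matrix), $M_f$ and $M_f\T$ are similar, and in fact there is a symmetric invertible $R$ achieving this — the classical fact that a companion matrix is conjugate to its transpose via a symmetric matrix (a Hankel matrix built from the coefficients of $f$). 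Let $\rho=\phi_R\in\GL_n(q)$ be the corresponding field automorphism. Then $\rho^{-1}(\Fqn^*)^\dagger\rho=\Fqn^*$ is immediate because $(\Fqn^*)^\dagger=\subgroup{m_\omega^\dagger}$ is the whole adjoint Singer subgroup and $\rho$ conjugates its generator correctly.

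Next I would handle the extension-field subgroups uniformly. By \cref{R-GLhat}, $\GLhat$ corresponds to the centralizer of $M_f^N$ in $\GL_n(\F_q)$, where $N=(q^n-1)/(q^s-1)$. Taking adjoints (transposes) turns this into the centralizer of $(M_f^N)\T=(M_f\T)^N$, and conjugating by $R$ — which sends $M_f\T$ to $M_f$ — sends this centralizer back to the centralizer of $M_f^N$, i.e.\ back to $\GLhat$. Thus a single property of $\rho$, namely $\rho^{-1}m_\omega^\dagger\rho=m_\omega$ (equivalently $\rho^{-1}x^\dagger\rho=$ the corresponding power for each $x\in\Fqn^*$, so $\rho$ conjugates the adjoint multiplication maps back to multiplication maps in a way compatible with the field structure), forces the correct behaviour on every $\GLhat$ at once, because each is built purely from the single element $m_\omega$ by taking a power and then the centralizer. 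This is the conceptual heart: all the groups on the list are determined by $\subgroup{m_\omega}$, so controlling $\rho$ on that one cyclic group controls it everywhere.

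It remains to treat the Frobenius group $\Gal(\Fqn\mmid\F_q)=\subgroup{\sigma}$ with $\sigma:x\mapsto x^q$. Here I would compute the adjoint of $\sigma$ with respect to $\inner{\cdot}{\cdot}$: one checks $\sigma^\dagger=\sigma^{-1}$ (the $q$-power Frobenius is, up to the chosen form, essentially orthogonal, or at least $\sigma^\dagger\in\subgroup{\sigma}$), so $\Gal(\Fqn\mmid\F_q)^\dagger=\Gal(\Fqn\mmid\F_q)$ already, and then I need $\rho^{-1}\Gal(\Fqn\mmid\F_q)\rho\subseteq\Gal(\Fqn\mmid\F_q)$ — that is, $\rho$ must normalize the Frobenius group. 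Since $\rho$ is built to intertwine $M_f$ and $M_f\T$ and $\sigma$ acts on $\subgroup{M_f}$ by $M_f\mapsto M_f^q$, while $M_f\T\mapsto (M_f\T)^q$, the commutation of these two operations should give $\rho\in N_{\GL_n(q)}(\subgroup{\sigma})$; concretely, I expect $\rho$ itself to lie in $N_{\GL_n(q)}(\Fqn^*)=\Gal\rtimes\Fqn^*$, which by \cref{T-SingerNorm}(a) is self-normalizing and contains $\subgroup{\sigma}$ with index-$n$ normal complement, so normalizing $\Fqn^*$ plus fixing the right coset structure gives normalization of $\subgroup{\sigma}$.

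The main obstacle I anticipate is twofold: first, pinning down that the symmetric conjugating matrix $R$ (the Hankel/Bezoutian-type matrix associated with $f$) is actually invertible over $\F_q$ — this is where the specific structure of the minimal polynomial of a primitive element and the non-degeneracy of $\inner{\cdot}{\cdot}$ must be used, and one must be careful in characteristic $2$; and second, verifying the Frobenius case, since $\sigma$ is not $\F_q$-scalar-linear in the obvious way and its adjoint computation relative to the chosen basis-dual form needs an explicit check that $\inner{x^q}{y}=\inner{x}{y^{q^{-1}}}$ (or the analogous identity), which uses that the form is the coefficient pairing in the basis $1,\omega,\dots,\omega^{n-1}$ rather than a trace form. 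Once $R$ is shown invertible and the Frobenius adjoint identity is in hand, the rest is the clean structural argument above that a single $\rho$ works for the whole family.
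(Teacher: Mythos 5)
Your treatment of the Singer subgroup and of the extension-field subgroups is essentially sound. In particular, your centralizer argument for $\GLhat$ --- that $\GLhat$ is the centralizer of $m_{\omega^N}$ with $N=(q^n-1)/(q^s-1)$, that taking adjoints carries this to the centralizer of $(m_{\omega}^\dagger)^N$, and that conjugation by $\rho$ then returns it to $\GLhat$ --- is a cleaner, more structural route than the paper's Step~3, which verifies the $\Fqs$-linearity of $\rho^{-1}\circ\gamma^\dagger\circ\rho$ by a direct computation with the bilinear form. The invertibility worry you raise about the symmetric Hankel conjugator is also resolvable: the paper shows that \emph{every} nonzero choice of $\rho(1)$ yields an invertible intertwiner, using only the relation $\rho^{-1}\circ m_\omega^\dagger\circ\rho=m_\omega$ together with $\rho(1)\neq0$, so no separate determinant computation is needed.

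The genuine gap is the Galois case, and the route you sketch for it fails. First, $\sigma^\dagger=\sigma^{-1}$ (or even $\sigma^\dagger\in\subgroup{\sigma}$) is not justified and is false in general for the coefficient pairing $\inner{\omega^i}{\omega^j}=\delta_{i,j}$: the matrix of $\sigma$ in the basis $1,\omega,\ldots,\omega^{n-1}$ has no reason to be orthogonal, so $\Gal(\Fqn\mmid\F_q)^\dagger$ is in general a different, merely conjugate, copy of the Galois group. Second, your fallback claim that $\rho\in N_{\GL_n(q)}(\Fqn^*)$ is also false: the defining property of $\rho$ reads $\rho\,\Fqn^*\,\rho^{-1}=(\Fqn^*)^\dagger$, i.e.\ $\rho$ conjugates $\Fqn^*$ onto a generally \emph{different} Singer subgroup, so it does not normalize $\Fqn^*$. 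The missing idea is a selection step. The set of intertwiners $\rho$ with $\rho^{-1}\circ m_\omega^\dagger\circ\rho=m_\omega$ is parametrized by the nonzero vector $\rho(1)\in\Fqn$ (via a linear recurrence with characteristic polynomial $f$), and only the choices with $\rho(1)$ in the line $(\im\xi)^\perp$, where $\xi(z)=z^q-z$ and $\ker\xi=\F_q$, also handle the Frobenius: this choice forces $\inner{\rho(1)}{z^q}=\inner{\rho(1)}{z}$ for all $z$, and combined with the intertwining identity $\inner{\rho(\omega y)}{z}=\inner{\rho(y)}{\omega z}$ it yields $\rho^{-1}\circ\sigma^\dagger\circ\rho=\sigma^{-1}$. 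A generic intertwiner --- in particular the ``standard'' symmetric Hankel matrix you propose --- need not satisfy this, so without singling out this particular $\rho(1)$ the single map $\rho$ asserted by the theorem is not produced.
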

%%%%%%%%%%%%%%%%%%

The proof, which is not needed for the rest of this paper, is postponed to an appendix. 
Returning to our orbit codes, \cref{T-AdjGroups} along with \cref{R-DualOrbit} tells us that the dual of a $G$-orbit, where~$G$ is any of the 
groups above, is again an orbit of the same type, but with respect to an isomorphic field structure; see the paragraph following \cref{L-ConjSinger}.
The isomorphic field structure does not depend on the group.

All of this tells us that it suffices to study isometries (and automorphisms) for orbit codes generated by subspaces  of dimension 
at most~$n/2$. 
Hence from now on we only consider subspaces $\cU\in\Gkn$, where $k\leq n/2$.

%%%%%%%%%%%%%%%%
\section{The Automorphism Groups of Singer Orbits}\label{S-AutGroupSinger}

In this section we will derive information about the automorphism groups of cyclic orbit codes. 
This will be sufficient to discuss isometries between cyclic orbit codes later in this paper.
In accordance with earlier notation we will consider automorphisms in $\GL_n(q)$ rather than $\PGL_n(q)=\GL_n(q)/Z$.

%%%%%%%%%%%%%%%%
\begin{defi}\label{D-AutGroup}
Let $\cC\subseteq\text{PG}(n-1,q)$ be a subspace code.
The \emph{automorphism group} of~$\cC$ is defined as the group of linear isometries that fix~$\cC$, that is,
$\Aut(\cC):=\{\psi\in\GL_n(q)\mid \psi(\cC)=\cC\}$.
Any subgroup of $\Aut(\cC)$ is called \emph{a group of automorphisms} of~$\cC$.
\end{defi}
%%%%%%%%%%%%%%%

Clearly, for any $G\leq\GL_n(q)$ and any orbit code $\cC=\Orb_G(\cU)$, the group~$G$ is a group of automorphisms of~$\cC$.
Furthermore, if $H\leq\GL_n(q)$  then
\begin{equation}\label{e-GHOrb}
    H\leq \Aut(\Orb_G(\cU))\Longleftrightarrow \Orb_H(\cU)\subseteq\Orb_G(\cU).
\end{equation}
We will now focus on the case where~$\cC$ is a cyclic orbit code, that is, $\cC=\Orb_S(\cU)$ for some subspace~$\cU\leq\Fqn$ and a Singer subgroup~$S\leq\GL_n(q)$.
Thanks to \cref{L-ConjSinger} and \cref{T-isoOrbCodes}(a) it suffices to study the case where $S=\Fqn^*$.
The following result is immediate with \cref{T-SingerNorm}(c). 

%%%%%%%%%%%%%%%%
\begin{prop}\label{P-GLhatAut}
Let $\cC=\Orb_{\Fqn^*}(\cU)$ be a cyclic orbit code. 
Then there exists a divisor~$s$ of~$n$ such that
$\GLhat\unlhd\Aut(\cC)\leq N_{\GL_n(q)}(\GLhat)$.
\end{prop}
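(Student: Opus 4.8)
The plan is to apply \cref{T-SingerNorm}(c) to the subgroup $H=\Aut(\cC)$. First I would observe that $\Fqn^*=S\leq\Aut(\cC)$, since~$\cC$ is by definition the $\Fqn^*$-orbit of~$\cU$ and every orbit is fixed setwise by the acting group (this is the elementary fact recorded right before \eqref{e-GHOrb}). Thus $\Aut(\cC)$ is a subgroup of $\GL_n(q)$ containing the Singer subgroup~$\Fqn^*$, and \cref{T-SingerNorm}(c) immediately yields a divisor~$s$ of~$n$ with $\GLhat\unlhd\Aut(\cC)$.

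For the second inclusion $\Aut(\cC)\leq N_{\GL_n(q)}(\GLhat)$: since $\GLhat\unlhd\Aut(\cC)$, every element of $\Aut(\cC)$ normalizes $\GLhat$ by definition of normal subgroup, so $\Aut(\cC)\subseteq N_{\GL_n(q)}(\GLhat)$. This is purely formal once the first part is in hand.

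The only genuine content is therefore the correct invocation of \cref{T-SingerNorm}(c), which is cited from \cite[p.~232]{Kan80} and \cite[Thm.~7]{Gill16}; there is essentially no obstacle here, as the proposition is, as the text says, ``immediate.'' One small point worth stating explicitly in the writeup is that the divisor~$s$ produced depends on the subspace~$\cU$ (equivalently, on the code~$\cC$) and is not universal; indeed the subsequent sections of the paper are devoted to pinning down exactly which~$s$ occurs, in terms of the smallest subfield of $\Fqn$ containing~$\cU$ and the parameter $\delta_s(\cU)$.

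\begin{proof}
Since $\cC=\Orb_{\Fqn^*}(\cU)$, the Singer subgroup $\Fqn^*$ fixes~$\cC$ setwise and hence $\Fqn^*\leq\Aut(\cC)$.
Thus $\Aut(\cC)$ is a subgroup of $\GL_n(q)$ containing the Singer subgroup $\Fqn^*$, and \cref{T-SingerNorm}(c) (applied with $H=\Aut(\cC)$) provides a divisor~$s$ of~$n$ such that $\GLhat\unlhd\Aut(\cC)$.
Finally, since $\GLhat$ is normal in $\Aut(\cC)$, every element of $\Aut(\cC)$ normalizes $\GLhat$, so $\Aut(\cC)\leq N_{\GL_n(q)}(\GLhat)$.
\end{proof}
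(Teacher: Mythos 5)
Your proof is correct and is exactly the argument the paper intends: the paper states the proposition is ``immediate with \cref{T-SingerNorm}(c)'', and you have simply spelled out that $\Fqn^*\leq\Aut(\cC)$ (the acting group fixes its own orbit), applied part~(c) with $H=\Aut(\cC)$, and noted that the normalizer containment is formal from normality. Nothing is missing.
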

%%%%%%%%%%%%%%%%

The following notion will be convenient throughout.

%%%%%%%%%%%%%%
\begin{defi}\label{D-GenericSubspace}
A subspace $\cU\subseteq\Fqn$ is called \emph{generic} if $\cU$ is not contained in a proper subfield of~$\Fqn$.
\end{defi}
%%%%%%%%%%%%%%

The next theorem is the main result of this section. 
It shows that for any subspace~$\cU$,  the parameter~$s$ from \cref{P-GLhatAut} is
the smallest divisor of~$n$ such that $\cU\subseteq\Fqs$.
As a consequence, the automorphism group of $\cC$ contains linear isometries that are outside the normalizer of~$\Fqn^*$ if and only if
$\cU$ is not generic.

Since any cyclic orbit code contains a subspace~$\cU$ such that $1\in\cU$, we
may assume without loss of generality that~$1$ is contained in the generating subspace.
If, in addition, $\dim(\cU)=1$, then $\cU=\F_q$ and 
$\Orb_{\Fqn^*}(\cU)=\Orb_{\GLhat}(\cU)=\cG_q(1,n)$ for all divisors~$s$ of~$n$.
Hence from now on we assume $k\geq2$ and thus $n\geq4$.
  
%%%%%%%%%%%%%%%
\begin{theo}\label{T-AutOrb}
Let  $S=\Fqn^*$  and let $\cU\in\Gkn$ be such that $1\in\cU$. 
Let~$s$ be a divisor of~$n$. Then
\begin{equation}\label{e-UFqs}
    \cU\subseteq\Fqs\Longleftrightarrow\GLhat\leq\Aut(\Orb_S(\cU)).
\end{equation}
Moreover, if~$\Fqs$ is the smallest subfield containing~$ \cU$, then
$\GLhat$ is normal in $\Aut(\Orb_S(\cU))$ and thus 
$ \Aut(\Orb_S(\cU))\leq N_{\GL_n(q)}(\GLhat)$.
As a consequence, 
\[
   \cU\text{ is generic }\Longleftrightarrow \Aut(\Orb_S(\cU))\leq N_{\GL_n(q)}(S).
\]
\end{theo}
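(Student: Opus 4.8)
The plan is to prove \eqref{e-UFqs} first, since the ``generic'' statement follows from the ``moreover'' part by taking $s=1$, and the ``moreover'' part follows from \eqref{e-UFqs} together with \cref{P-GLhatAut}. For the direction ``$\cU\subseteq\Fqs\Rightarrow\GLhat\leq\Aut(\Orb_S(\cU))$'': if $\cU\subseteq\Fqs$, I would use \eqref{e-GHOrb} and show $\Orb_{\GLhat}(\cU)\subseteq\Orb_S(\cU)$. Since $\cU$ is an $\F_q$-subspace of $\Fqs$ and $\GLhat$ acts as all $\Fqs$-linear automorphisms of $\Fqn$, any image $\phi(\cU)$ lies in $\phi(\Fqs)$; but one must be careful, as $\phi$ need not fix $\Fqs$. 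The cleaner route is to observe that it suffices to check generators, and a natural generating set for $\GLhat$ modulo $\Fqn^*$ can be described; alternatively, one shows directly that for $\phi\in\GLhat$ with $\phi(\cU)$ another $\F_q$-subspace of the \emph{same} copy of $\Fqs$ (which happens when $\phi$ stabilizes $\Fqs$), we reduce to the generic case inside $\Fqs$, i.e.\ to showing $\Orb_{\GL_{s/1}(q)\cap(\text{stab of }\Fqs)}(\cU)\subseteq\Orb_{\Fqs^*}(\cU)$ — but that is exactly the statement that every $\F_q$-linear automorphism of $\Fqs$ sends $\cU$ to $\omega'^i\cU$ for some $i$, which is \emph{false} in general. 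So this naive approach does not work, and the real input must be the orbit-length bound advertised in the introduction (the lower bound on $|\Orb_{\GLhat}(\cU)|$ in terms of $\delta_s(\cU)$); presumably \eqref{e-UFqs} is actually proved by a counting/containment argument using that bound together with \cref{R-SingerOrbSize}.

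Concretely, for the forward implication $\cU\subseteq\Fqs\Rightarrow\GLhat\leq\Aut$: when $\cU\subseteq\Fqs$, the orbit $\Orb_{\GLhat}(\cU)$ consists of $\F_q$-subspaces, and I expect the key lemma (stated elsewhere in the paper) to give $|\Orb_{\GLhat}(\cU)|=|\Orb_{\Fqn^*}(\cU)|$ in this case — indeed, if $\F_{q^t}$ is the largest subfield over which $\cU$ is closed under scalar multiplication, then $t\mid s$, and both orbits should have size $(q^n-1)/(q^t-1)$ by \cref{R-SingerOrbSize} and the companion bound for $\GLhat$. Combined with the trivial containment $\Orb_{\Fqn^*}(\cU)\subseteq\Orb_{\GLhat}(\cU)$ (since $\Fqn^*\leq\GLhat$), equality of sizes forces $\Orb_{\GLhat}(\cU)=\Orb_{\Fqn^*}(\cU)$, hence $\GLhat\leq\Aut(\Orb_S(\cU))$ by \eqref{e-GHOrb}. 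For the reverse implication $\GLhat\leq\Aut\Rightarrow\cU\subseteq\Fqs$: this is the contrapositive of the orbit-length bound — if $\cU\not\subseteq\Fqs$, i.e.\ $\delta_s(\cU)\geq2$, the lower bound on $|\Orb_{\GLhat}(\cU)|$ should strictly exceed $q^n-1\geq|\Orb_{\Fqn^*}(\cU)|$, so $\Orb_{\GLhat}(\cU)\not\subseteq\Orb_{\Fqn^*}(\cU)$ and hence $\GLhat\not\leq\Aut(\Orb_S(\cU))$.

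For the ``moreover'' statement: let $\Fqs$ be the smallest subfield containing $\cU$. By \cref{P-GLhatAut} there is some divisor $s'$ of $n$ with $\GL_{n/s'}(q^{s'})\unlhd\Aut(\cC)\leq N_{\GL_n(q)}(\GL_{n/s'}(q^{s'}))$. In particular $\GL_{n/s'}(q^{s'})\leq\Aut(\Orb_S(\cU))$, so by \eqref{e-UFqs} applied to $s'$ we get $\cU\subseteq\F_{q^{s'}}$, forcing $s\mid s'$ by minimality of $s$; equivalently $\GLhat\supseteq\GL_{n/s'}(q^{s'})$ — wait, the inclusion of extension-field subgroups goes the other way, $\F_{q^{s'}}\supseteq\Fqs$ gives $\GL_{n/s'}(q^{s'})\leq\GLhat$. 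On the other hand \eqref{e-UFqs} with the parameter $s$ itself gives $\GLhat\leq\Aut(\Orb_S(\cU))$, and combined with the upper bound $\Aut(\cC)\leq N_{\GL_n(q)}(\GL_{n/s'}(q^{s'}))$ one must pin down $s'=s$. I would argue: $\GLhat\leq\Aut(\cC)$ and $\GLhat$ contains the Singer subgroup, so by \cref{T-SingerNorm}(c) applied to $H=\GLhat$ inside $\Aut(\cC)$... actually the cleanest finish is to note that $\Aut(\cC)$ contains $\GLhat$ and is contained in the normalizer of some $\GL_{n/s'}(q^{s'})$ with $s'\mid s$; since $\GLhat\leq\Aut(\cC)\leq N_{\GL_n(q)}(\GL_{n/s'}(q^{s'}))$ and $\GL_{n/s'}(q^{s'})\leq\GLhat$, conjugation by $\GLhat$ preserves $\GL_{n/s'}(q^{s'})$, which is automatic, so this alone is not enough — one needs that $\GLhat$ is actually the full $\Aut(\cC)\cap(\text{something})$. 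I would instead directly show $\GLhat$ is normal in $\Aut(\cC)$: given $\psi\in\Aut(\cC)$, the group $\psi\GLhat\psi^{-1}$ is an extension-field subgroup (conjugate of $\GLhat$), and by \cref{T-isoOrbCodes}(b) it is a group of automorphisms of $\psi(\cC)=\cC$; since it is of the form $\GL_{n/s}(\rho(\Fqs))$ for the field structure induced by $\psi$ on $\Fqn$, and $\cC$ is an $\Fqn^*$-orbit in the \emph{original} structure, a minimality/uniqueness argument (the smallest subfield over which all members of a fixed $\Fqn^*$-orbit... ) forces $\psi\GLhat\psi^{-1}=\GLhat$. The last equivalence is then immediate: $\cU$ generic means $s=1$, $\GLhat=\GL_n(q)$... no — $s=1$ gives $\GL_n(q^1)=\GL_n(q)$ which is everything, that can't be the normal subgroup unless $\Aut(\cC)=\GL_n(q)$; rather ``generic'' should correspond to $\GLhat$ being the whole group only when... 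I realize $s=1$ gives the trivial extension-field subgroup $\GL_n(q)$, so the ``moreover'' conclusion for generic $\cU$ reads $\Aut(\cC)\leq N_{\GL_n(q)}(\GL_n(q))=\GL_n(q)$, which is vacuous — so the generic equivalence must instead come from observing that $\cU$ generic $\iff$ the \emph{only} $s$ with $\GLhat\leq\Aut(\cC)$ is... here the Singer subgroup itself is $\GL_1(q^n)=\GL_{n/n}(q^n)$, the $s=n$ case; so ``$\cU$ generic'' should be read against $s=n$, and I'd invoke \eqref{e-UFqs} with the roles set up so that genericity is equivalent to $\Aut(\cC)\leq N_{\GL_n(q)}(S)$ via \cref{T-SingerNorm}(a)-(b).

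\textbf{Main obstacle.} The crux is the lower bound on $|\Orb_{\GLhat}(\cU)|$ in terms of $\delta_s(\cU)$ — specifically that it exceeds $q^n-1$ whenever $\delta_s(\cU)\geq2$ — which the introduction flags as the technical heart and which I am assuming is established before this theorem; everything else is a matter of combining it with \cref{R-SingerOrbSize}, \cref{P-GLhatAut}, and the normalizer structure from \cref{T-SingerNorm}. A secondary subtlety, as my fumbling above shows, is correctly tracking which value of $s$ (and hence which extension-field subgroup, with the inclusion reversal between subfields and their linear groups) corresponds to ``generic,'' and making the normality argument for $\GLhat$ in $\Aut(\cC)$ watertight using the uniqueness of the minimal subfield attached to the orbit rather than to a single subspace.
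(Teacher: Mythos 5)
Your overall architecture matches the paper's: \eqref{e-UFqs} is indeed proved by comparing $|\Orb_{\GLhat}(\cU)|$ with $|\Orb_S(\cU)|\leq(q^n-1)/(q-1)$, and the backward implication ($\cU\not\subseteq\Fqs\Rightarrow\delta_s(\cU)\geq2\Rightarrow|\Orb_{\GLhat}(\cU)|>|\Orb_S(\cU)|$ via the lower bound of \cref{P-dimUhat} and \cref{L-GLhatOrbitB}) is exactly the paper's argument. But there are two genuine gaps. First, for the forward implication you abandon the direct argument and fall back on an unproved ``companion bound'' asserting the \emph{exact} size of $\Orb_{\GLhat}(\cU)$ when $\cU\subseteq\Fqs$; \cref{P-dimUhat} gives equality only when $\delta_s(\cU)=k$, whereas here $\delta_s(\cU)=1<k$, so no exact count is available. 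The observation you are missing is elementary: if $\cU\subseteq\Fqs$ and $\psi\in\GLhat$, then for every $u\in\cU$ we have $\psi(u)=\psi(u\cdot 1)=u\,\psi(1)$ by $\Fqs$-linearity of $\psi$ (here $u$ is a \emph{scalar} in $\Fqs$), so $\psi(\cU)=\psi(1)\cU$ is a Singer shift of $\cU$ and $\Orb_{\GLhat}(\cU)\subseteq\Orb_S(\cU)$ with no counting needed. Your worry that ``$\phi$ need not fix $\Fqs$'' is beside the point: one never needs $\phi(\Fqs)=\Fqs$, only that $\phi$ restricted to $\cU$ acts as multiplication by $\phi(1)$.

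Second, your argument for the normality of $\GLhat$ in $\Aut(\Orb_S(\cU))$ does not close. You correctly reduce to the following: \cref{T-SingerNorm}(c) produces some $t$ with $\GL_{n/t}(q^t)\unlhd\Aut(\cC)$, and \eqref{e-UFqs} forces $s\mid t\mid n$; but you then concede you cannot pin down $t=s$ and substitute an unexecuted ``minimality/uniqueness'' argument via conjugation. The missing ingredient is a size comparison (\cref{L-GLhatSize}): for $s\mid t\mid n$ with $t\neq s$ one has $|\GLhat|>|N_{\GL_n(q)}(\GL_{n/t}(q^t))|$, so the chain $\GLhat\leq\Aut(\cC)\leq N_{\GL_n(q)}(\GL_{n/t}(q^t))$ is impossible unless $t=s$. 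Your eventual reading of the generic case against $s=n$ (so that $\GL_{n/n}(q^n)=S$ and the moreover statement literally becomes $\Aut(\Orb_S(\cU))\leq N_{\GL_n(q)}(S)$) is the right correction to your earlier confusion, but as written the proof of the moreover statement — and hence of the final equivalence — is incomplete.
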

%%%%%%%%%%%%%%%%%

The proof is postponed to the end of this section. We first need some technical results.
We start with a lower bound on the size of the orbits $\Orb_{\GLhat}(\cU)$ for a given divisor~$s$ of~$n$.
As we will see, this size depends on the dimension of the $\Fqs$-subspace of $\Fqn$ generated by~$\cU$.

%%%%%%%%%%%%%%%
\begin{defi}\label{D-Uhat}
For any $\F_q$-subspace $\cV$ of $\Fqn$ we set 
$\cVhat:=\spann_{\Fqs}(\cV)$ and $\delta_s(\cV):=\dim_{\F_{q^s}}(\cVhat)$.
Note that $\delta_s(\cV)s=\dim_{\F_q}(\cVhat)\leq n$.
\end{defi}
%%%%%%%%%%%%%%

Clearly $\delta_s(\,\cdot\,)$ is invariant under the actions of the groups~$\Fqn^*,\,\Gal(\Fqn\mmid\F_q)$, and $\GLhat$.

%%%%%%%%%%%%%%%
\begin{prop}\label{P-dimUhat}
Let $\cU\in\Gkn$ be such that $1\in\cU$, and let~$s$ be a divisor of~$n$.
Set $\delta_s(\cU)=r$. Then $1\leq r\leq k$ and 
\[
    |\Orb_{\GLhat}(\cU)| \geq  \frac{q^{\binom{r}{2}(s-1)}}{\qbinom{k}{r}_q} \prod_{i=0}^{r-1}\frac{q^{n-is}-1}{q^{r-i}-1}
\]
with equality if $r=k$. 
\end{prop}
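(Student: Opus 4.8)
The orbit-stabilizer theorem reduces the whole claim to computing (or lower-bounding) the stabilizer $\Stab_{\GLhat}(\cU)$ inside $\GLhat\cong\GLhat$, since $|\Orb_{\GLhat}(\cU)|=|\GLhat|/|\Stab_{\GLhat}(\cU)|$ and $|\GLhat|=\prod_{i=0}^{\nh-1}(q^n-q^{is})$. The key object is $\cUhat=\spann_{\Fqs}(\cU)$, an $\Fqs$-subspace of dimension $r=\delta_s(\cU)$, which is $\GLhat$-invariant in the sense that any $\phi\in\GLhat$ maps $\cUhat$ to $\widehat{\phi(\cU)}$; in particular, if $\phi$ fixes $\cU$ it fixes $\cUhat$. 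So the stabilizer of $\cU$ sits inside the stabilizer of the $\Fqs$-flag $\cU\subseteq\cUhat\subseteq\Fqn$. I would first estimate $|\Stab_{\GLhat}(\cUhat)|$: this is the stabilizer of an $r$-dimensional $\Fqs$-subspace under $\GL_{\nh}(\qh)$, a standard parabolic, of order $q^{r(n-rs)}\cdot|\GL_r(\qh)|\cdot|\GL_{\nh-r}(\qh)|$. Then, to get an \emph{upper} bound on $|\Stab_{\GLhat}(\cU)|$ (hence a lower bound on the orbit), I'd bound the stabilizer of $\cU$ \emph{within} the stabilizer of $\cUhat$: the block acting on $\Fqn/\cUhat$ and the unipotent radical can do anything ($|\GL_{\nh-r}(\qh)|\cdot q^{r(n-rs)}$ choices at most), while the block acting on $\cUhat\cong\Fqs^r$ must fix the $\F_q$-subspace $\cU$, and $\cU$ is an $\F_q$-form of $\cUhat$ that $\Fqs$-spans it — so the relevant subgroup of $\GL_r(\qh)$ fixing $\cU$ as a set has order at most $|\Stab_{\GL_r(\qh)}(\cU)|$, which I bound crudely by $|\GL_k(q)|/|\GL_r(q)|$-type counting: $\cU$ is a $k$-dimensional $\F_q$-space inside the $rs$-dimensional $\F_q$-space $\cUhat$, but it must also span over $\Fqs$, and the $\GL_r(\qh)$-orbit of such spanning configurations has size at least $q^{\binom r2(s-1)}\prod_{i=0}^{r-1}(q^{rs-is}-1)/(q^{r-i}-1)$ divided by $\qbinom kr_q$. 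This last factor is exactly the combinatorial heart.

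\textbf{The spanning count.} The cleanest route to the factor $q^{\binom r2(s-1)}\prod_{i=0}^{r-1}\frac{q^{n-is}-1}{q^{r-i}-1}\big/\qbinom kr_q$ is to count pairs: I would count the number of $\Fqs$-subspaces $\cW\subseteq\Fqn$ of dimension $r$, namely $\qbinom{\nh}{r}_{\qh}=\prod_{i=0}^{r-1}\frac{q^{n-is}-1}{q^{rs-is}-1}$, and for each such $\cW$ count the number of $k$-dimensional $\F_q$-subspaces $\cU\subseteq\cW$ with $\spann_{\Fqs}\cU=\cW$; call this $g(k,r,s)$. Then $|\{\cU\in\Gkn:\delta_s(\cU)=r\}|=\qbinom{\nh}{r}_{\qh}\cdot g(k,r,s)$, and since every such $\cU$ lies in one $\GLhat$-orbit of size $|\Orb_{\GLhat}(\cU)|\le|\{\cU':\delta_s(\cU')=r\}|$ but, more to the point, the number of orbits is $\ge g(k,r,s)\cdot\qbinom{\nh}{r}_{\qh}\big/|\Orb_{\GLhat}(\cU)|$... actually the sharper idea is: $\GLhat$ acts transitively on the $r$-subspaces $\cW$ (Witt-type), with point stabilizer the parabolic above, and the orbit of $\cU$ maps onto the set of such $\cW$, with fibre contained in one orbit of $\GL_r(\qh)$ on spanning $\F_q$-forms; so $|\Orb_{\GLhat}(\cU)|=\qbinom{\nh}{r}_{\qh}\cdot|\Orb_{\GL_r(\qh)\text{-part}}(\cU)|$. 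Thus the whole problem collapses to the case $n=rs$, $k\le rs$, $\delta_s(\cU)=r$ full, where I must show $|\Orb_{\GL_r(\qh)}(\cU)|\ge q^{\binom r2(s-1)}\prod_{i=0}^{r-1}\frac{q^{rs-is}-1}{q^{r-i}-1}\big/\qbinom kr_q$ with equality when $k=r$. When $k=r$ this is exact: a full-$\Fqs$-span $\cU$ with $1\in\cU$, $\dim_{\F_q}\cU=r$, corresponds to an $\F_q$-basis of an $\Fqs$-structure on $\Fqn$, and the count of such $\cU$'s (equivalently, of "spread sets"/field reductions) is the stated product — this is a known formula (number of $\F_q$-subspaces that are $\Fqs$-bases), and the $\GL_r(\qh)$-action on them is transitive with trivial-ish stabilizer, giving equality.

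\textbf{From $k=r$ to $k>r$, and the inequality.} For $k>r$, I would build a map from the set $X_{k}$ of $k$-dimensional $\F_q$-subspaces $\cU$ with $\delta_s(\cU)=r$ to the set $X_r$ of $r$-dimensional ones with full $\Fqs$-span $\cUhat$: send $\cU\mapsto\cUhat$ won't do (that loses $\cU$); instead, observe that choosing $\cU\supseteq$ nothing... rather, fix $\cUhat=\Fqs^r$ and count: a $k$-dim $\F_q$-subspace $\cU\subseteq\Fqs^r$ with $\spann_{\Fqs}\cU$ full contains at least one $r$-dim subspace $\cU'\subseteq\cU$ with $\spann_{\Fqs}\cU'$ still full (choose $r$ of the $k$ basis vectors appropriately — existence by a rank argument over $\Fqs$). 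The number of $r$-subspaces of a fixed $\cU$ is $\qbinom kr_q$, and each full-span $\cU'\in X_r$ is contained in at most... hmm, the double counting $\sum_{\cU\in X_k}|\{\cU'\subseteq\cU, \cU'\in X_r'\}|=\sum_{\cU'\in X_r'}|\{\cU\supseteq\cU', \dim\cU=k\}|$ gives $|X_k|\cdot(\le\qbinom kr_q)\ge|X_r'|\cdot(\text{fixed number}\ge 1)$, hence $|X_k|\ge|X_r'|/\qbinom kr_q$, and dividing through by the parabolic index $\qbinom{\nh}{r}_{\qh}$-analogue this is exactly the claimed bound. \textbf{The main obstacle} will be making the double-counting rigorous — specifically, showing every full-span $k$-subspace contains a full-span $r$-subspace, and getting the bound $\le\qbinom kr_q$ on the other side tight enough (it need not be exact, only $\ge$), and separately verifying the $k=r$ equality case matches the displayed product (this is the one place a genuine, if standard, enumeration is unavoidable). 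Throughout I'd keep $1\in\cU$ to normalize, though for the pure counting it plays a minor role. Once $X_r$ equality and the reduction to $n=rs$ are in place, the general formula follows by multiplying by $\qbinom{\nh}{r}_{\qh}=\prod_{i=0}^{r-1}\frac{q^{n-is}-1}{q^{rs-is}-1}$ and simplifying against $\prod\frac{q^{rs-is}-1}{q^{r-i}-1}$ to get $\prod\frac{q^{n-is}-1}{q^{r-i}-1}$, with the $q^{\binom r2(s-1)}$ surviving from the $k=r$ base case.
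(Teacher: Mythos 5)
Your overall architecture is workable and close to the paper's in spirit: the fibration of $\Orb_{\GLhat}(\cU)$ over the $\qbinom{n/s}{r}_{q^s}$ possible images of $\cUhat$, and the base case $k=r$ handled by transitivity of $\GL_r(q^s)$ on $\F_q$-subspaces spanned by $\Fqs$-independent tuples (counting such tuples and dividing by the number of ordered $\F_q$-bases), are both correct; the paper simply does the $k=r$ count directly in $\Fqn$ without the parabolic reduction, which amounts to the same thing.

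The genuine gap is in your passage from $k=r$ to $k>r$. Your double counting is set up between $X_k$ (\emph{all} $k$-dimensional $\F_q$-subspaces of $\cUhat$ with full $\Fqs$-span) and $X_r'$, and what it yields is $|X_k|\geq |X_r'|/\qbinom{k}{r}_q$. But the proposition bounds the size of a single orbit, and for $k>r$ the set $X_k$ is in general a union of several $\GLhat$-orbits of very different sizes -- e.g.\ for $(q,n,k,s)=(2,8,3,4)$ and $\delta_4(\cU)=2$ the proof of \cref{L-NewIneq} records four distinct $\GLhat$-orbits of sizes $61200$, $15300$, $15300$, $5100$ -- so a lower bound on $|X_k|$ gives no information about $|\Orb_{\GLhat}(\cU)|$. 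The repair is small and uses only ingredients you already have: fix one $r$-dimensional $\F_q$-subspace $\cV\subseteq\cU$ with $\widehat{\cV}=\cUhat$ (your rank argument gives its existence), and double count incidences between the two \emph{orbits} $\Orb_{\GLhat}(\cU)$ and $\Orb_{\GLhat}(\cV)$ rather than between $X_k$ and $X_r'$: every $\psi(\cV)$ is contained in the corresponding $\psi(\cU)$, so each element of $\Orb_{\GLhat}(\cV)$ lies in at least one element of $\Orb_{\GLhat}(\cU)$, while each element of $\Orb_{\GLhat}(\cU)$ contains at most $\qbinom{k}{r}_q$ subspaces of $\F_q$-dimension $r$ and hence at most that many elements of $\Orb_{\GLhat}(\cV)$. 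This gives $|\Orb_{\GLhat}(\cU)|\geq|\Orb_{\GLhat}(\cV)|/\qbinom{k}{r}_q$, with $|\Orb_{\GLhat}(\cV)|$ computed exactly by the $k=r$ case; that is precisely the paper's argument.
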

%%%%%%%%%%%%%%

Note that $(r-1)s<rs=\dim_{\F_q}(\cUhat)\leq n$. This shows that the right hand side is not~$0$.

\begin{proof}
First let $s=1$. Then $\Fqs=\F_q$ and $\cUhat=\cU$, and thus $r=k$. 
In this case, $\Orb_{\GL_n(q)}(\cU)$ consists of all $k$-dimensional subspaces of $\Fqn$, and hence its size is $\qbinom{n}{k}_q$, which is the right hand side above.
From now on let $s>1$. 
\\
\underline{Case 1)} Let $r=k$. 
Let $B=(u_1,\ldots,u_k)$ be an ordered $\F_q$-basis of $\cU$. 
Thanks to $\delta_s(\cU)=k$, the vectors $u_1,\ldots,u_k$ are also $\Fqs$-linearly independent. 
Under the action of $\GLhat$ the orbit of the basis~$B$ consists of all $k$-tuples of $\Fqs$-linearly independent vectors in $\Fqn$.
This implies that $|\Orb_{\GLhat}(\cU)|$ is given by the number of $k$-tuples of $\Fqs$-linearly independent vectors in $\Fqn$ divided by the number of ordered $\F_q$-bases for a $k$-dimensional $\F_q$-subspace.  
We conclude
\[
   |\Orb_{\GLhat}(\cU)| = \dfrac{\prod\limits_{i=0}^{k-1} (q^n-q^{is})}{\prod\limits_{i=0}^{k-1}(q^k-q^i)} 
	 					   = q^{\binom{k}{2}(s-1)} \prod_{i=0}^{k-1} \frac{q^{n-is}-1}{q^{k-i}-1}.
\]
\underline{Case 2)}  Let now $1\leq r<k$.
There exists a subspace $\cV$ of $\cU$ such that $\dim_{\F_q}(\cV)=r$ 
and $\widehat{\cV}=\cUhat$.  
Clearly, each subspace $\psi(\cU) \in \Orb_{\GLhat}(\cU)$ contains exactly $K:=\qbinom{k}{r}_q$ subspaces of $\F_q$-dimension~$r$, and thus in particular 
at most $K$ subspaces of the form~$\psi'(\cV)$ for some $\psi'\in\GLhat$.
Since each $\psi'(\cV) \in \Orb_{\GLhat}(\cV)$ is contained in at least one $\psi(\cU) \in \Orb_{\GLhat}(\cU)$, we obtain
\[ 
    |\Orb_{\GLhat}(\cU)| \geq \frac{1}{K}|\Orb_{\GLhat}(\cV)|. 
\]
Since~$\cV$ satisfies $\delta_1(\cV)=\delta_s(\cV)$,we may apply Case~1) to $\cV$ to obtain the desired result.
\end{proof}

In order to compare the sizes of the $\GLhat$-orbits and the Singer orbits, we need some technical lemmas.

%%%%%%%%%%%%%%
\begin{lemma}\label{L-ExpDiffPos}
Let  $2\leq r\leq n/2$ and $1\leq s<n$ be such that $sr\leq n$.  Then  
\[
       \prod_{i=0}^{r-1}\frac{q^{n-is}-1}{q^{r-i}-1}>q^{r(n-r)-(s-1)\binom{r}{2}}.
\]
\end{lemma}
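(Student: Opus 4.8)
The plan is to bound each factor on the left from below. Write the product as
\[
    \prod_{i=0}^{r-1}\frac{q^{n-is}-1}{q^{r-i}-1}
    =\prod_{i=0}^{r-1}\frac{q^{n-is}-1}{q^{r-i}-1}.
\]
For each index $i$ the numerator satisfies $q^{n-is}-1>q^{n-is-1}$ (since $n-is\geq n-(r-1)s\geq s\geq1$, the exponent $n-is-1\geq0$, and $q^{m}-1\geq q^{m-1}(q-1)\geq q^{m-1}$ for $m\geq1$, with strict inequality when $q\geq2$). Hence it suffices to show the cleaner bound
\[
    \prod_{i=0}^{r-1}\frac{q^{n-is-1}}{q^{r-i}-1}\ \geq\ q^{r(n-r)-(s-1)\binom{r}{2}},
\]
and in fact I expect a little slack here, so the strictness in the original statement will come for free from the numerator estimate above.

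\textbf{Key steps, in order.} First I would compute the exponent contributed by the numerators: $\sum_{i=0}^{r-1}(n-is-1)=r(n-1)-s\binom{r}{2}=rn-r-s\binom{r}{2}$. Comparing with the target exponent $r(n-r)-(s-1)\binom{r}{2}=rn-r^2-(s-1)\binom{r}{2}$, the difference (numerator exponent minus target exponent) is
\[
    \big(rn-r-s\tbinom{r}{2}\big)-\big(rn-r^2-(s-1)\tbinom{r}{2}\big)=r^2-r-\tbinom{r}{2}=\tbinom{r}{2},
\]
using $r^2-r=2\binom{r}{2}$. So after cancelling the $q$-powers it remains to prove
\[
    \frac{q^{\binom{r}{2}}}{\prod_{i=0}^{r-1}(q^{r-i}-1)}\ \geq\ 1,
    \qquad\text{i.e.}\qquad
    \prod_{j=1}^{r}(q^{j}-1)\ \leq\ q^{\binom{r}{2}}=q^{1+2+\cdots+(r-1)}.
\]
This last inequality is immediate term by term: reindexing, $\prod_{j=1}^r(q^j-1)=\prod_{j=0}^{r-1}(q^{j+1}-1)$, and $q^{j+1}-1<q^{j+1}$ is too lossy, so instead pair the factor $q^{j+1}-1$ with the exponent $j$: one checks $q^{j+1}-1\le q\cdot q^{j-1}\cdot(q-1)+\dots$ — cleaner is to just say $\prod_{j=1}^{r}(q^j-1)<\prod_{j=1}^{r}q^{j-1}\cdot q = $ hmm. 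The clean statement is $q^j-1\le q^{j-1}(q-1)<q^j$, giving $\prod_{j=1}^r(q^j-1)<q^{\binom{r+1}{2}}$, which is weaker than what I want. The right pairing: match $q^j-1$ against $q^{j-1}$; since $q^j-1<q^j$ this gives $\prod(q^j-1)<q^{0+1+\cdots}$ — I need to be careful. Let me instead argue $\prod_{j=1}^{r}(q^j-1)\le\prod_{j=1}^{r}q^{j-1}=q^{\binom{r}{2}}$ directly: this would need $q^j-1\le q^{j-1}$, i.e. $q^j-q^{j-1}\le1$, which fails. So the naive term-by-term comparison is off by a factor of roughly $q^r$; I have exactly a surplus of $q^{\binom r2}$ versus $\prod(q^j-1)$ which is about $q^{\binom{r+1}{2}}=q^{\binom r2 + r}$, so the inequality $\prod_{j=1}^r(q^j-1)\le q^{\binom r2}$ is actually \emph{false}. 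This means my numerator estimate $q^{n-is}-1>q^{n-is-1}$ was too wasteful.

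\textbf{The real obstacle and fix.} The genuine difficulty is that one must \emph{not} throw away a full factor of $q$ from each numerator; the inequality is tight enough that one needs $q^{n-is}-1\ge q^{n-is}\cdot c$ with $c$ close to $1$. The honest approach is to pair numerator $i$ with denominator $i$ directly: show
\[
    \frac{q^{n-is}-1}{q^{r-i}-1}\ \ge\ q^{(n-is)-(r-i)}\cdot\frac{q-1}{q-1}
\]
is still too weak in aggregate by the same count, so instead I would keep the $-1$'s and use $\dfrac{q^{a}-1}{q^{b}-1}\ge q^{a-b}$ for $a\ge b\ge1$ (true since $q^a-1=q^{a-b}(q^b-1)+q^{a-b}-1\ge q^{a-b}(q^b-1)$), applied to $a=n-is$, $b=r-i$. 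Summing exponents $\sum_{i=0}^{r-1}\big((n-is)-(r-i)\big)=rn-s\binom r2 - r^2+\binom r2 = r(n-r)-(s-1)\binom r2$, which is \emph{exactly} the target exponent — and the inequality $q^a-1\ge q^{a-b}(q^b-1)$ is strict whenever $a>b$, which holds for $i=0$ as long as $n>r$, guaranteed by $r\le n/2$ and $r\ge2$ (so $n\ge4>r$ is not quite it, but $n\ge 2r>r$). Hence the product is strictly larger than $q^{r(n-r)-(s-1)\binom r2}$, as claimed. So the whole proof reduces to the one-line lemma $q^a-1\ge q^{a-b}(q^b-1)$ for integers $a\ge b\ge1$, with strictness for $a>b$; I would state and prove that first, then apply it termwise and sum the exponents. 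The main thing to get right is bookkeeping of the exponent sum and checking that the constraint $sr\le n$ together with $2\le r\le n/2$ ensures every $a=n-is\ge r-i=b\ge1$ (indeed $n-is\ge n-(r-1)s\ge n-(r-1)s$ and $sr\le n$ gives $n-is\ge s\ge1$, while $n-is\ge r-i$ follows from $n\ge 2r\ge r+is-i = r+(s-1)i$ for all $i\le r-1$ since $(s-1)i\le (s-1)(r-1)\le sr-s-r+1\le n-s-r+1\le n-r$, using $s\ge1$).
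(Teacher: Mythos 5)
Your final argument is correct and is essentially the paper's proof: both rest on the termwise bound $\frac{q^a-1}{q^b-1}\ge q^{a-b}$ with $a=n-is$, $b=r-i$, the same exponent bookkeeping $\sum_{i=0}^{r-1}\bigl((n-is)-(r-i)\bigr)=r(n-r)-(s-1)\binom{r}{2}$, and a check that the hypotheses force $a\ge b\ge1$ (the paper shows $a-b\ge1$ for every $i$, so all factors are strict; your weaker observation that the $i=0$ factor is strict already suffices). The abandoned first attempt can simply be deleted.
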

%%%%%%%%%%%%%%%

\begin{proof}
Note first that by the assumptions
\begin{equation}\label{e-Ineq1}
  n-is-r+i\geq1 \text{ for all }i=0,\ldots,r-1.
\end{equation}
Indeed, $n-is-r+i=n-r-i(s-1)\geq n-r-(r-1)(s-1)=n-rs+s-1$.
If $s\geq2$ the latter is clearly at least~$1$, while for $s=1$ we have $n-rs+s-1=n-r\geq n/2\geq1$, too.
Using the inequality 
\begin{equation}\label{e-qab}
      \frac{q^a-1}{q^b-1}> q^{a-b}\ \text{ whenever }a> b,
\end{equation}
we obtain from~\eqref{e-Ineq1} the inequality $\prod_{i=0}^{r-1}\frac{q^{n-is}-1}{q^{r-i}-1}>q^M$,
where $M=\sum_{i=0}^{r-1}(n-is-r+i)=r(n-r)-(s-1)\binom{r}{2}$, as desired.
\end{proof}

The next lemma comes in two forms, one with a factor~$n$ on the right hand side and one without such factor. 
The version with factor~$n$ will be needed in Section~\ref{S-AutGroupSingerNorm} when we study orbits under the normalizer of the 
Singer subgroup. 

%%%%%%%%%%%%%%%%
\begin{lemma}\label{L-GLhatOrbitB}
Let $2\leq r\leq k \leq n/2$ and $1\leq s<n$ such that $rs\leq n$.  
\begin{alphalist}
\item Let $r\geq3$. Then 
         \begin{equation}\label{e-MainIneq}
                   q^{\binom{r}{2}(s-1)}\prod_{i=0}^{r-1}\frac{q^{n-is}-1}{q^{r-i}-1} >n\qbinom{k}{r}_q \frac{q^n-1}{q-1}.
         \end{equation}
\item If $r=2$, then ${\displaystyle q^{\binom{r}{2}(s-1)}\prod_{i=0}^{r-1}\frac{q^{n-is}-1}{q^{r-i}-1} >\qbinom{k}{r}_q \frac{q^n-1}{q-1}}$.
\end{alphalist}
\end{lemma}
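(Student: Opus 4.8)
The two inequalities in \cref{L-GLhatOrbitB} differ only by the factor~$n$ and the requirement $r\geq 3$ versus $r=2$, so I would treat them together and simply track where the extra factor~$n$ is available. The basic strategy is to take logarithms base~$q$ (informally) and compare exponents: on the left we have, after applying \cref{L-ExpDiffPos}, a quantity exceeding
\[
   q^{\binom{r}{2}(s-1)}\cdot q^{\,r(n-r)-(s-1)\binom{r}{2}} = q^{\,r(n-r)},
\]
so the whole point is that the $(s-1)\binom{r}{2}$ terms cancel and the left-hand side is bounded below by $q^{r(n-r)}$ independently of~$s$. On the right-hand side we need an upper bound for $\qbinom{k}{r}_q\frac{q^n-1}{q-1}$ (times~$n$ in part~(a)). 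The standard estimate $\qbinom{k}{r}_q < q^{\,r(k-r)}\cdot C_q$ for an absolute constant $C_q$ slightly larger than~$1$ (coming from $\prod_{j\geq1}(1-q^{-j})^{-1}$), together with $\frac{q^n-1}{q-1}<q^{n-1}\cdot\frac{q}{q-1}$, gives a right-hand side bounded above by roughly $q^{\,r(k-r)+n-1}$ up to a small constant (and an extra factor~$n\leq q^{\log_q n}$ in part~(a)). So the inequality reduces to checking
\[
   r(n-r) \;>\; r(k-r) + n - 1 + (\text{small correction}),
\]
i.e. $r(n-k) > n-1 + (\text{small correction})$, and since $k\leq n/2$ we have $r(n-k)\geq rn/2\geq n$ (using $r\geq 2$), which beats $n-1$ with room to spare — that slack is exactly what absorbs the constants and the factor~$n$.

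**Key steps in order.** (1) Apply \cref{L-ExpDiffPos} to replace the left-hand product by the clean lower bound $q^{\,r(n-r)-(s-1)\binom{r}{2}}$, so that after multiplying by $q^{\binom{r}{2}(s-1)}$ the left-hand side exceeds $q^{\,r(n-r)}$; this is the step that eliminates~$s$ entirely. (2) Bound the right-hand side from above: write $\qbinom{k}{r}_q < q^{\,r(k-r)}\,\gamma_q$ where $\gamma_q=\prod_{i=1}^{\infty}(1-q^{-i})^{-1}\leq \prod_{i=1}^{\infty}(1-2^{-i})^{-1} < 4$ (or use the cruder $\qbinom{k}{r}_q < q^{\,r(k-r)+1}$ valid for $q\geq2$), and bound $\frac{q^n-1}{q-1}<q^{\,n-1}\cdot\frac{q}{q-1}\leq 2q^{\,n-1}$. (3) For part~(a) also bound $n\leq q^{\,n/2}$ crudely (valid for all $n\geq1$, $q\geq2$), or more carefully $n < q^{\,r(n-k)-(k-r)-c}$ for the relevant range. (4) Combine: it suffices that $r(n-r) \geq r(k-r) + n - 1 + \log_q n + \log_q(\text{const})$, i.e. $r(n-k)\geq n-1+\log_q n + O(1)$. (5) Verify this using $r\geq 2$ and $k\leq n/2$: then $r(n-k)\geq 2(n-n/2)=n$, which already exceeds $n-1$; the surplus needs to cover $\log_q n + O(1)$, so I would either sharpen step (2) (the Gaussian binomial bound is quite lossy — one can gain a factor $q^{(k-r)}$ or more from the telescoping $\qbinom{k}{r}_q = \prod_{i=0}^{r-1}\frac{q^{k-i}-1}{q^{r-i}-1}$) or handle the finitely many small pairs $(n,q)$ where $n\geq n/2 \cdot$const fails directly. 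For $r\geq 3$ in part~(a) there is even more slack, $r(n-k)\geq 3n/2$, which comfortably absorbs the factor~$n$ for all $q\geq2$ once $n$ is not tiny, and the tiny cases are finite and checkable.

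**Main obstacle.** The genuinely delicate point is part~(a) when $r=3$, $q=2$, and $n$ is at the small end of its range (e.g. $n=6$ or $n=9$, the smallest multiples of~$3$ with $rs\leq n$), because there the factor~$n$ is not negligible compared to the available exponential surplus and the crude bounds $\qbinom{k}{r}_q<q^{r(k-r)+1}$ and $n\leq q^{n/2}$ may together be too weak. The clean way around this is to use the exact product formula $\qbinom{k}{r}_q=\prod_{i=0}^{r-1}\frac{q^{k-i}-1}{q^{r-i}-1}$ and pair its factors against the factors of $\prod_{i=0}^{r-1}\frac{q^{n-is}-1}{q^{r-i}-1}$ on the left: the $\frac{1}{q^{r-i}-1}$ denominators cancel exactly, reducing the inequality to $q^{\binom{r}{2}(s-1)}\prod_{i=0}^{r-1}(q^{n-is}-1) > n\,\frac{q^n-1}{q-1}\prod_{i=0}^{r-1}(q^{k-i}-1)$, after which each factor $q^{n-is}-1$ on the left dominates the corresponding $q^{k-i}-1$ on the right by a factor of at least $q^{n-is-(k-i)}$, and one tracks that $\sum_{i}(n-is-k+i)=r(n-k)-(s-1)\binom{r}{2}\geq r(n-k)-\binom{r}{2}$ — here I would then want $r(n-k)$ minus the leftover $\binom{r}{2}(s-1)$ cancellation to still beat $\log_q\!\big(n\,\tfrac{q^n-1}{q-1}\big)-n = \log_q n + O(1)$, and a short case check settles the few remaining small configurations. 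I expect the bookkeeping to be elementary but fiddly; the conceptual content is entirely in step~(1) (the $s$-cancellation) and in the inequality $r(n-k)\geq n$ from $k\leq n/2$.
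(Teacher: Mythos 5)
Your outline for part (a) is essentially the paper's proof: \cref{L-ExpDiffPos} supplies the $s$-cancellation so the left side exceeds $q^{r(n-r)}$, the bound $\qbinom{k}{r}_q<4q^{r(k-r)}$ together with $\frac{q^n-1}{q-1}<\frac{q}{q-1}q^{n-1}$ controls the right side, the exponent comparison rests on $k\leq n/2$ and $r\geq 3$ (giving $r(k-r)+n-1\leq r(n-r)-(n/2+1)$, so the leftover factor is $4n/(q^{n/2}(q-1))\leq 1$), and the finitely many small cases $q=2$, $n=6,\dots,10$ are checked directly — exactly as in the paper. One caveat on part (b): your primary route genuinely fails there, not just for small parameters. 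With $r=2$ the exponent surplus $r(n-k)-(n-1)$ can be exactly $1$, i.e.\ a single factor of $q$, which cannot absorb the constant $\approx 4\cdot\frac{q}{q-1}$ (about $8$ for $q=2$) for any $n$; so the ``fallback'' you describe — cancelling the $(q^{r-i}-1)$ denominators of $\qbinom{k}{r}_q$ against those of the left-hand product — is the required argument, not an optional sharpening. Carried out for $r=2$ it reduces the claim to $Q:=(q-1)(q^n-q^s)-(q^k-1)(q^k-q)>0$, and (contrary to your expectation of a residual case check) this holds uniformly: $Q$ is decreasing in $s$ and $k$, and at $s=k=n/2$ it factors as $\big((q-2)q^{n/2}+q\big)(q^{n/2}-1)>0$. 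So the plan is sound and matches the paper, but part (b) needs this last computation to be completed rather than deferred to a finite verification.
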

%%%%%%%%%%%%

\begin{proof}
(a) Let $r\geq3$, thus $n\geq6$. Setting $c=q/(q-1)$ we have $(q^n-1)/(q-1)<cq^{n-1}$. Furthermore, $r\geq3$ implies
\[
  r(k-r)+n-1\leq r(n-r)-(n/2+1),
\]
because $r(k-r)+n-1\leq r(n/2-r)+n-1=r(n-r)-r n/2+n-1\leq r(n-r)-3 n/2+n-1$.
Using the above inequalities along with $\qbinom{k}{r}_q < 4q^{r(k-r)}$ (see \cite[Lem.~4]{KoKsch08}) and  \cref{L-ExpDiffPos}
we compute
\[
  n\qbinom{k}{r}_q \frac{q^n-1}{q-1}<4ncq^{r(k-r)+n-1}< \frac{4nc}{q^{n/2+1}}q^{\binom{r}{2}(s-1)}\prod_{i=0}^{r-1}\frac{q^{n-is}-1}{q^{r-i}-1}.
\]
Finally, one easily checks that $\frac{4nc}{q^{n/2+1}}=\frac{4n}{q^{n/2}(q-1)}\leq 1$ for $q\geq 3$ and $n\geq4$ as well as $q=2$ and $n\geq 11$.
For the remaining cases ($q=2$ and $n=6,\ldots,10$) Inequality~\eqref{e-MainIneq} can be verified directly.
\\
(b)
Let $r=2$. In this case the desired inequality is equivalent to 
\[
       Q:=(q-1)(q^n-q^s)-(q^k-1)(q^k-q)>0.
\]
Since~$Q$ decreases with increasing~$s$ or~$k$, we may lower bound~$Q$ by using $s=k=n/2$ (ignoring that this may not be an integer).
This leads to
\begin{align*}
   Q&\geq(q-1)(q^n-q^{n/2})-(q^{n/2}-1)(q^{n/2}-q)
     =\big((q-1)q^{n/2}-(q^{n/2}-q)\big)(q^{n/2}-1)\\
    &=\big((q-2)q^{n/2}+q\big)(q^{n/2}-1)>0,
\end{align*}
as desired. 
\end{proof}

%%%%%%%%%%%%%%%%
\begin{lemma}\label{L-GLhatSize}
Let $s,\,t\in\N$ such that $s\mmid t\mmid n$ and $s\neq t$. Then
\[
  \big|\GLhat\big|>\big|N_{\GL_n(q)}(\GL_{n/t}(q^t))\big|.
\]
\end{lemma}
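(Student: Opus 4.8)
The plan is to compare the two group orders by explicit formulas and reduce the inequality to an elementary estimate on $q$-powers. First I would record the order of the extension-field subgroup: since $\GL_{n/s}(q^s)$ is the general linear group of an $(n/s)$-dimensional vector space over $\F_{q^s}$, we have
\[
   \big|\GL_{n/s}(q^s)\big| = \prod_{i=0}^{n/s-1}\big(q^n - q^{si}\big) = q^{\binom{n/s}{2}s}\prod_{j=1}^{n/s}\big(q^{sj}-1\big),
\]
and likewise for $t$ in place of $s$. By \cref{T-SingerNorm}(d), $N_{\GL_n(q)}(\GL_{n/t}(q^t))\cong\Gal(\F_{q^t}\mmid\F_q)\rtimes\GL_{n/t}(q^t)$, so its order is exactly $t\cdot\big|\GL_{n/t}(q^t)\big|$. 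Thus the claimed inequality becomes
\[
   \big|\GL_{n/s}(q^s)\big| > t\,\big|\GL_{n/t}(q^t)\big|.
\]

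The next step is to exploit that $s\mmid t\mmid n$ with $s\neq t$, so $s\leq t/2$ when compared multiplicatively; more precisely $n/s$ is a proper multiple of $n/t$, hence $n/s\geq 2(n/t)$ is false in general but $n/s > n/t$ always holds and, since $t/s\geq 2$, in fact $n/s \geq 2n/t$ whenever $t/s\geq 2$, which is automatic. The cleanest route is to observe that $\big|\GL_{n/t}(q^t)\big|$ divides $\big|\GL_{n/s}(q^s)\big|$ — because an $\F_{q^t}$-linear automorphism is in particular $\F_{q^s}$-linear, so $\GL_{n/t}(q^t)\leq\GL_{n/s}(q^s)$ as subgroups of $\GL_n(q)$ — and therefore the quotient $\big|\GL_{n/s}(q^s)\big| \big/ \big|\GL_{n/t}(q^t)\big|$ is an integer, namely the index $[\GL_{n/s}(q^s):\GL_{n/t}(q^t)]$. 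So it suffices to show this index exceeds $t$. I would bound the index below by a single large factor: picking one basis vector, the number of ways to extend it differs, and a crude bound gives index at least $q^{n-t} \geq q^{n/2}$ (using $t\leq n/2$, which holds since $t$ is a proper divisor of $n$ — wait, $t$ could equal $n$; in that case $\GL_{n/t}(q^t)=\Fqn^*$ has order $q^n-1$ and $\GL_{n/s}(q^s)$ is strictly larger, with index at least $(q^n-q^s)/(q^n-1)\cdot(\text{more factors})$, handled separately).

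A uniform and less case-heavy approach, which I would prefer to write up: directly estimate
\[
   \frac{\big|\GL_{n/s}(q^s)\big|}{\big|\GL_{n/t}(q^t)\big|}
   = \frac{\prod_{i=0}^{n/s-1}(q^n-q^{si})}{\prod_{i=0}^{n/t-1}(q^n-q^{ti})}.
\]
Since $n/s > n/t$, the numerator has strictly more factors, each of the form $q^n - q^{\ell}$ with $\ell < n$, hence each at least $q^n - q^{n-1} = q^{n-1}(q-1) \geq q^{n-1}$. Matching the $n/t$ largest factors of the numerator against the denominator (each numerator factor $q^n-q^{si}\geq q^n-q^{ti'}$ appropriately, by monotonicity) leaves at least one unmatched numerator factor of size $\geq q^{n-1}(q-1)$, so the ratio is at least $q^{n-1}(q-1) \geq q^{n-1} \geq t$, the last step because $t\leq n < q^{n-1}$ for all $q\geq 2$, $n\geq 2$ (and for $n=2$ one has $t\leq 2 < q$ when $q\geq3$, while $q=2,n=2$ forces $s=1,t=2$, a direct check). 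The main obstacle I anticipate is making the "matching of factors" argument rigorous — one must verify that the multiset of denominator factors $\{q^n-q^{ti} : 0\leq i < n/t\}$ is, term by term under a suitable injection, dominated by a sub-multiset of the numerator factors, which follows from $s\mmid t$ (so every exponent $ti$ appearing in the denominator also appears as an exponent $s\cdot(ti/s)$ in the numerator) — and then handling the small sporadic cases $q=2$ with $n$ small by direct computation.
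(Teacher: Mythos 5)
Your proof is correct and takes essentially the same route as the paper's: both write the group orders as products $\prod_i(q^n-q^{si})$ and $t\prod_i(q^n-q^{ti})$, use $s\mmid t$ to match every factor of the second product with an identical factor of the first, and then bound a leftover numerator factor from below (the paper singles out $q^n-q^s$ and checks $q^n-q^{n/2}>n\geq t$; you use the generic bound $q^{n-1}(q-1)>n\geq t$). One small caveat: your sporadic case $(q,n)=(2,2)$ is in fact a counterexample to the inequality as literally stated (both sides equal $6$), but it is excluded by the paper's standing assumption $n\geq 4$.
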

%%%%%%%%%%%%%%%%%

\begin{proof}
Set $\hat{q}=q^s,\,\hat{n}=n/s$ and let $sa=t$. 
Then $\big|\GLhat\big|=\prod_{i=0}^{\hat{n}-1}(\hat{q}^{\hat{n}}-\hat{q}^i)$ and from \cref{T-SingerNorm}(d) we know that
\[
    \big|N_{\GL_n(q)}(\GL_{n/t}(q^t))\big|=t\prod_{i=0}^{\hat{n}/a-1}((\hat{q}^a)^{\hat{n}/a}-(\hat{q}^a)^i)
    \leq n\prod_{i=0}^{\hat{n}/a-1}(\hat{q}^{\hat{n}}-\hat{q}^{ai}).
\]
Clearly, all factors in the product on the right hand side appear in $|\GLhat|$.
Furthermore, since $a>1$, the factor $\hat{q}^{\hat{n}}-\hat{q}=q^n-q^s$ of $|\GLhat|$ does not appear in $|N_{\GL_n(q)}(\GL_{n/t}(q^t))|$.
Hence the desired inequality follows if we can show that $q^n-q^s>n$. 
Since $s\neq n$ and~$s$ is a divisor of~$n$, we have $q^n-q^s-n\geq q^n-q^{n/2}-n$. 
One easily verifies that the function $f(x)=q^x-q^{x/2}-x$ is indeed positive on $[4,\infty)$. 
This concludes the proof. 
\end{proof}

\bigskip

Now we are ready to prove our main result.

\noindent\emph{Proof of \cref{T-AutOrb}.}
Let~$S,\,s,\,\cU$ be as in the theorem.
Set $\cUhat=\spann_{\Fqs}(\cU)$ as in \cref{D-Uhat}.
Then $1\leq\delta_s(\cU)\leq\dim_{\F_q}(\cU)$ and
\begin{equation}\label{e-UUhat}
      \cU\subseteq\Fqs\Longleftrightarrow1=\delta_s(\cU)\Longleftrightarrow\cUhat=\Fqs,
\end{equation}
where the last equivalence follows from the fact that $1\in\cU$.
Since $|S|=q^n-1$ and $\F_q^*$ stabilizes~$\cU$, we have $|\Orb_S(\cU)|\leq(q^n-1)/(q-1)$ by the
orbit-stabilizer theorem (see also \cref{R-SingerOrbSize}).
Moreover, 
since $S\leq\GLhat$, we have 
\begin{equation}\label{e-OrbComp}
  \Orb_S(\cU)\subseteq\Orb_{\GLhat}(\cU) \ \text{ with equality iff }\ \GLhat\leq\Aut(\Orb_S(\cU)).
\end{equation}
We now prove the equivalence~\eqref{e-UFqs}. 
\\
``$\Longrightarrow$''
Let $\cU \subseteq \Fqs$, thus $\cUhat=\Fqs$.
Since $1\in\cU$  we obtain $\psi(\cU)=\{u\cdot \psi(1)\mid u\in\cU\}$ for every $\psi\in\GLhat$. 
Hence $\psi(\cU)$ is the cyclic shift $\psi(1)\cU$ and thus contained in  $\Orb_S(U)$. 
This shows $\Orb_{\GLhat}(\cU) \subseteq \Orb_S(\cU)$ and~\eqref{e-OrbComp} implies the desired result.
\\
``$\Longleftarrow$''
Suppose $\cU \not\subseteq\Fqs$. 
Then  $r:=\delta_s(\cU)\geq2$ by~\eqref{e-UUhat}.
\cref{P-dimUhat} and \cref{L-GLhatOrbitB} imply
\[ 
    \big|\Orb_{\GLhat}(\cU)\big| \geq\frac{ q^{\binom{r}{2}(s-1)}}{\qbinom{k}{r}_q}\prod_{i=0}^{r-1} \frac{q^{n-is}-1}{q^{r-i}-1}
    >\frac{q^n-1}{q-1}\geq|\Orb_S(\cU)|.
\]
\eqref{e-OrbComp} implies $\GLhat\not\leq\Aut(\Orb_S(\cU))$. 

We now turn to the remaining statements of \cref{T-AutOrb}.
Let $\Fqs$ be the smallest subfield containing~$\cU$. We want to show that $\GLhat$ is normal in $\Aut(\Orb_S(\cU))$.
To this end set $\cT=\big\{t\in\N\,\big|\, s\mmid t\mmid n\big\}$.
Clearly $\GL_{n/t}(q^t)\leq\GLhat$ for all $t\in\cT$.
From~\eqref{e-UFqs} we conclude that for any $t\in\N$ 
\[
    \GL_{n/t}(q^t)\leq\Aut(\Orb_S(\cU))\Longleftrightarrow t\in\cT.
\]
Furthermore, thanks to \cref{T-SingerNorm}(c) one of the subgroups $\GL_{n/t}(q^t),\,t\in\cT$, is normal in $\Aut(\Orb_S(\cU))$.
Suppose $\GL_{n/t}(q^t)$ is normal in $\Aut(\Orb_S(\cU))$ for some $t\in\cT\setminus\{s\}$.
Then $\Aut(\Orb_S(\cU))\leq N_{\GL_n(q)}(\GL_{n/t}(q^t))$.
Now, \cref{L-GLhatSize} along with $\GLhat\leq\Aut(\Orb_S(\cU))$ leads to a contradiction. 
Thus $\GLhat$ is the only extension-field subgroup that is normal in $\Aut(\Orb_S(\cU))$.
The rest of the theorem follows. 
\hfill$\square$

%%%%%%%%%%%%%%%%%%%%%%%%
\section{The Automorphism Groups of Orbits under the Singer Normalizer}\label{S-AutGroupSingerNorm}

The considerations of the previous sections allow us to also describe the automorphism group of orbits under the 
normalizer of the Singer subgroup in most cases.
Recall the notation in~\eqref{e-S&N-orbits}.
The following theorem is analogous to \cref{T-AutOrb}, but needs the assumption $\delta_s(\cU)\neq2$. 
We will deal with the case $\delta_s(\cU)=2$ afterwards.

Throughout this section, let $N:=N_{\GL_n(q)}(\Fqn^*)$, i.~e.,~$N$ is the normalizer of~$\Fqn^*$.
Recall also that we assume  $2\leq k\leq n/2$.

%%%%%%%%%%%%%%%%
\begin{theo}\label{C-AutOrbNormalizer}
Let~$s$ be a divisor of~$n$ and $\cU\in\Gkn$ be such that $1\in\cU$ and such that  $\delta_s(\cU)\neq2$.
Then
\begin{equation}\label{e-UFqsNorm}
    \cU\subseteq\Fqs\Longleftrightarrow\GLhat\leq\Aut(\Orb_N(\cU)).
\end{equation}
Moreover, if~$\Fqs$ is the smallest subfield containing~$ \cU$ and $\delta_t(\cU)\neq2$ for all divisors~$t$ of~$n$, then
$\GLhat$ is normal in $\Aut(\Orb_N(\cU))$ and thus 
$ \Aut(\Orb_N(\cU))\leq N_{\GL_n(q)}(\GLhat)$. 
As a consequence:
\begin{alphalist}
\item If $\cU$ is generic and $\delta_t(\cU)\neq2$ for all divisors~$t$ of~$n$, then $\Aut(\Orb_N(\cU))= N$;
\item If $\Aut(\Orb_N(\cU))= N$, then~$\cU$ is generic.
\end{alphalist}
\end{theo}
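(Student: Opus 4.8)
The plan is to mirror the structure of the proof of \cref{T-AutOrb}, replacing the Singer orbit by the normalizer orbit throughout and keeping track of the extra factor~$n$ that $\Orb_N$ introduces. First I would record the analogue of~\eqref{e-OrbComp}: since $S=\Fqn^*\leq N$ and $S\leq\GLhat$, but the group generated by~$N$ and $\GLhat$ need not be $\GLhat$ itself, I instead use that $\Orb_N(\cU)$ is a union of at most~$n$ cyclic orbits $\Orb_S(\cU^{[i]})$ (see~\eqref{e-S&N-orbits}), each of size at most $(q^n-1)/(q-1)$ by \cref{R-SingerOrbSize}, so $|\Orb_N(\cU)|\leq n(q^n-1)/(q-1)$. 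Next I would prove~\eqref{e-UFqsNorm}. The direction ``$\Longrightarrow$'' is the same as in \cref{T-AutOrb}: if $\cU\subseteq\Fqs$ then for $\psi\in\GLhat$ we get $\psi(\cU)=\psi(1)\cU\in\Orb_S(\cU)\subseteq\Orb_N(\cU)$, using $1\in\cU$; hence $\GLhat\leq\Aut(\Orb_N(\cU))$ via~\eqref{e-GHOrb}. For ``$\Longleftarrow$'', assume $\cU\not\subseteq\Fqs$; then $r:=\delta_s(\cU)\geq2$, and by hypothesis $r\neq2$, so $r\geq3$. Now \cref{P-dimUhat} together with part~(a) of \cref{L-GLhatOrbitB} — which is exactly the version carrying the factor~$n$, and is the reason that lemma was stated in two forms — gives
\[
   \big|\Orb_{\GLhat}(\cU)\big|\geq\frac{q^{\binom{r}{2}(s-1)}}{\qbinom{k}{r}_q}\prod_{i=0}^{r-1}\frac{q^{n-is}-1}{q^{r-i}-1}>n\,\frac{q^n-1}{q-1}\geq|\Orb_N(\cU)|.
\]
Since $\GLhat\leq\Aut(\Orb_N(\cU))$ would force $\Orb_{\GLhat}(\cU)\subseteq\Orb_N(\cU)$, this is a contradiction, so $\GLhat\not\leq\Aut(\Orb_N(\cU))$, proving~\eqref{e-UFqsNorm}.

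For the normality statement, suppose now $\Fqs$ is the smallest subfield containing~$\cU$ and $\delta_t(\cU)\neq2$ for every divisor~$t$ of~$n$. Note first that $N\leq\Aut(\Orb_N(\cU))$ trivially, and $S\leq N$, so by \cref{T-SingerNorm}(c) there is a divisor~$t$ of~$n$ with $\GL_{n/t}(q^t)\unlhd\Aut(\Orb_N(\cU))$. As in the proof of \cref{T-AutOrb}, set $\cT=\{t\in\N\mid s\mmid t\mmid n\}$; by~\eqref{e-UFqsNorm} (applied with~$t$ in place of~$s$, which is legitimate since $\delta_t(\cU)\neq2$ for all such~$t$) we have $\GL_{n/t}(q^t)\leq\Aut(\Orb_N(\cU))$ iff $t\in\cT$, so the normal one lies in~$\cT$. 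If it were some $t\in\cT\setminus\{s\}$, then $\Aut(\Orb_N(\cU))\leq N_{\GL_n(q)}(\GL_{n/t}(q^t))$, which by \cref{L-GLhatSize} has order strictly less than $|\GLhat|\leq|\Aut(\Orb_N(\cU))|$ — a contradiction. Hence $\GLhat$ is the unique extension-field subgroup normal in $\Aut(\Orb_N(\cU))$, and in particular $\Aut(\Orb_N(\cU))\leq N_{\GL_n(q)}(\GLhat)$.

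Finally I would deduce (a) and (b). For (a): if $\cU$ is generic, the smallest subfield containing~$\cU$ is $\Fqn$ itself, i.e.\ $s=n$, so $\GLhat=\GL_1(q^n)=\Fqn^*=S$; the normality statement just proved gives $\Aut(\Orb_N(\cU))\leq N_{\GL_n(q)}(S)=N$, and since $N\leq\Aut(\Orb_N(\cU))$ always, equality holds. For (b): contrapositively, if $\cU$ is not generic then $\cU\subseteq\Fqs$ for some proper divisor $s\mmid n$, $s<n$; by the ``$\Longrightarrow$'' direction of~\eqref{e-UFqsNorm} (which needs no hypothesis on $\delta_s$) we get $\GLhat\leq\Aut(\Orb_N(\cU))$, and $\GLhat\not\leq N$ since $\GLhat$ contains non-Singer elements (e.g.\ $\F_{q^s}$-linear maps that are not $\Fqn$-linear, as $\dim_{\Fqs}\Fqn=n/s\geq2$), while $N$'s only subgroups of the relevant form are generated by $S$ and the Frobenius; concretely $|\GLhat|>|N|=n(q^n-1)$ once $n/s\geq2$, so $\GLhat\not\subseteq N$ and hence $\Aut(\Orb_N(\cU))\supsetneq N$. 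The main obstacle I anticipate is purely bookkeeping in the ``$\Longleftarrow$'' step: one must be sure that the factor-$n$ version of \cref{L-GLhatOrbitB} is only needed when $r\geq3$ (which is guaranteed precisely by the hypothesis $\delta_s(\cU)\neq2$), and that the chain of inequalities $|\Orb_{\GLhat}(\cU)|>n(q^n-1)/(q-1)\geq|\Orb_N(\cU)|$ is valid in all parameter ranges — but this is exactly what \cref{R-SingerOrbSize} and \cref{L-GLhatOrbitB}(a) have been set up to deliver, so no new estimates are required.
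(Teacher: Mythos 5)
Your proposal is correct and follows essentially the same route as the paper: the forward direction of \eqref{e-UFqsNorm} via the cyclic-shift argument $\psi(\cU)=\psi(1)\cU$, the converse via the orbit-size comparison $|\Orb_{\GLhat}(\cU)|>n(q^n-1)/(q-1)\geq|\Orb_N(\cU)|$ using \cref{P-dimUhat} and \cref{L-GLhatOrbitB}(a) (which is exactly where $\delta_s(\cU)\neq2$ forces $r\geq3$), and the normality/uniqueness argument via \cref{T-SingerNorm}(c) and \cref{L-GLhatSize}. You also correctly identify that the hypothesis $\delta_t(\cU)\neq2$ for all divisors $t$ is needed precisely to apply \eqref{e-UFqsNorm} with $t$ in place of $s$, and that the forward direction alone suffices for part (b).
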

%%%%%%%%%%%%%%%%

Note that the left hand side of~\eqref{e-UFqsNorm} means that $\delta_s(\cU)=1$. 
Hence the excluded case $\delta_s(\cU)=2$ may be regarded as a transitional case, and we will see below that in that case
either is possible: $\GLhat\leq\Aut(\Orb_N(\cU))$ or $\GLhat\not\leq\Aut(\Orb_N(\cU))$.

\begin{proof}
For ``$\Longrightarrow$'' of \eqref{e-UFqsNorm} recall that $\Orb_N(\cU)=\bigcup_{i=0}^{n-1}\Orb_{\Fqn^*}(\cU^{[i]})$, 
see~\eqref{e-S&N-orbits}.
Since $1\in\cU^{[i]}$ and $\cU^{[i]}\subseteq\Fqs$ for all~$i$, the desired statement follows from \cref{T-AutOrb}.
\\
``$\Longleftarrow$'' The proof is similar to the one of \cref{T-AutOrb}.
Suppose $\cU \not\subseteq\Fqs$. 
Thanks to our assumption this implies  $\delta_s(\cU)=:r\geq3$.
Thus \cref{P-dimUhat}, \cref{L-GLhatOrbitB}(a), and \cref{R-SingerOrbSize} lead to
\[ 
    \big|\Orb_{\GLhat}(\cU)\big| \geq\frac{ q^{\binom{r}{2}(s-1)}}{\qbinom{k}{r}_q}\prod_{i=0}^{r-1} \frac{q^{n-is}-1}{q^{r-i}-1}
    >n\frac{q^n-1}{q-1}\geq|\Orb_N(\cU)|,
\]
and therefore $\GLhat\not\leq\Aut(\Orb_N(\cU))$. 
The rest of the proof is identical to the one for \cref{T-AutOrb}. 
For Part~(b) notice that ``$\Rightarrow$'' of \eqref{e-UFqsNorm} holds true for general $\delta_s(\cU)$.
\end{proof}

We now turn to the remaining case  $r=\delta_s(\cU)=2$. 
In this case there are indeed instances where $\GLhat\leq\Aut(\Orb_N(\cU))$ even though $\cU\not\subseteq\Fqs$.
Clearly, this containment is equivalent to $\Orb_{\GLhat}(\cU)\subseteq\Orb_N(\cU)$.
In all known examples we even have $\Orb_{\GLhat}(\cU)=\Orb_N(\cU)$.
In fact, we believe that we have $\Orb_N(\cU)\subseteq\Orb_{\GLhat}(\cU)$ for all subspaces~$\cU$ 
(i.e., $\cU^q=\phi(\cU)$ for some $\phi\in\GLhat$), but unfortunately we are not able at this point to prove this statement.

%%%%%%%%%%%%%%%%%%%
\begin{exa}\label{E-SmallNormSingerGLhatOrb}
Let $(q,n,k,s)=(2,4,2,2)$.
A $2$-dimensional subspace $\cU\leq\F_{2^4}$ with $\delta_2(\cU)=2$ is of the form $\cU=\spann_{\F_2}\{1,\alpha\}$ for some 
$\alpha\in\F_{2^4}\setminus\F_{2^2}$.
One can directly verify (using, e.g., SageMath) that all these subspaces generate the same $N$-orbit, and this orbit agrees with 
the $\GL_2(4)$-orbit.
The orbit size is $n/2(2^n-1)/(2-1)=30$ (see also \cref{P-UFaFa} below). 
\end{exa}
%%%%%%%%%%%%%%%%%%%%

%%%%%%%%%%%%%%%%%%%
\begin{exa}\label{E-NormSingerGLhatOrb}
Let $(q,n,k,s)=(2,8,4,4)$.
Let $\alpha\in\F_{2^8}\setminus\F_{2^4}$ and consider the subspace $\cU=\spann_{\F_{2^2}}\{1,\alpha\}$.
Then $\cU\not\subseteq\F_{2^4}$, hence $\delta_4(\cU)=2$, and one straightforwardly verifies that
$\Orb_{\GL_2(2^4)}(\cU)=\Orb_N(\cU)$, and the orbit has size $340$ (for comparison, the lower bound from \cref{P-dimUhat} is 292).
These observations can also be seen as follows. Let $S=\F_{2^8}^*$.
\begin{romanlist}
\item By \cref{R-SingerOrbSize} the Singer orbit has size $|\Orb_{S}(\cU)|=(2^n-1)/(2^2-1)=85$.
\item As \cref{P-UFaFa} below shows, $\cU^{[4]}\in\Orb_S(\cU)$; thus $\sigma^4$ stabilizes $\Orb_S(\cU)$,  where~$\sigma$ is the 
       Frobenius automorphism.
        Furthermore, no other non-trivial element of the Galois group $\Gal(\F_{2^8}\mmid \F_2)$ stabilizes $\Orb_S(\cU)$ (this is true for these specific parameters, but not in the general situation of \cref{P-UFaFa}).
         Together with~(i) this shows that $|\Orb_N(\cU)|= 4\!\cdot\!85=340$.
\item Since $\cU=\spann_{\F_{2^2}}\{1,\alpha\}$ and $\F_{2^2}\subseteq\F_{2^4}$, an $\F_{2^4}$-linear isomorphism~$\phi$ maps~$\cU$ 
        to the space $\spann_{\F_{2^2}}\{\phi(1),\phi(\alpha)\}$.         
        As a consequence, $\Orb_{\GL_2(2^4)}(\cU)$ consists of all subspaces in $\F_{2^8}$ that are $2$-dimensional over~$\F_{2^2}$ and 
        not $1$-dimensional over~$\F_{2^4}$, i.e., not a cyclic shift of $\F_{2^4}$. 
        Thus $|\Orb_{\GL_2(2^4)}(\cU)|=\qbinom{4}{2}_{4}-|\Orb_S(\F_{2^4})|=\qbinom{4}{2}_{4}-(2^8-1)/(2^4-1)=340$.
\item Finally, $\Orb_N(\cU)\subseteq \Orb_{\GL_2(2^4)}(\cU)$. 
        To see this, it suffices to show that  $\cU^{[i]}\in\Orb_{\GL_2(2^4)}(\cU)$ for all $i\in\{0,\ldots,n-1\}$.
       Note that $\cU^{[i]}=\spann_{\F_{2^2}}\{1,\alpha^{[i]}\}$.
       Since~$1$ and $\alpha^{[i]}$ are $\F_{2^4}$-linearly independent, there exists  $\phi\in\GL_2(2^4)$ such that $\phi(1)=1$ and 
       $\phi(\alpha)=\alpha^{[i]}$. Hence $\cU^{[i]}=\phi(\cU)\in\Orb_{\GL_2(2^4)}(\cU)$.
\end{romanlist}
We wish to add that all subspaces of the form $\spann_{\F_{2^2}}\{1,\alpha\}$ with $\alpha\in\F_{2^8}\setminus\F_{2^4}$ generate the same 
orbit, and this is the only $N$-orbit of a $4$-dimensional subspace that coincides with the $\GLhat$-orbit.
Finally, since~$\cU$ is actually an $\F_4$-vector space and $\F_{2^8}=\F_{4^4}$, we may regard all of this also as an 
example for the parameters $(q,n,k,s)=(4,4,2,2)$.
Thus $\Orb_{\GL_2(4^2)}(\cU)=\Orb_{N'}(\cU)$, where $N'=N_{\GL_4(4)}(\F_{4^4}^*)$.
\end{exa}
%%%%%%%%%%%%%%%%%%%

The subspaces~$\cU$ in the above examples are both of the form $\cU=\spann_{\Fqa}\{1,\alpha\}\subseteq\F_{q^n}$,
where $a=n/4$ and $s=k=n/2$ and $\cU\not\subseteq\Fqs$.
In \cref{C-UFaFa} below we will show that for no other subspaces of this type the $\GLhat$-orbit coincides with the $N$-orbit.
We start with showing that  all such subspaces~$\cU$  satisfy $\cU^{[s]}\in\Orb_{\Fqn^*}(\cU)$. 

%%%%%%%%%%%%%%%%
\begin{prop}\label{P-UFaFa}
Let $a\in\N,\,n=4a$, and $s=k=2a$. Choose $\alpha\in\Fqn\setminus\Fqs$ and set
$\cU=\spann_{\Fqa}\{1,\alpha\}\subseteq\F_{q^n}$.
Then $\cU^{[s]}\in\Orb_{\Fqn^*}(\cU)$. Thus 
\[
        |\Orb_N(\cU)|\leq \frac{n}{2}\,\frac{q^n-1}{q^a-1}.
\]
\end{prop}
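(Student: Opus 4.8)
The plan is to reduce the proposition to producing a single scalar $\beta\in\Fqn^*$ with $\beta\cU=\cU^{[s]}$, and then to extract the orbit bound from this. Write $L=\Fqn$, $M=\Fqs$ and $F=\Fqa$; since $n=4a$ and $s=2a$ these form a tower $F\subset M\subset L$ of quadratic extensions, in particular $[L:M]=2$. The map $x\mapsto x^{q^s}$ is the nontrivial element $\varphi$ of $\Gal(L\mmid M)$; it fixes $M$, hence $F$, pointwise, so it is $F$-linear and $\cU^{[s]}=\varphi(\cU)=\spann_F\{1,\varphi(\alpha)\}$. Thus, once we produce $\beta\in L^*$ with $\beta\cU=\cU^{[s]}$, the first assertion $\cU^{[s]}\in\Orb_{\Fqn^*}(\cU)$ follows, because $\beta\cU$ lies in the Singer orbit of $\cU$.

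To construct $\beta$, I would bring in the quadratic relations of $\alpha$ over $M$. Because $\alpha\notin M$ and $[L:M]=2$, the pair $\{1,\alpha\}$ is an $M$-basis of $L$ and the minimal polynomial of $\alpha$ over $M$ equals $X^2-TX+N$, where $T:=\alpha+\varphi(\alpha)\in M$ and $N:=\alpha\,\varphi(\alpha)\in M$; equivalently $\varphi(\alpha)=T-\alpha$ and $\alpha^2=T\alpha-N$. The only step that is not bookkeeping is this: the three elements $1,T,N$ lie in the $2$-dimensional $F$-vector space $M$, so they are $F$-linearly dependent, and therefore there exist $u,v\in F$, not both zero, with $uT+vN\in F$. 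With such a choice, set
\[
   \beta:=u+v\,\varphi(\alpha)=(u+vT)-v\alpha .
\]
From the right-hand expression (written in the $M$-basis $\{1,\alpha\}$) one sees $\beta\neq0$: the $\alpha$-coordinate is $-v$, so $\beta\neq0$ unless $v=0$, in which case $u\neq0$ and $\beta=u$. Moreover $\beta\in\spann_F\{1,\varphi(\alpha)\}=\cU^{[s]}$, and the computation $\beta\alpha=u\alpha+vN=(uT+vN)-u\,\varphi(\alpha)$ — using $\varphi(\alpha)\alpha=N$ and $\alpha=T-\varphi(\alpha)$ — shows $\beta\alpha\in\cU^{[s]}$ as well, precisely because $uT+vN\in F$. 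Hence $\beta\cU=\spann_F\{\beta,\beta\alpha\}\subseteq\cU^{[s]}$; since multiplication by $\beta\neq0$ and $\varphi$ are $F$-linear automorphisms of $L$ and $\dim_F\cU=2$ (as $\alpha\notin\Fqa$), both sides have $F$-dimension $2$, so $\beta\cU=\cU^{[s]}$. I expect the construction of $\beta$ — locating the right linear-algebra fact \emph{over $F$} rather than over $M$, and then checking that both $\beta$ and $\beta\alpha$ land in $\cU^{[s]}$ — to be the one genuinely delicate point; everything else is immediate from the minimal-polynomial identities.

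For the cardinality bound, recall from~\eqref{e-S&N-orbits} that $\Orb_N(\cU)=\bigcup_{i=0}^{n-1}\Orb_{\Fqn^*}(\cU^{[i]})$. Denoting by $\sigma$ the Frobenius $x\mapsto x^q$ of $\Fqn$ (of order $n$), we have $\cU^{[i]}=\sigma^i(\cU)$, and applying the field automorphism $\sigma^i$ to a typical element $\gamma\cU$ of the Singer orbit gives $\Orb_{\Fqn^*}(\cU^{[i]})=\sigma^i\big(\Orb_{\Fqn^*}(\cU)\big)$. By the first part, $\sigma^s(\cU)=\cU^{[s]}\in\Orb_{\Fqn^*}(\cU)$, hence $\sigma^s\big(\Orb_{\Fqn^*}(\cU)\big)=\Orb_{\Fqn^*}(\cU)$; so the sets $\sigma^i\big(\Orb_{\Fqn^*}(\cU)\big)$ depend only on $i$ modulo $s$, and since $s=n/2$ the union above runs over at most $n/2$ of them, each of cardinality $|\Orb_{\Fqn^*}(\cU)|$. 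Finally $\cU$ is closed under multiplication by $\Fqa^*$, so $\Fqa^*\leq\Stab_{\Fqn^*}(\cU)$ and therefore $|\Orb_{\Fqn^*}(\cU)|\leq(q^n-1)/(q^a-1)$ (in fact an equality by \cref{R-SingerOrbSize}). Combining, $|\Orb_N(\cU)|\leq\frac n2\cdot\frac{q^n-1}{q^a-1}$.
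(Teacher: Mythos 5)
Your proof is correct. At its core it rests on the same algebraic fact as the paper's proof --- a nontrivial $\Fqa$-linear relation $\lambda+uT+vN=0$ among $1$, $T=\alpha+\alpha^{[s]}$ and $N=\alpha\alpha^{[s]}$ (this is exactly the paper's equation \eqref{e-lincombo} with $\mu=u$, $\nu=v$) --- but you reach it and exploit it differently. The paper produces the relation constructively, by solving a $2\times3$ system over $\Fqn$, checking that the solution is fixed by $\sigma^a$ and hence lies in $\Fqa$, and treating the degenerate case $\alpha-\alpha^{[a]}+\alpha^{[2a]}-\alpha^{[3a]}=0$ separately; it then converts the resulting nonzero intersection $\alpha^{-[s]}\cU\cap\cU\neq\{0\}$ into the multiplier by extracting a Frobenius square root $\gamma$ with $\gamma^{[s]}=\delta$. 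You instead get the relation for free by pigeonhole ($1,T,N$ are three vectors in the $2$-dimensional $\Fqa$-space $\Fqs$) and use it to write down $\beta=u+v\alpha^{[s]}\in\cU^{[s]}$ directly, verifying $\beta\alpha\in\cU^{[s]}$ from the minimal-polynomial identities. Your route is shorter, avoids the case split and the explicit quotient, and makes transparent where the hypothesis $s=2a=n/2$ enters (namely $[\Fqs:\Fqa]=2$ and the $\Fqa$-linearity of $x\mapsto x^{[s]}$); the paper's route has the mild advantage of exhibiting the coefficients explicitly. Your derivation of the cardinality bound matches the paper's, and is in fact slightly more careful in that it only invokes the inequality $|\Orb_{\Fqn^*}(\cU)|\leq(q^n-1)/(q^a-1)$ from the stabilizer rather than the exact orbit length.
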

%%%%%%%%%%%%%%%%%

\begin{proof}
By \cref{R-SingerOrbSize} we have $|\Orb_{\Fqn^*}(\cU)|=(q^n-1)/(q^a-1)$. Thus the second statement follows once we establish 
$\cU^{[s]}\in\Orb_{\Fqn^*}(\cU)$.
To do so we proceed as follows.
\\
1) We show first that
\begin{equation}\label{e-alphasU}
     \alpha^{[s]}\cU\cap\cU\neq\{0\}.
\end{equation}
Since both $\cU$ and $\alpha^{[s]}\cU=\spann_{\Fqa}\{\alpha^{[s]},\alpha\alpha^{[s]}\}$ have dimension 
$2a=n/2$, \eqref{e-alphasU} is equivalent to $\alpha^{[s]}\cU+\cU\neq\Fqn$. Hence we have to show that 
$1,\,\alpha,\,\alpha^{[s]},\,\alpha\alpha^{[s]}$ are linearly dependent over $\Fqa$.
We show that there exist $\lambda,\mu,\nu\in\Fqa$ such that 
\begin{equation}\label{e-lincombo}
     \lambda+\mu\alpha+\mu\alpha^{[s]}+\nu\alpha\alpha^{[s]}=0.
\end{equation}
Raising~\eqref{e-lincombo} to the power $[a]$ and using $s=2a$ we obtain a second equation, which together with~\eqref{e-lincombo}
can be written as 
\begin{equation}\label{e-MatEq}
    \begin{pmatrix}1&\alpha+\alpha^{[2a]}& \alpha\alpha^{[2a]}\\ 1&\alpha^{[a]}+\alpha^{[3a]}& \alpha^{[a]}\alpha^{[3a]}\end{pmatrix}
    \begin{pmatrix}\lambda\\ \mu\\ \nu\end{pmatrix}=0.
\end{equation}
The matrix is row equivalent to
\[
    \begin{pmatrix}1&\alpha+\alpha^{[2a]}& \alpha\alpha^{[2a]}\\ 0&\alpha^{[a]}+\alpha^{[3a]}-\alpha-\alpha^{[2a]}& 
               \alpha^{[a]}\alpha^{[3a]}-\alpha\alpha^{[2a]}\end{pmatrix}.
\]    
Now we can find a solution of the desired form.
Suppose first that $\alpha-\alpha^{[a]}+\alpha^{[2a]}-\alpha^{[3a]}\neq0$. 
Set~$\nu=1$.
Then~\eqref{e-MatEq} has the unique (normalized) solution
\[
   \nu=1,\quad \mu=\frac{\alpha^{[a]}\alpha^{[3a]}-\alpha\alpha^{[2a]}}{\alpha-\alpha^{[a]}+\alpha^{[2a]}-\alpha^{[3a]}},\quad 
   \lambda=-\mu(\alpha+\alpha^{[2a]})-\alpha\alpha^{[2a]}.
\]
Using that $4a=n$, one easily verifies that $\mu^{[a]}=\mu$ and $\lambda=\lambda^{[a]}$, and thus $(\lambda,\mu,\nu)\in\Fqa^3$.
If \mbox{$\alpha\!-\!\alpha^{[a]}\!+\!\alpha^{[2a]}\!-\!\alpha^{[3a]}=0$}, \eqref{e-lincombo} has the solution $(\lambda,\mu,\nu)=(-(\alpha+\alpha^{[2a]}),1,0)$, which again is in $\Fqa^3$.
All of this establishes~\eqref{e-alphasU}.
\\
2) \eqref{e-alphasU} implies that also $\alpha^{-[s]}\cU\cap\cU\neq\{0\}$. 
Choose $\delta\in\alpha^{-[s]}\cU\cap\cU\setminus\{0\}$ and 
let $\gamma\in\Fqn^*$ be such that $\gamma^{[s]}=\delta$. 
Then $\gamma=\gamma^{[2s]}=\delta^{[s]}\in\cU^{[s]}$. Moreover, $ \gamma^{[s]}\alpha^{[s]}\in\cU$ and thus $\gamma\alpha\in\cU^{[s]}$.
All of this shows that $\gamma\cU=\spann_{\Fqa}\{\gamma,\gamma\alpha\}=\cU^{[s]}$.
Thus, $\cU^{[s]}\in\Orb_{\Fqn^*}(\cU)$, as desired. 
\end{proof}

%%%%%%%%%%%%%%
\begin{rem}\label{R-UFaFa}
\cref{P-UFaFa} only provides an upper bound for $|\Orb_N(\cU)|$.
In fact, there even exist subspaces~$\cU$ of the specified form for which $\cU^{[i]}\in \Orb_S(\cU)$ for all~$i$ and thus 
$\Orb_N(\cU)=\Orb_S(\cU)$; for instance for $q=3$ and $a=2$. 
On the other hand, for $q=2$ and $a=2$ we have equality in \cref{P-UFaFa} for all subspaces of the given form.
\end{rem}
%%%%%%%%%%%%%%%

%%%%%%%%%%%%%%%%%%
\begin{cor}\label{C-UFaFa}
Let the data be as in \cref{P-UFaFa}. Then 
\[
        \Orb_{\GLhat}(\cU)=\Orb_N(\cU) \Longleftrightarrow (q,a)\in\{(2,1),\,(2,2)\}.
\]
\end{cor}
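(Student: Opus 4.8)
The plan is to prove both implications by comparing the size of $\Orb_N(\cU)$ against the size of the extension-field orbit $\Orb_{\GLhat}(\cU)$, exactly in the spirit of the proof of \cref{T-AutOrb} and \cref{C-AutOrbNormalizer}. By \cref{C-AutOrbNormalizer} (or directly) one always has $\Orb_N(\cU)\subseteq\Orb_{\GLhat}(\cU)$ precisely because $S\leq\GLhat$ and $\cU^{[i]}\in\Orb_{\GLhat}(\cU)$ whenever $\cU\subseteq\Fqs$; here however $\cU\not\subseteq\Fqs$ in general, so we instead argue via cardinalities: equality of the two orbits is equivalent to $|\Orb_N(\cU)|=|\Orb_{\GLhat}(\cU)|$. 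Throughout, the parameters are $n=4a$, $s=k=2a$, $\delta_s(\cU)=2$ (since $\cU=\spann_{\Fqa}\{1,\alpha\}$ with $\alpha\notin\Fqs$, so $\widehat{\cU}=\spann_{\Fqs}\{1,\alpha\}$ is $2$-dimensional over $\Fqs$), and $\cU$ is an $\Fqa$-vector space so $|\Orb_{\Fqn^*}(\cU)|=(q^n-1)/(q^a-1)$ by \cref{R-SingerOrbSize}.

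First I would establish the ``$\Longleftarrow$'' direction for the two exceptional parameter pairs. For $(q,a)=(2,1)$ this is exactly \cref{E-SmallNormSingerGLhatOrb} (parameters $(2,4,2,2)$), and for $(q,a)=(2,2)$ it is \cref{E-NormSingerGLhatOrb} (parameters $(2,8,4,4)$); in both cases the cited examples verify $\Orb_{\GLhat}(\cU)=\Orb_N(\cU)$ directly, and the argument given there (every subspace $\spann_{\Fqa}\{1,\alpha\}$ with $\alpha\notin\Fqs$ lies in the single $\GLhat$-orbit of all $2$-dimensional-over-$\Fqa$ but not $1$-dimensional-over-$\Fqs$ subspaces, which has size $\qbinom{n/a}{2}_{q^a}-(q^n-1)/(q^s-1)$, and this equals the $N$-orbit size computed from \cref{P-UFaFa}) applies verbatim. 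So this direction needs no new work beyond pointing to the examples.

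For the ``$\Longrightarrow$'' direction I would assume $(q,a)\notin\{(2,1),(2,2)\}$ and show $|\Orb_N(\cU)|<|\Orb_{\GLhat}(\cU)|$, forcing strict inclusion. The upper bound $|\Orb_N(\cU)|\leq \tfrac{n}{2}\cdot\tfrac{q^n-1}{q^a-1}$ comes for free from \cref{P-UFaFa}. For the lower bound on $|\Orb_{\GLhat}(\cU)|$ I would use \cref{P-dimUhat} with $r=\delta_s(\cU)=2$, which gives
\[
    |\Orb_{\GLhat}(\cU)|\geq \frac{q^{(s-1)}}{\qbinom{k}{2}_q}\cdot\frac{(q^n-1)(q^{n-s}-1)}{(q^2-1)(q-1)}.
\]
Substituting $n=4a$, $s=k=2a$ and using $\qbinom{2a}{2}_q<4q^{2(2a-2)}=4q^{4a-4}$ (from \cite[Lem.~4]{KoKsch08}), together with $q^{n-s}-1=q^{2a}-1>q^{2a-1}(q-1)$ and $q^n-1>q^{n-1}(q-1)$ via inequality~\eqref{e-qab}, it reduces to an explicit comparison of exponents of $q$ times a small rational constant. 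I would then check that the resulting inequality $|\Orb_{\GLhat}(\cU)|>\tfrac{n}{2}\cdot\tfrac{q^n-1}{q^a-1}=2a\cdot\tfrac{q^{4a}-1}{q^a-1}$ holds for all $(q,a)$ with $a\geq1$, $q\geq2$ except the two listed pairs — cleanly for $q\geq3$ and for $q=2$ with $a\geq3$ by the exponent estimate, and by direct numerical verification for any finitely many borderline small cases that the crude bound does not cover.

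The main obstacle I expect is the last step: the crude bounds $\qbinom{k}{r}_q<4q^{r(k-r)}$ and the geometric-sum estimates lose a bounded multiplicative factor, so near the boundary of the parameter range (small $q$, small $a$) the coarse inequality may fail even though the true orbit sizes already separate. The fix is the one used repeatedly in the paper: isolate the finitely many small pairs $(q,a)$ where the estimate is inconclusive, and verify the inequality $|\Orb_{\GLhat}(\cU)|>\tfrac{n}{2}(q^n-1)/(q^a-1)$ for them by direct computation, confirming that exactly $(2,1)$ and $(2,2)$ survive. One should also double-check that for $(q,a)=(3,2)$, where \cref{R-UFaFa} notes $\Orb_N(\cU)=\Orb_S(\cU)$ can even be strictly smaller than the bound in \cref{P-UFaFa}, the argument still goes through — it does, since we only ever use the upper bound on $|\Orb_N(\cU)|$, and a smaller $N$-orbit only makes the strict inequality easier.
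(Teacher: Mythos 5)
Your proposal is correct and follows essentially the same route as the paper: cite Examples~\ref{E-SmallNormSingerGLhatOrb} and~\ref{E-NormSingerGLhatOrb} for ``$\Longleftarrow$'', and for ``$\Longrightarrow$'' pit the lower bound of \cref{P-dimUhat} (with $r=2$, $k=s=2a$, $n=4a$) against the upper bound of \cref{P-UFaFa}. The only difference is that the paper evaluates the \cref{P-dimUhat} bound exactly -- the Gaussian binomial cancels, reducing everything to $q^{2a-1}(q^a-1)-2a(q^{2a-1}-1)>0$, which is settled uniformly -- whereas your cruder estimates would leave a larger (but still finite, and checkable via the same exact formula) list of small $(q,a)$ to verify directly.
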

%%%%%%%%%%%%%%%%%%

\begin{proof}
``$\Longleftarrow$'' Examples~\ref{E-SmallNormSingerGLhatOrb} and~\ref{E-NormSingerGLhatOrb}.
\\
``$\Longrightarrow$'' Let $(q,a)\not\in\{(2,1),\,(2,2)\}$. We show that $|\Orb_{\GLhat}(\cU)|>|\Orb_N(\cU)|$. 
Thanks to \cref{P-UFaFa} it suffices to show $|\Orb_{\GLhat}(\cU)|-n/2(q^n-1)/(q^a-1)>0$.
Using $r=\delta_s(\cU)=2$ and $k=2a=s=n/2$ along with the lower bound in  \cref{P-dimUhat} the inequality follows if we prove
$Q:=q^{2a-1}(q^a-1)-2a(q^{2a-1}-1)>0$.
We have
\[
      Q>(q^{2a-1}-1)(q^a-1-2a).
\]
The first factor is clearly positive.
As for the second factor, note that the function $f(x)=q^x-(2x+1)$ is non-negative on $[1,\infty)$ if $q\geq3$, while for $q=2$ this is the case 
for the interval $[3,\infty)$.
This shows that $Q>0$ whenever $(q,a)\not\in\{(2,1),\,(2,2)\}$ and concludes the proof.
\end{proof}

There is one more known example where the $\GLhat$-orbit coincides with the $N$-orbit even though the subspace is 
not contained in $\Fqs$. In fact, it is the only such  example for $s=1$.
Indeed, note that $s=1$ together with $\delta_s(\cU)=2$ forces $\dim(\cU)=2$.
In \cref{E-k2} below we will list all $2$-dimensional subspaces for which the orbits coincide.

%%%%%%%%%%%%%%%
\begin{exa}\label{E-qns=251}
Let $(q,n,s)=(2,5,1)$ and choose any subspace $\cU\in\cG_2(2,5)$. Then $\Orb_{\GL_5(2)}(\cU)$ is the entire Grassmannian $\cG_2(2,5)$.
It has cardinality $\qbinom{5}{2}_2=155=5(2^5-1)/(2-1)$ and satisfies
$\Orb_{\GL_{5}(2)}(\cU)=\Orb_N(\cU)$. 
\end{exa}
%%%%%%%%%%%%%%%%

We now turn to cases where Inequality~\eqref{e-MainIneq} of \cref{L-GLhatOrbitB}(a) holds true even with $r=2$.
Recall that $\delta_s(\cU)=2$ implies $s\leq n/2$ because $\delta_s(\cU)s\leq n$.

%%%%%%%%%%%%%%%%
\begin{prop}\label{L-NewIneq}
Let $k\leq 3n/8$ and $s\leq n/2$ be a divisor of~$n$. Let $\cU\in\Gkn$ be such that $1\in \cU$.
Then
\begin{alphalist}
\item If $\delta_s(\cU)=2$, then $|\Orb_{\GLhat}(\cU)\big|> |\Orb_N(\cU)|$ and thus $\GLhat\not\leq\Aut(\Orb_N(\cU))$.
\item If~$\Fqs$ is the smallest subfield containing~$ \cU$, then $\GLhat$ is normal in $\Aut(\Orb_N(\cU))$ and thus 
         $ \Aut(\Orb_N(\cU))\leq N_{\GL_n(q)}(\GLhat)$. 
\item $\cU$ is generic iff $\Aut(\Orb_N(\cU))= N$.
\end{alphalist}
\end{prop}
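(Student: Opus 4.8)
Write $A:=\Aut(\Orb_N(\cU))$ and, for a divisor $t$ of~$n$, set $G_t:=\GL_{n/t}(q^t)$; thus $G_s=\GLhat$, $G_n=\Fqn^*$, and $\Fqn^*\leq N\leq A$, while by~\eqref{e-GHOrb} we have $G_t\leq A\iff\Orb_{G_t}(\cU)\subseteq\Orb_N(\cU)$. For part~(a) the plan is to rerun the counting argument in the proof of~\cref{C-AutOrbNormalizer}, now exploiting the sharper hypothesis $k\leq 3n/8$. Put $r=\delta_s(\cU)=2$, so that $2s=\delta_s(\cU)\,s\leq n$ forces $s\leq n/2$. \cref{P-dimUhat} gives
\[
  \bigl|\Orb_{\GLhat}(\cU)\bigr|\ \geq\ \frac{q^{s-1}}{\qbinom{k}{2}_q}\cdot\frac{q^n-1}{q^2-1}\cdot\frac{q^{n-s}-1}{q-1},
\]
and \cref{R-SingerOrbSize} gives $|\Orb_N(\cU)|\leq n(q^n-1)/(q-1)$, so it suffices to verify
\[
  q^{s-1}\bigl(q^{n-s}-1\bigr)\ >\ n\,(q^2-1)\,\qbinom{k}{2}_q .
\]
Using $(q^2-1)\qbinom{k}{2}_q=(q^k-1)(q^{k-1}-1)/(q-1)<q^{2k-1}/(q-1)$, $q^{s-1}(q^{n-s}-1)\geq q^{n-1}-q^{n/2-1}$, and $2k-1\leq 3n/4-1$, this reduces to an elementary inequality in $q,n$ that holds for all $q\geq3$ and for $q=2$ once $n$ is large enough; the finitely many remaining small $q=2$ parameter pairs---where the generic bound of~\cref{P-dimUhat} can be too crude---are to be settled directly, e.g.\ by computing the exact sizes of $\Orb_{\GLhat}(\cU)$ and $\Orb_N(\cU)$, as in the proof of~\cref{L-GLhatOrbitB}(a). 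Once $|\Orb_{\GLhat}(\cU)|>|\Orb_N(\cU)|$, the containment $\Orb_{\GLhat}(\cU)\subseteq\Orb_N(\cU)$ fails and hence $\GLhat\not\leq A$.

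For part~(b) I would repeat the closing argument in the proof of~\cref{T-AutOrb}. Let $\Fqs$ be the smallest subfield containing~$\cU$ and put $\cT=\{t\in\N\mid s\mmid t\mmid n\}$, so that for a divisor $t$ of~$n$ one has $\cU\subseteq\F_{q^t}\iff t\in\cT$. If $t\in\cT$, then ``$\Rightarrow$'' of~\eqref{e-UFqsNorm} applied to the divisor $t$ (which, as noted in the proof of~\cref{C-AutOrbNormalizer}, imposes no restriction on $\delta_t(\cU)$) yields $G_t\leq A$. If $t\notin\cT$, then $\cU\not\subseteq\F_{q^t}$, so $\delta_t(\cU)\geq2$: when $\delta_t(\cU)=2$ we have $t\leq n/2$, and part~(a) applied with $t$ in place of $s$ (its hypotheses $k\leq 3n/8$, $t\mmid n$, $t\leq n/2$, $1\in\cU$ all being met) gives $G_t\not\leq A$; when $\delta_t(\cU)\geq3$, the contrapositive of ``$\Leftarrow$'' of~\eqref{e-UFqsNorm} gives $G_t\not\leq A$. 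Hence $G_t\leq A\iff t\in\cT$ for every divisor $t$ of~$n$. Now \cref{T-SingerNorm}(c) produces a divisor $t$ of~$n$ with $G_t\unlhd A$, so $G_t\leq A$ and $t\in\cT$; if $t\neq s$ then $A\leq N_{\GL_n(q)}(G_t)$, which together with $\GLhat\leq A$ contradicts~\cref{L-GLhatSize}. Therefore $t=s$, i.e.\ $\GLhat\unlhd A$ and $A\leq N_{\GL_n(q)}(\GLhat)$.

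For part~(c), the direction ``$A=N\Rightarrow\cU$ generic'' is \cref{C-AutOrbNormalizer}(b): were $\cU\subseteq\F_{q^{s'}}$ for some proper divisor $s'$ of~$n$, then ``$\Rightarrow$'' of~\eqref{e-UFqsNorm} would give $G_{s'}\leq A$, while $|G_{s'}|\geq(q^n-1)(q^n-q^{s'})>n(q^n-1)=|N|$, so $A\neq N$. Conversely, suppose $\cU$ is generic and apply~\cref{T-SingerNorm}(c) to $A\supseteq\Fqn^*$ to obtain a divisor $t$ of~$n$ with $G_t\unlhd A$, in particular $G_t\leq A$. If $t<n$, then $\cU\not\subseteq\F_{q^t}$ by genericity, hence $\delta_t(\cU)\geq2$; but $\delta_t(\cU)=2$ would give $G_t\not\leq A$ by part~(a) (valid as $t\leq n/2$, $k\leq 3n/8$), and $\delta_t(\cU)\geq3$ would give $G_t\not\leq A$ by the contrapositive of ``$\Leftarrow$'' of~\eqref{e-UFqsNorm}---either way a contradiction. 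So $t=n$, whence $\Fqn^*=G_n\unlhd A$ and $A\leq N_{\GL_n(q)}(\Fqn^*)=N$; combined with $N\leq A$ this gives $A=N$.

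The step I expect to be the genuine obstacle is the inequality in part~(a): the case $r=2$ is tight---this is precisely why the hypothesis is $k\leq 3n/8$ rather than $k\leq n/2$---and for $q=2$ and small $n$ the clean estimates above leave a gap that must be closed by a short finite computation (most conveniently by computer, comparing the actual orbit sizes). All remaining steps are routine group-theoretic bookkeeping built on \cref{P-dimUhat}, \cref{R-SingerOrbSize}, \cref{T-SingerNorm}(c), \cref{L-GLhatSize}, and~\eqref{e-UFqsNorm}.
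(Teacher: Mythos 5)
Your proposal is correct and follows essentially the same route as the paper: the same reduction of part~(a) to the inequality $(q-1)(q^n-q^s)>n(q^k-1)(q^k-q)$ via \cref{P-dimUhat} and \cref{R-SingerOrbSize}, with the same fallback to exact orbit computations for the few small $q=2$ cases where that bound is too weak (the paper indeed needs this for $(q,n,k)=(2,8,3)$), and for (b) and (c) the same bookkeeping with \cref{T-SingerNorm}(c) and \cref{L-GLhatSize} that the paper compresses into ``the proof follows as in \cref{T-AutOrb}.'' Your explicit handling of the case $t=n$ in part~(c) just spells out what the paper leaves implicit.
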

%%%%%%%%%%%%%%%%

\begin{proof}
(a) Let $r:=\delta_s(\cU)=2$.
We show that~\eqref{e-MainIneq} holds true for most parameters 
and discuss the remaining values subsequently.
Inequality~\eqref{e-MainIneq} is equivalent to
\begin{equation}\label{e-Q}
       Q:=(q-1)(q^n-q^s)-n(q^k-1)(q^k-q)>0.
\end{equation}
The left hand side decreases for increasing~$s$, and thus we may assume $s=n/2$. 
With the aid of~\eqref{e-qab} we compute
\begin{align}
     Q&\geq(q-1)q^{n/2}(q^{n/2}-1)-nq(q^k-1)(q^{k-1}-1)\nonumber \\
        &>\big((q-1)q^{n/2}q^{n/2-k+1}-nq(q^k-1)\big)(q^{k-1}-1) \nonumber \\
        &=\Big(\frac{(q-1)q^{n-k}}{q^k-1}-n\Big)q(q^k-1)(q^{k-1}-1)\nonumber \\
        &>\big((q-1)q^{n-2k}-n\big)q(q^k-1)(q^{k-1}-1)\label{e-Qr=2}\\
        &>\big((q-1)q^{n/4}-n\big)q(q^k-1)(q^{k-1}-1)\nonumber,
\end{align}
where in the last step we used that $k\leq 3n/8$.
Clearly the last three factors are positive.
As for the first factor, consider the function $f(x)=(x-1)x^{n/4}-n$. 
For fixed~$n$ the function is increasing on $[2,\infty)$. Furthermore, 
\[
   f(2)\geq 0\text{ for }n\geq 16,\quad f(3)\geq 0\text{ for }n\geq 8,\ f(4)\geq 0\text{ for }n\geq 4.
\]
Thus $Q>0$ if (i) $q\geq4$ and $n\geq4$, (ii) $q=3$ and $n\geq8$, or (iii) $q=2$ and $n\geq16$.
For the cases $q=2$ with $4\leq n\leq 15$ and $q=3$ with $4\leq n\leq 7$, 
direct verification shows that~\eqref{e-Q} holds true unless $(q,n,k)\in\{(2,8,3),(2,11,4)\}$.
We consider these cases separately.
\\
i) Let $(q,n,k)=(2,11,4)$. Then $s=1$ (since~$s$ is a divisor of~$n$). 
But then every $4$-dimensional subspace~$\cU$ satisfies $\delta_s(\cU)=4$, and thus there is nothing to show.
\\
ii) Let  $(q,n,k)=(2,8,3)$. In this case $s\in\{2,4\}$. 
Exhaustive consideration of all $3$-dimensional subspaces~$\cU$ in $\F_{2^8}$ with $\delta_s(\cU)=2$ shows that in each case
the orbit $\Orb_{\GLhat}(\cU)$ is strictly larger than $n(2^n-1)=2040$, which is an upper bound for $|\Orb_N(\cU)|$. 
To be precise, for $s=2$, there is exactly one $\GLhat$-orbit and it has size $5355$, while for $s=4$ there exists one 
orbit of size $61200$, two orbits of size $15300$, and one orbit of size $5100$. 
For comparison, the lower bound in \cref{P-dimUhat} only provides $|\Orb_{\GLhat}(\cU)|\geq1530$ if $s=2$ and 
$|\Orb_{\GLhat}(\cU)|\geq1458$ if $s=4$.
\\
For (b) and~(c) note that Part~(a) and \cref{C-AutOrbNormalizer} imply the equivalence $[\cU\subseteq\F_{q^t}\Longleftrightarrow\GL_{n/t}(q^t)\leq\Aut_N(\cU)]$
for any divisor~$t$ of $n$ with $t\leq n/2$.
Thus the proof follows as in \cref{T-AutOrb}.
\end{proof}

Now we can fully cover the case where $k=r=2$. Let $N:=N_{\GL_n(q)}(\Fqn^*)$. 

%%%%%%%%%%%%%%%%%%%%
\begin{prop}\label{E-k2}
Let $n\geq4$ and $1\leq s\leq n/2$ be a divisor of~$n$.
The following are equivalent.
\begin{romanlist}
\item There exists $\cU\in\cG_q(2,n)$ such that $\delta_s(\cU)=2$ and $\Orb_{\GLhat}(\cU)=\Orb_N(\cU)$.
\item $(q,n,s)\in\{(2,4,2),(2,5,1),(4,4,2)\}$.
\end{romanlist}
\end{prop}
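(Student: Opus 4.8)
The plan is to prove the equivalence by establishing the direction (ii)$\Rightarrow$(i) via the already-worked examples, and then proving (i)$\Rightarrow$(ii) by ruling out all other parameter triples $(q,n,s)$ through a combination of the orbit-size inequalities developed above and a small amount of explicit checking. Throughout we fix $\cU\in\cG_q(2,n)$ with $\delta_s(\cU)=2$; since $\dim_{\F_q}\cU=2$ and $1\in\cU$ (which we may assume), such a $\cU$ has the form $\cU=\spann_{\F_q}\{1,\alpha\}$ with $\alpha\notin\Fqs$, and the condition $\delta_s(\cU)=2$ forces $s\le n/2$.

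For (ii)$\Rightarrow$(i): for $(q,n,s)=(2,4,2)$ this is \cref{E-SmallNormSingerGLhatOrb}; for $(q,n,s)=(4,4,2)$ this is the last paragraph of \cref{E-NormSingerGLhatOrb} (where the $\F_4$-space $\cU=\spann_{\F_4}\{1,\alpha\}\subseteq\F_{4^4}$ satisfies $\Orb_{\GL_2(4^2)}(\cU)=\Orb_{N'}(\cU)$, and this $\cU$ is certainly a $2$-dimensional $\F_4$-space with $\delta_2=2$); and for $(q,n,s)=(2,5,1)$ this is \cref{E-qns=251}. So this direction is immediate.

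For (i)$\Rightarrow$(ii): suppose $\Orb_{\GLhat}(\cU)=\Orb_N(\cU)$; in particular $|\Orb_{\GLhat}(\cU)|=|\Orb_N(\cU)|\le n(q^n-1)/(q-1)$ by \cref{R-SingerOrbSize}, and $|\Orb_{\GLhat}(\cU)|$ is bounded below by \cref{P-dimUhat} with $r=2$, $k=2$, giving $|\Orb_{\GLhat}(\cU)|\ge q^{s-1}\,\frac{q^n-1}{q^2-1}\cdot\frac{q^{n-s}-1}{q-1}/\qbinom{2}{2}_q=q^{s-1}\frac{(q^n-1)(q^{n-s}-1)}{(q^2-1)(q-1)}$. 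First I would dispatch the case $n\ge 8$: here \cref{L-NewIneq}(a) applies directly (since $k=2\le 3n/8$ when $n\ge 6$ already, and $s\le n/2$), giving the strict inequality $|\Orb_{\GLhat}(\cU)|>|\Orb_N(\cU)|$, a contradiction — so no triple with $n\ge 8$ can occur, \emph{except} possibly the two exceptional parameter sets $(2,8,3)$ and $(2,11,4)$ that arise in the proof of \cref{L-NewIneq}; but both of those have $k=3\ne 2$, so they are irrelevant here. That leaves only $n\in\{4,5,6,7\}$, and since $s$ is a divisor of $n$ with $s\le n/2$, the finitely many remaining candidate triples are $(q,4,1),(q,4,2),(q,5,1),(q,6,1),(q,6,2),(q,6,3),(q,7,1)$ for each prime power $q$. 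For these, I would use the explicit lower bound from \cref{P-dimUhat} above against the upper bound $n(q^n-1)/(q-1)$: a short computation shows $q^{s-1}\frac{(q^n-1)(q^{n-s}-1)}{(q^2-1)(q-1)}>n\frac{q^n-1}{q-1}$, i.e. $q^{s-1}(q^{n-s}-1)>n(q^2-1)$, holds for all these triples once $q$ is large enough (roughly $q\ge 3$ outside $n=4,s=2$, and $q\ge 2$ outside a handful of small cases), reducing to a genuinely finite list of $(q,n,s)$ — essentially $q=2$ with $n\in\{4,5,6,7\}$ and $q=3,4,5$ with $(n,s)=(4,2)$, plus perhaps $q=3$ with $(n,s)=(4,1)$ — which can then be checked one at a time: for each, either the crude bound already excludes it, or one computes all relevant orbits explicitly (as was done for $(2,4,2)$ and $(4,4,2)$ and $(2,5,1)$) and finds equality holds in exactly the three claimed cases. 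For $q=3,a=2$ i.e. $(3,4,2)$, note \cref{R-UFaFa} tells us $\Orb_N(\cU)=\Orb_S(\cU)$ has size only $(3^4-1)/(3^2-1)=10$, far below $|\Orb_{\GLhat}(\cU)|$, so that is excluded; similarly $(q,4,2)$ for $q\ge 3$ is ruled out by \cref{C-UFaFa} (with $a=1$... no — rather directly by the size comparison, since $q^{1}(q^{2}-1)>4(q^2-1)$ fails only for $q\le 4$, and $q=3,4$ are handled by the explicit orbit count, leaving only $q=2$ hence $(2,4,2)$, and separately $(4,4,2)$ coming from the $\F_4$-reinterpretation).

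The main obstacle is bookkeeping rather than conceptual: one must be careful that the orbit sizes $|\Orb_{\GLhat}(\cU)|$ and $|\Orb_N(\cU)|$ here are for a \emph{specific} $\cU$, and that \cref{P-dimUhat} only gives a lower bound on $|\Orb_{\GLhat}(\cU)|$ while \cref{R-SingerOrbSize} gives only an upper bound on $|\Orb_N(\cU)|$ — so the inequality chain "lower bound on $\GLhat$-orbit $>$ upper bound on $N$-orbit" is exactly what rules a triple out, but when the bounds do not separate we are forced into explicit computation. A secondary subtlety is the overlap between the "$q$ as base field" and "$q^a$ as base field" viewpoints, which is precisely why $(4,4,2)$ appears alongside $(2,4,2)$ and $(2,8,4)$: one should note that $(2,8,4)$ is \emph{not} on the list because there $\dim_{\F_2}\cU=4\ne 2$, so \cref{E-NormSingerGLhatOrb} contributes to the present statement only through its $q=4$ reinterpretation. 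I would close by remarking that, combined with \cref{P-GLhatAut} and the analogue of the last part of \cref{T-AutOrb}, this gives a complete description of $\Aut(\Orb_N(\cU))$ for all $2$-dimensional $\cU$.
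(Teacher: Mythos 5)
Your argument is essentially the paper's own proof: (ii)$\Rightarrow$(i) via Examples~\ref{E-SmallNormSingerGLhatOrb}, \ref{E-NormSingerGLhatOrb}, and~\ref{E-qns=251}, and (i)$\Rightarrow$(ii) by invoking \cref{L-NewIneq} to bound~$n$ (the paper reduces at once to $n\le 5$ since $k=2\le 3n/8$ already holds for $n\ge 6$; your leftover cases $n=6,7$ are then eliminated by the same inequality), followed by the exact criterion $q^{s-1}(q^{n-s}-1)>n(q^2-1)$ --- which is precisely the paper's condition $Q>0$ obtained from \cref{P-dimUhat} with $r=k=2$ against the bound $|\Orb_N(\cU)|\le n(q^n-1)/(q-1)$ --- and explicit orbit computations for the few surviving small-$q$ triples, exactly as in the paper. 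The only blemish is your abandoned aside on $(3,4,2)$: the instance $q=3$, $a=2$ in \cref{R-UFaFa} lives in $\F_{3^8}$ with $s=k=4$ (not in $\F_{3^4}$), and a $2$-dimensional subspace of $\F_{3^4}$ with $\delta_2(\cU)=2$ has Singer orbit of length $(3^4-1)/(3-1)=40$ rather than $10$; since you retreat to the size comparison and explicit count, this does not affect the validity of the argument.
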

%%%%%%%%%%%%%%%%%%%%

\begin{proof}
``(ii)~$\Rightarrow$~(i)'' Examples~\ref{E-SmallNormSingerGLhatOrb}, \ref{E-qns=251}, and \ref{E-NormSingerGLhatOrb}.
 \\[.7ex]
``(i)~$\Rightarrow$~(ii)''  By \cref{L-NewIneq} we must have $k=2>3n/8$, hence $n\leq 5$.
Since~$s$ is a divisor of~$n$ and $s\leq n/2$, this leaves the cases $(n,s)\in\{(4,1),(4,2),(5,1)\}$ with arbitrary~$q$.
Using \cref{P-dimUhat} for the case $r=k=2$ and $|\Orb_N(\cU)|\leq n(q^n-1)/(q-1)$, we conclude that 
$|\Orb_{\GLhat}(\cU)\big|>|\Orb_N(\cU)|$ if $Q:=q^{s-1}(q^{n-s}-1)-n(q^2-1)>0$.
\\
\underline{Case 1:} $(n,s)=(4,1)$.
\\
In this case $Q>0$ iff $q\geq4$. Thus it remains to consider $q\in\{2,3\}$.
Since $s=1$, every $2$-dimensional subspace~$\cU$ satisfies $\delta_s(\cU)=2$ and $|\Orb_{\GL_{4}(q)}(\cU)|=\qbinom{4}{2}_q$.
Furthermore, exhaustive verification shows that $|\Orb_N(\cU)|\leq n/2(q^n-1)/(q-1)$. Thus
$|\Orb_{\GL_{4}(q)}(\cU)|>|\Orb_N(\cU)|$.
\\
\underline{Case 2:} $(n,s)=(4,2)$. 
\\
In this case $Q>0$ for all $q\geq5$, and exhaustive verification shows that for $q=3$ every $2$-dimensional subspace~$\cU$ in 
$\F_{3^4}$ with $\delta_2(\cU)=2$ satisfies $|\Orb_N(\cU)|\leq n/2(q^n-1)/(q-1)<|\Orb_{\GL_{4}(3)}(\cU)|$ (where the first inequality also follows from \cref{P-UFaFa}). This leaves the  cases $(q,n,s)\in\{(2,4,2),(4,4,2)\}$.
\\
\underline{Case 3:} $(n,s)=(5,1)$. In this case $Q>0$ iff $q\geq3$, and thus only $(q,n,s)=(2,5,1)$ remains.
\end{proof}

Similarly we can cover all cases where $k=3$ (hence $n\geq6$).
In this case, $\Orb_{\GLhat}(\cU)$ is always strictly bigger than $\Orb_N(\cU)$.

%%%%%%%%%%%%%%%%%%%%
\begin{prop}\label{E-k3}
Let $n\geq6$ and $s\leq n/2$ be a divisor of~$n$. 
Then for every subspace $\cU\in\cG_q(3,n)$ such that  $\delta_s(\cU)=2$ we have $|\Orb_{\GLhat}(\cU)|>|\Orb_N(\cU)|$.
\end{prop}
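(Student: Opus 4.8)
The plan is to combine the lower bound on $|\Orb_{\GLhat}(\cU)|$ from \cref{P-dimUhat} (with $r=2$) with the upper bound $|\Orb_N(\cU)|\leq n(q^n-1)/(q-1)$ from \cref{R-SingerOrbSize}, exactly as in the proof of \cref{L-NewIneq}(a), and then clean up the finitely many parameter triples that the generic inequality does not cover. Since $k=3$ and $\delta_s(\cU)=2$, \cref{P-dimUhat} gives
\[
   |\Orb_{\GLhat}(\cU)|\geq \frac{q^{s-1}}{\qbinom{3}{2}_q}\cdot\frac{(q^n-1)(q^{n-s}-1)}{(q^2-1)(q-1)},
\]
and $\qbinom{3}{2}_q=\qbinom{3}{1}_q=(q^3-1)/(q-1)$, so it suffices to show
\[
   Q:=q^{s-1}(q^{n-s}-1)(q^n-1) - n(q^3-1)(q^2-1)(q^n-1)\cdot\frac{1}{q^n-1} >0,
\]
i.e. after cancelling $q^n-1$, that $q^{s-1}(q^{n-s}-1)>n(q^3-1)(q^2-1)$ — wait, more carefully, the clean inequality to verify is $q^{s-1}(q^{n-s}-1)(q^n-1) > n(q^3-1)(q^2-1)$, or equivalently (absorbing the $q^n-1$ factor, which is harmless) it is enough that
\[
   q^{s-1}(q^{n-s}-1) \;>\; \frac{n(q^3-1)(q^2-1)}{q^n-1}.
\]
The right-hand side is at most $n q^{5-n}$, which is tiny for $n\geq 6$, so this holds comfortably whenever $s\geq 2$; the only delicate regime is $s=1$, where the left-hand side is just $q^{n-1}-1$ and we need $q^{n-1}-1 > n(q^3-1)(q^2-1)/(q^n-1)$, again easy for $n\geq 6$. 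So in fact the only genuinely tight comparisons arise from small $n$ and $q=2$, and those I would settle exactly as in \cref{L-NewIneq}: a monotonicity-in-$s$ reduction to $s$ maximal (but here also keeping $s=1$ in view since $\delta_s(\cU)=2$ is automatic then), followed by direct check.

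The key steps in order: (1) Invoke $\delta_s(\cU)=2$, so $r=2$ and $2s\leq n$, hence $s\leq n/2$; record $k=3$, $n\geq 6$. (2) Apply \cref{P-dimUhat} with $r=2$, $k=3$, and simplify $\qbinom{3}{2}_q=(q^3-1)/(q-1)$. (3) Apply \cref{R-SingerOrbSize} to get $|\Orb_N(\cU)|\leq n(q^n-1)/(q-1)$. (4) Reduce the resulting inequality to a polynomial inequality in $q$ and the two integer parameters; observe it decreases in $s$ so it suffices to treat $s$ at its extreme values. (5) Verify the inequality for all $q$ once $n$ is large enough (say $n\geq 8$ or so, with a clean estimate), then handle the finitely many remaining pairs — essentially $q=2$ with $n\in\{6,\dots\}$ up to the threshold, and perhaps $q=3$ with $n=6$ — by direct computation, citing exhaustive verification over the (finitely many) subspaces with $\delta_s(\cU)=2$, just as was done for $(2,8,3)$ in \cref{L-NewIneq}.

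The main obstacle I anticipate is the same as in \cref{L-NewIneq}(a): a handful of small-parameter cases where the crude lower bound of \cref{P-dimUhat} is not strong enough to beat $n(q^n-1)/(q-1)$, forcing an exhaustive computer check of the actual $\GLhat$-orbit sizes. Concretely, for $n=6$, $s\in\{1,3\}$ (the divisors of $6$ that are $\leq 3$) and $q=2$, the numbers are close enough that one should double-check; likewise $n=8$ with $s\in\{1,2,4\}$ and $q=2$ — indeed $(2,8,3)$ already appeared as a troublesome case for the ``generic'' bound in \cref{L-NewIneq}. So I expect the proof to read: prove the polynomial inequality $Q>0$ for all $(q,n,s)$ with $n\geq 6$, $s\mid n$, $s\leq n/2$, except for an explicitly listed short finite set, and for that set note that a direct enumeration of the $3$-dimensional subspaces $\cU\subseteq\Fqn$ with $\delta_s(\cU)=2$ shows $|\Orb_{\GLhat}(\cU)|$ strictly exceeds the upper bound $n(q^n-1)/(q-1)\geq|\Orb_N(\cU)|$ in every instance. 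This establishes $\Orb_{\GLhat}(\cU)\not\subseteq\Orb_N(\cU)$, hence the claim.
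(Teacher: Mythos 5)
Your overall strategy is the paper's: bound $|\Orb_{\GLhat}(\cU)|$ from below via \cref{P-dimUhat} with $r=2$, bound $|\Orb_N(\cU)|$ above by $n(q^n-1)/(q-1)$, reduce to a polynomial inequality, and finish the stragglers by exhaustive computation. But your reduction contains an algebra slip that invalidates the quantitative conclusions. After cancelling $q^n-1$ from both sides, the correct sufficient condition is
\[
   q^{s-1}(q^{n-s}-1)(q-1) \;>\; n(q^3-1)(q^2-1),
\]
(equivalently~\eqref{e-Q} with $k=3$), whereas you retained a factor $q^n-1$ on the left, i.e.\ you divided the right-hand side by $q^n-1$ instead of by $q-1$. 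The true right-hand side is of order $nq^4$, not $nq^{5-n}$, and the inequality does \emph{not} ``hold comfortably whenever $s\geq2$'': at $n=6$ it fails for every $q\le 5$ and for both $s=2$ and $s=3$ (e.g.\ $(q,n,s)=(2,6,3)$ gives $28\not>126$). These are exactly the non-vacuous cases the paper must settle by exhaustive verification for $q\in\{2,3,4,5\}$; your exceptional list omits $q=3,4,5$ (and omits $s=2$ at $n=6$), so as written the proof silently skips cases where the crude bound genuinely fails.

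Two further points. First, your remark that $\delta_s(\cU)=2$ is ``automatic'' when $s=1$ is wrong here: $\delta_1(\cU)=\dim_{\F_q}\cU=3$, so $s=1$ cannot occur at all -- the paper uses precisely this to discard $n=7$ (whose only admissible divisor is $1$) and to restrict to $(n,s)\in\{(6,2),(6,3)\}$ after first invoking \cref{L-NewIneq} to dispose of all $n\geq8$ (including the already-settled troublesome case $(2,8,3)$). You are carrying over the $k=2$ logic from \cref{E-k2}. Second, the monotonicity you want is that $q^{s-1}(q^{n-s}-1)=q^{n-1}-q^{s-1}$ decreases in $s$, so the binding case is $s=n/2$; with the corrected inequality this yields $Q>0$ for $n=6$ only when $q\geq7$, matching the paper. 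The fix is mechanical -- correct the cancellation, note $s\neq1$, reduce to $n=6$, and enlarge the exhaustively-checked set to $q\in\{2,3,4,5\}$, $s\in\{2,3\}$ -- but without it the argument has a genuine gap.
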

%%%%%%%%%%%%%%%%%%%

\begin{proof}
By Lemma~\ref{L-NewIneq} we only need to verify the cases where $k=3>3n/8$, thus $n<8$.
Since $\delta_s(\cU)=2\neq\dim(\cU)=3$, we must have $s\neq1$.
This leaves $(n,s)\in\{(6,2),(6,3)\}$. 
Using $n=6,\,k=3$ and $s\in\{2,3\}$ one verifies that~\eqref{e-Q} is true whenever $q\geq7$. 
Exhaustive verification for $q\in\{2,3,4,5\}$ establishes the desired result.
\end{proof}

As the proofs in this section have shown, for given parameters $(n,k,s)$ and $r=2$ the inequality in~\eqref{e-MainIneq} is true 
for sufficiently large~$q$ (for instance, if $k<n/2$, then this is the case for $q\geq n+1$ as~\eqref{e-Qr=2} shows).
Thus, any further examples where the $N$-orbit agrees with the $\GLhat$-orbit requires a relatively small field size.
We strongly believe that no further example exists and thus close this section with 

%%%%%%%%%%%%%% 
\begin{conjecture}\label{C-Conjec}
Let $s\leq n/2$ be a divisor of~$n$ and $\cU\in\Fqn$ be such that $\delta_s(\cU)=2$ and $\Orb_{\GLhat}(\cU)\subseteq\Orb_N(\cU)$.
Then the orbits coincide and~$\cU$ is one of the subspaces from Examples~\ref{E-SmallNormSingerGLhatOrb}, 
\ref{E-NormSingerGLhatOrb}, and \ref{E-qns=251}.
\end{conjecture}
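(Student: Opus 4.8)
The plan is to bound $|\Orb_{\GLhat}(\cU)|$ from below and $|\Orb_N(\cU)|$ from above, and show the former strictly exceeds the latter outside the three listed examples. By \cref{R-SingerOrbSize} and~\eqref{e-S&N-orbits} we always have $|\Orb_N(\cU)|\leq n(q^n-1)/(q-1)$, and in the delicate regime $k=s=n/2$, $r=\delta_s(\cU)=2$ one can hope to do better using \cref{P-UFaFa} whenever $\cU$ has the special form $\spann_{\Fqa}\{1,\alpha\}$. For the lower bound on the $\GLhat$-orbit we use \cref{P-dimUhat} with $r=2$, giving $|\Orb_{\GLhat}(\cU)|\geq q^{s-1}(q^n-1)(q^{n-s}-1)/\big((q^2-1)\qbinom{k}{2}_q\big)$. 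The work already done in \cref{L-NewIneq}, \cref{E-k2}, and \cref{E-k3} disposes of all cases with $k\leq 3n/8$, as well as $k\in\{2,3\}$; so the conjecture is really about the remaining band $4\le k\le n/2$ with $3n/8<k$ — equivalently a bounded ratio $k/n$ — together with genuinely small $q$.

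The key steps, in order, are as follows. First, I would record the reduction: by \cref{L-NewIneq}(a) we may assume $k>3n/8$ and (by the remark at the end of the section, refining~\eqref{e-Qr=2}) that $q$ lies below an explicit bound roughly of size $n$; in particular for fixed $n$ only finitely many $(q,k,s)$ survive, and only finitely many $n$ can produce a new example once one checks that for $n$ large the inequality $(q-1)q^{n-2k}>n$ in~\eqref{e-Qr=2} already fails to be the obstruction and a sharper estimate (keeping the factor $q^{s-1}$ and the true size of $\qbinom{k}{r}_q$ rather than the crude $4q^{r(k-r)}$) closes the gap. Second, in the narrow window $k=s=n/2$ — the only place the examples live — I would split according to whether $\cU=\spann_{\Fqa}\{1,\alpha\}$ for $a=n/4$: if so, \cref{C-UFaFa} already gives the full classification $(q,a)\in\{(2,1),(2,2)\}$, i.e.\ Examples~\ref{E-SmallNormSingerGLhatOrb} and~\ref{E-NormSingerGLhatOrb}; if not, one must argue that $\Orb_N(\cU)$ is still no larger than $\tfrac n2(q^n-1)/(q^{t}-1)$ for the appropriate multiplication-stabilizer field $\F_{q^t}$, or else directly that the lower bound from \cref{P-dimUhat} beats $n(q^n-1)/(q-1)$. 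Third, the genuinely small leftover cases — concretely small pairs $(q,n)$ with $q\in\{2,3,4,5\}$ and $6\le n\le$ (some modest bound) and $4\le k\le n/2$ — would be settled by exhaustive computation exactly as in the proofs of \cref{E-k2} and \cref{E-k3}, verifying in each case that every $\cU$ with $\delta_s(\cU)=2$ satisfies $|\Orb_{\GLhat}(\cU)|>|\Orb_N(\cU)|$, except for the three known orbits, where equality $\Orb_{\GLhat}(\cU)=\Orb_N(\cU)$ is confirmed and all generating subspaces are seen to lie in the stated $N$-orbit.

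The main obstacle I anticipate is controlling $|\Orb_N(\cU)|$ from above sharply enough in the band $3n/8<k\le n/2$ when $\cU$ is \emph{not} of the special extension-field-pair form, since then \cref{P-UFaFa} does not apply and the only general bound is $n(q^n-1)/(q-1)$, which is too weak to be beaten by the \cref{P-dimUhat} lower bound for the very smallest $q$ unless one also exploits structure: either that $\delta_t(\cU)=2$ can hold for at most one proper divisor $t$ (limiting how many Frobenius twists $\cU^{[i]}$ can fall into a single Singer orbit), or a direct analysis showing $\cU^{[i]}\notin\Orb_{\Fqn^*}(\cU)$ for most $i$. Making this quantitative, rather than leaving it to a finite—but potentially large—computer search, is the part that will require the most care; a clean argument here would let one replace "we strongly believe" with a theorem. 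Absent that, the honest fallback is to prove the conjecture for all $q\ge q_0(n)$ by the estimates above and to certify the finitely many residual $(q,n,k,s)$ by exhaustive search, which is precisely the pattern the preceding propositions already follow.
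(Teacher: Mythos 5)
This statement is a \emph{conjecture} in the paper: the authors explicitly list it as an open problem in the conclusion and offer no proof, so there is no ``paper's proof'' to compare against. Your proposal is therefore being measured only on whether it would actually close the problem, and it does not: it is a programme whose hard step is left open, and moreover one of its structural claims is false. The fatal regime is $k$ close to $n/2$ with small $q$. Take $q=2$ and $k=s=n/2$ with $r=\delta_s(\cU)=2$. The lower bound of \cref{P-dimUhat} then gives $|\Orb_{\GLhat}(\cU)|\geq 2^{n/2-1}(2^n-1)/(2^{n/2-1}-1)\approx 2(2^n-1)$, while the only general upper bound on the normalizer orbit is $|\Orb_N(\cU)|\leq n(2^n-1)$. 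The ratio of the lower bound to the upper bound is roughly $2/n$, which tends to $0$; so the counting argument fails for \emph{every} $n$ in this band, not for ``finitely many $n$'' as you assert. No ``sharper estimate keeping the factor $q^{s-1}$'' closes this gap, because the paper's own data show the \cref{P-dimUhat} bound is already within a small constant factor of the truth in these cases (e.g.\ $292$ vs.\ $340$ in \cref{E-NormSingerGLhatOrb}), and the obstruction is the factor $n$ in the upper bound on $|\Orb_N(\cU)|$, not slack in the lower bound.

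What is genuinely needed — and what you only gesture at in your ``main obstacle'' paragraph — is an upper bound on $|\Orb_N(\cU)|$ better than $n(q^n-1)/(q-1)$ for subspaces with $\delta_s(\cU)=2$ that are \emph{not} of the form $\spann_{\Fqa}\{1,\alpha\}$, i.e.\ control over how many Frobenius twists $\cU^{[i]}$ fall into a single Singer orbit. That is precisely the structural question the paper flags as open (cf.\ the discussion before \cref{E-SmallNormSingerGLhatOrb} and item (c) of the conclusion), and your proposal supplies no new idea for it: the two mechanisms you name (``$\delta_t(\cU)=2$ for at most one $t$'' or ``$\cU^{[i]}\notin\Orb_{\Fqn^*}(\cU)$ for most $i$'') are stated as hopes, not proved, and the first one does not by itself bound the number of twists landing in one Singer orbit. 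The parts of your plan that do work ($k\leq 3n/8$ via \cref{L-NewIneq}, the special form via \cref{C-UFaFa}, and $k\in\{2,3\}$ via \cref{E-k2} and \cref{E-k3}) merely reproduce what the paper already establishes; the residual set of parameters is infinite, so ``certify the finitely many residual cases by exhaustive search'' is not an available fallback. The conjecture remains open under your proposal.
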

%%%%%%%%%%%%%%%

%%%%%%%%%%%%%%%%%%%
\section{Isometries of Orbit Codes}\label{S-Isom}
%%%%%%%%%%%%%%%%%%%

In this section we turn to the question when two orbit codes (under the Singer subgroup or its normalizer) are linearly isometric.
Our first result provides a criterion for when two cyclic orbit codes are not linearly isometric. 

%%%%%%%%%%%%%%%%%
\begin{theo}\label{T-NonIsom}
Let $\cC, \cC'$ be distinct Singer orbits. 
If $\Aut(\cC')=N_{\GL_n(q)}(\Fqn^*) \subseteq \Aut(\cC)$, then $\cC$ and $\cC'$ are not linearly isometric.
\end{theo}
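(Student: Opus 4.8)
The plan is to argue by contradiction using the automorphism group as an invariant of the isometry class. Suppose there were a linear isometry $\psi\in\GL_n(q)$ with $\psi(\cC)=\cC'$. The first step is the standard observation that conjugation by $\psi$ carries $\Aut(\cC)$ onto $\Aut(\cC')$: if $\phi\in\Aut(\cC)$ then $\psi\phi\psi^{-1}$ fixes $\psi(\cC)=\cC'$, and conversely, so that $\psi\,\Aut(\cC)\,\psi^{-1}=\Aut(\cC')$. In particular $\Aut(\cC)$ and $\Aut(\cC')$ are conjugate, hence abstractly isomorphic and of the same order.

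The second step exploits the hypothesis $\Aut(\cC')=N_{\GL_n(q)}(\Fqn^*)\subseteq\Aut(\cC)$. Combining this containment with the order equality $|\Aut(\cC)|=|\Aut(\cC')|$ forces $\Aut(\cC)=\Aut(\cC')=N_{\GL_n(q)}(\Fqn^*)=:N$. So both codes have the \emph{same} automorphism group $N$, and $\psi$ normalizes it: $\psi\in N_{\GL_n(q)}(N)$. By \cref{T-SingerNorm}(a), $N$ is self-normalizing in $\GL_n(q)$, so $\psi\in N$.

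The final step is to derive the contradiction from $\psi\in N$ together with $\cC=\Orb_{\Fqn^*}(\cU)$ for the appropriate generating subspace $\cU$. Since $\Fqn^*\unlhd N$, conjugation by $\psi$ fixes $\Fqn^*$, and one checks (as in \cref{T-isoOrbCodes}(b), using $\psi\Fqn^*\psi^{-1}=\Fqn^*$) that $\psi(\cC)=\psi\big(\Orb_{\Fqn^*}(\cU)\big)=\Orb_{\Fqn^*}(\psi(\cU))$. But $\psi(\cU)\in\psi(\cC)=\cC'$ while also $\cC=\Orb_{\Fqn^*}(\cU)$ is stabilized by $N\ni\psi$, hence $\psi(\cU)\in\cC$; therefore $\psi(\cU)$ lies in both $\cC$ and $\cC'$, and since Singer orbits of a common element coincide we get $\cC'=\Orb_{\Fqn^*}(\psi(\cU))=\cC$, contradicting $\cC\neq\cC'$.

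I expect the only subtle point to be pinning down that $\Aut(\cC)$ cannot be strictly larger than $N$: this is exactly where the order-equality coming from conjugacy of $\Aut(\cC)$ and $\Aut(\cC')$ is essential, since a priori the containment $N\subseteq\Aut(\cC)$ alone leaves room for a bigger group. Everything else is routine group action bookkeeping, relying on the self-normalizing property of $N$ from \cref{T-SingerNorm}(a) and the fact that two $\Fqn^*$-orbits sharing a point are equal.
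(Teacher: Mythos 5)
Your proof is correct and follows essentially the same route as the paper's: both arguments boil down to showing that $\psi$ normalizes $N=N_{\GL_n(q)}(\Fqn^*)$ and then invoking the self-normalizing property from \cref{T-SingerNorm}(a) to force $\psi\in N$ and hence $\cC=\cC'$. The only cosmetic difference is that you first pin down $\Aut(\cC)=N$ via the order equality from conjugacy, whereas the paper gets $\psi N\psi^{-1}\subseteq\Aut(\cC')=N$ directly from the hypothesis $N\subseteq\Aut(\cC)$; both are valid.
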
 
%%%%%%%%%%%%%%%%%%%

Our proof is an adaptation of \cite[Thm.~5]{BEOVW16}, where the authors prove an analogous result for $q$-Steiner systems.

\begin{proof}
We prove the contrapositive. 
Suppose that~$\cC$ and~$\cC'$ are linearly isometric, so that there exists $\psi\in\GL_n(q)$ such that 
$\psi(\cC) = \cC'$. 
Let $\tau \in N:=N_{\GL_n}(\Fqn^*)$. 
Then our assumptions on $\Aut(\cC')$ and $\Aut(\cC)$ imply
$\psi\circ\tau\circ\psi^{-1}(\cC') = \psi\circ\tau(\cC) = \psi(\cC) = \cC'$, and thus
$\psi\circ\tau\circ\psi^{-1} \in \Aut(\cC') = N$. 
This shows that~$\psi$ is in the normalizer of~$N$.
But the latter is~$N$ itself thanks to \cref{T-SingerNorm}(a), and hence $\psi\in\Aut(\cC')$ and $\cC=\cC'$.
\end{proof}

The main result of this section shows that Singer orbits of generic subspaces are linearly isometric iff they are Frobenius-isometric. 
This drastically reduces the workload when finding isometry classes of such codes.

%%%%%%%%%%%%%
\begin{theo}\label{T-IsomCOC}
Let $\cU,\,\cU' \in \Gkn$ such that $1\in\cU'$ and $\cU'$ is generic.
\begin{alphalist}
\item Let $S=\Fqn^*$.
         Then $\Orb_S(\cU)$ and $\Orb_S(\cU')$ are linearly isometric iff they are Frobenius-isometric.
\item Let $k\leq 3n/8$ or $\delta_s(\cU)\geq3$ for all divisors~$s$ of~$n$. 
        Let $N=N_{\GL_n(q)}(\Fqn^*)$. 
        Then $\Orb_N(\cU)$ and $\Orb_N(\cU')$ are linearly isometric iff they are equal.
\end{alphalist}
\end{theo}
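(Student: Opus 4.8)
The plan is to leverage \cref{T-AutOrb} (for part (a)) and \cref{C-AutOrbNormalizer} together with \cref{L-NewIneq} (for part (b)) to pin down the automorphism groups of the orbit codes generated by the generic subspace $\cU'$, and then apply \cref{T-NonIsom} and \cref{T-isoOrbCodes}. For part (a): one direction is trivial, since a Frobenius-isometry is by definition a linear isometry (the Singer normalizer sits inside $\GL_n(q)$). For the converse, suppose $\Orb_S(\cU)$ and $\Orb_S(\cU')$ are linearly isometric via some $\psi\in\GL_n(q)$, so $\psi(\Orb_S(\cU))=\Orb_S(\cU')$. Since $\cU'$ is generic, \cref{T-AutOrb} gives $\Aut(\Orb_S(\cU'))\leq N:=N_{\GL_n(q)}(\Fqn^*)$, and in fact $S\leq\Aut(\Orb_S(\cU'))$; but also $N\leq\Aut(\Orb_S(\cU'))$ because $\Orb_S(\cU')=\Orb_N(\cU')$ is not automatic — here I need to be careful. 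What is automatic is $S\leq\Aut(\Orb_S(\cU'))\leq N$, so $\Aut(\Orb_S(\cU'))$ is a subgroup of $N$ containing $S$; thanks to \cref{T-SingerNorm}(a), $N/S\cong\Gal(\Fqn\mid\F_q)$ is cyclic of order $n$, so $\Aut(\Orb_S(\cU'))=\subgroup{S,\sigma^d}$ for some $d\mid n$.

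The key observation is then: conjugation by $\psi$ carries $\Aut(\Orb_S(\cU))$ isomorphically onto $\Aut(\Orb_S(\cU'))$, and since $S$ is a Singer subgroup normal in each automorphism group (it is normal in $N$, hence in any subgroup of $N$ containing it), $\psi S\psi^{-1}$ is a Singer subgroup inside $\Aut(\Orb_S(\cU'))\leq N$; by \cref{T-SingerNorm}(b) the only Singer subgroup in $N$ is $S$ itself, so $\psi S\psi^{-1}=S$, i.e. $\psi\in N_{\GL_n(q)}(S)=N$. That is exactly what it means for $\psi$ to be a Frobenius-isometry. (Strictly, I should also note $\Aut(\Orb_S(\cU))\leq N$ as well — which follows because $\cU$, being linearly isometric to a code whose automorphism group lies in $N$, has an automorphism group conjugate to one inside $N$; but the cleaner route is the Singer-subgroup argument above, which only needs that $S$ is normal in $\Aut(\Orb_S(\cU))$, guaranteed by \cref{P-GLhatAut}.) For part (b): under the hypothesis $k\leq 3n/8$ or $\delta_s(\cU')\geq 3$ for all $s\mid n$, combine \cref{C-AutOrbNormalizer} and \cref{L-NewIneq}(c) to conclude $\Aut(\Orb_N(\cU'))=N$ exactly (genericity of $\cU'$ rules out $\delta_s(\cU')=1$; the hypothesis handles $\delta_s(\cU')=2$). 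Now $N\subseteq\Aut(\Orb_N(\cU))$ always holds for an $N$-orbit, and $\Aut(\Orb_N(\cU'))=N$, so \cref{T-NonIsom}, applied with $\cC=\Orb_N(\cU)$ and $\cC'=\Orb_N(\cU')$, forces $\Orb_N(\cU)=\Orb_N(\cU')$ whenever they are linearly isometric. The converse (equal $\Rightarrow$ isometric) is trivial via the identity map.

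One gap to address carefully: \cref{T-NonIsom} is stated for Singer orbits $\cC,\cC'$, but its proof only uses that $\Aut(\cC')=N$ is self-normalizing (\cref{T-SingerNorm}(a)) and that $N\subseteq\Aut(\cC)$; the same argument works verbatim for $N$-orbits, so I would either invoke it with a one-line remark that the proof applies unchanged, or re-run the short contrapositive argument inline: if $\psi(\Orb_N(\cU))=\Orb_N(\cU')$ then for every $\tau\in N$ we get $\psi\tau\psi^{-1}\in\Aut(\Orb_N(\cU'))=N$, so $\psi$ normalizes $N$, hence $\psi\in N$ by self-normalization, hence $\psi\in\Aut(\Orb_N(\cU'))$ and $\Orb_N(\cU)=\psi^{-1}(\Orb_N(\cU'))=\Orb_N(\cU')$.

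The main obstacle is the bookkeeping in part (a) around which automorphism groups are \emph{a priori} contained in $N$: I must make sure I only use genericity of $\cU'$ (not of $\cU$) for the strong statement $\Aut(\Orb_S(\cU'))\leq N$, and that for $\cU$ I only need the weaker fact that $S\trianglelefteq\Aut(\Orb_S(\cU))$ from \cref{P-GLhatAut} so that $\psi S\psi^{-1}$ lands as a \emph{normal} Singer subgroup of $\Aut(\Orb_S(\cU'))$ and hence inside $N$ — then \cref{T-SingerNorm}(b) finishes it. A secondary subtlety in part (b) is verifying that the hypotheses precisely cover the cases where \cref{L-NewIneq}(c) versus \cref{C-AutOrbNormalizer} apply (the former needs $k\leq 3n/8$; the latter needs $\delta_t(\cU')\neq 2$ for all $t$), and that in all remaining cases genericity alone already excludes $\delta_s(\cU')\in\{1,2\}$ — which it does not in general, hence the stated hypothesis is genuinely needed.
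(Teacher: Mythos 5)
Your proposal is correct and takes essentially the same route as the paper: in (a) the key step is that $\psi S\psi^{-1}$ is a Singer subgroup contained in $\Aut(\Orb_S(\cU'))\leq N_{\GL_n(q)}(S)$ (by \cref{T-AutOrb} applied to the generic $\cU'$), so \cref{T-SingerNorm}(b) gives $\psi S\psi^{-1}=S$ and hence $\psi\in N_{\GL_n(q)}(S)$, while in (b) the identity $\Aut(\Orb_N(\cU'))=N$ together with self-normalization of~$N$ forces $\psi\in N$ and thus equality of the orbits. Two small remarks: the normality of~$S$ that you invoke in (a) is not actually needed (conjugation by~$\psi$ carries $\Aut(\Orb_S(\cU))\supseteq S$ onto $\Aut(\Orb_S(\cU'))$ regardless, which is all the argument uses), and your reading of the $\delta$-hypothesis in (b) as a condition on $\cU'$ rather than on $\cU$ is exactly what the paper's own proof uses when it applies \cref{C-AutOrbNormalizer} and \cref{L-NewIneq} to conclude $\Aut(\Orb_N(\cU'))=N$.
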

%%%%%%%%%%%%%%%

\begin{proof}
(a) Only ``$\Longrightarrow$'' needs proof.
Set $\cC=\Orb_S(\cU)$ and $\cC'=\Orb_S(\cU')$.
Let $\psi\in\GL_n(q)$ be such that $\psi(\cC)=\cC'$.
\cref{T-isoOrbCodes}(b) tells us that $\cC'=\Orb_{\psi S\psi^{-1}}(\cU'')$, where $\cU''=\psi(\cU)$.
Hence $\Aut(\cC')$ contains the Singer subgroups~$S$ and $\psi S\psi^{-1}$.
By \cref{T-AutOrb} the automorphism group $\Aut(\cC')$ is contained in $N_{\GL_n(q)}(S)$. 
However, by  \cref{T-SingerNorm}(b)  $N_{\GL_n(q)}(S)$ contains only one Singer subgroup. 
This implies $\psi S\psi^{-1}=S$, and thus
$\psi\in N_{\GL_n(q)}(S)$.
\\
(b) Let $\cC:=\Orb_N(\cU)$ and $\psi(\cC)=\cC':=\Orb_N(\cU')$. Then $\cC'=\Orb_{\psi N\psi^{-1}}(\cU'')$, where $\cU''=\psi(\cU)$, and thus
$\psi N\psi^{-1}\leq \Aut(\cC')= N$, where the last identity follows from \cref{C-AutOrbNormalizer} and \cref{L-NewIneq}.
Hence~$\psi\in N$ thanks to \cref{T-SingerNorm}(a), and thus $\cC'=\psi(\cC)=\cC$.
\end{proof}

As the proof shows, Part~(b) above is true for all subspaces that satisfy $\Aut(\Orb_N(\cU))= N$.
Since the three outliers from Examples~\ref{E-SmallNormSingerGLhatOrb}, \ref{E-NormSingerGLhatOrb}, and 
\ref{E-qns=251} are the only orbit of their size in the respective ambient space, they trivially satisfy the equivalence in~(b) above even though
their automorphism group is much larger.

\medskip
We close this section with some examples and a comparison of isometries and weight-preserving bijections between cyclic orbit codes, 
where we define the weight of a codeword in $\Orb_{\Fqn^*}(\cU)$ as the distance to the `reference space'~$\cU$.
Since we will exclusively consider cyclic orbit codes, we write from now on $\Orb(\cU)$ instead of $\Orb_{\Fqn^*}(\cU)$.

In \cite{GLL19} we studied the weight distribution of cyclic orbit codes $\Orb(\cU)$. 
We will see below that codes with the same weight distribution may not be isometric. 
Before providing details we first summarize the results from~\cite{GLL19}.
Recall the notation from~\eqref{e-ds} and~\eqref{e-distCC}.
As before we assume $k\leq n/2$.

%%%%%%%%%%%%%%%%%%%
\begin{defi}\label{D-DistDistr}
Let $\cU\in\Gkn$. 
Define $\omega_{i}=|\{\alpha\,\cU\in\Orb(\cU)\mid \alpha \in \Fqn^*, \ds(\cU,\alpha\,\cU)=i\}|$ for 
$i=0,\ldots,2k$. 
We call $(\omega_0,\ldots,\omega_{2k})$ the \emph{weight distribution} of $\Orb(\cU)$.	
\end{defi}
%%%%%%%%%%%%%%%%%%

Clearly $\omega_0=1$ and $\omega_i=0$ for $i=1,\ldots,d-1$, where $d=\dd(\Orb(\cU))$.
Obviously, the weight distribution is trivial for spread codes (i.e., if $\dd(\Orb(\cU))=2k$).
From~\eqref{e-ds} it follows that 
$\dd(\Orb(\cU)=2(k-\ell)$, where $\ell=\max\{\dim(\cU\cap\alpha\,\cU)\mid \alpha\in\Fqn^*\}$.

In \cref{T-DistDistr} below we list some facts about the weight distribution.
Part~(a) shows that all cyclic orbit codes with distance $2(k-1)$ have the same weight distribution.
Hence there exists a weight-preserving bijection between any such codes. 
However, as we will see below, the codes are not necessarily isometric.
Subspaces~$\cU$ that generate cyclic orbit codes with distance $2(k-1)$ are known as Sidon spaces; see~\cite{RRT18} where 
also constructions of such spaces can be found.

Not surprisingly, codes with distance up to $2k-4$ do not share the same weight distribution in general.
For distance equal to $2(k-2)$, Part~(b) below provides information about the weight distribution. 
Further details about the parameter~$r$ in Part~(b) can be found in \cite[Sec.~4]{GLL19}. 
However, it is not yet fully understood which values this parameter can assume in general.

%%%%%%%%%%%%%%%
\begin{theo}[\mbox{\cite[Thms.~3.7 and~4.1]{GLL19}}]\label{T-DistDistr}
Let $\cU\in\Gkn$ be such that $1\in\cU$. Let $\dd(\Orb(\cU)=2(k-\ell)$, where $\ell>0$. 
Set $Q=(q^k-1)(q^k-q)/(q-1)^2$ and $N=(q^n-1)/(q-1)$.
\begin{alphalist}
\item Suppose $\ell=1$. Then $|\Orb(\cU)|=N$ and 
        \[
            \big(\omega_{2k-2},\,\omega_{2k}\big)=\big(Q,\ N-Q-1\big).
         \]
\item Suppose $\ell=2$ and $|\Orb(\cU)|=N$. Then there exits $r\in\N_0$ and $\epsilon\in\{0,1\}$ such that 
         \[
            \big(\omega_{2k-4},\,\omega_{2k-2},\,\omega_{2k}\big)=\big(\epsilon q+rq(q+1),\ Q-(q+1)\omega_{2k-4},\ 
                 N-\omega_{2k-2}-\omega_{2k-4}-1\big).
         \]
         The case $\epsilon=1$ occurs iff $\cU$ contains the subfield $\F_{q^2}$ (which implies that~$n$ is even).
\end{alphalist}
\end{theo}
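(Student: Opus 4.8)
The engine is a double‑counting identity for the ``autocorrelation'' of $\cU$. Whenever $|\Orb(\cU)|=N=(q^n-1)/(q-1)$ — automatic in~(a) and assumed in~(b) — the assignment $\alpha\mapsto\alpha\,\cU$ induces a bijection from $\PGnq$ onto $\Orb(\cU)$, so the codewords are indexed by points $[\alpha]$; and for $[\alpha]\neq[1]$ we have $\ds(\cU,\alpha\,\cU)=2(k-j_\alpha)$ with $j_\alpha:=\dim(\cU\cap\alpha\,\cU)\le\ell$, so that $\omega_{2(k-j)}$ is the number of points $[\alpha]\neq[1]$ with $j_\alpha=j$. The first step is to establish
\begin{equation}\label{e-plan-DC}
   \sum_{[\alpha]\neq[1]}\frac{q^{\,j_\alpha}-1}{q-1}=\frac{(q^k-1)(q^k-q)}{(q-1)^2}=Q ,
\end{equation}
which I would prove by counting the pairs $([\alpha],P)$ with $P$ a one‑dimensional subspace of $\cU$ and $P\subseteq\alpha\,\cU$: summing over $[\alpha]$ gives $\sum_{[\alpha]}(q^{j_\alpha}-1)/(q-1)$, while for each fixed $P=\langle p\rangle$ the condition $P\subseteq\alpha\,\cU$ (equivalently $\alpha^{-1}p\in\cU$) holds for exactly $(q^k-1)/(q-1)$ points $[\alpha]$, because $[\alpha]\mapsto[\alpha^{-1}p]$ permutes $\PGnq$; hence the count equals $\big((q^k-1)/(q-1)\big)^2$, and removing the contribution $(q^k-1)/(q-1)$ of $[\alpha]=[1]$ gives~\eqref{e-plan-DC}.

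For part~(a) I would first deduce $|\Orb(\cU)|=N$ from $\ell=1$. By \cref{R-SingerOrbSize} it is enough to rule out that $\cU$ is closed under multiplication by a subfield $\F_{q^t}$ with $t\ge2$: if it were, then either $\dim_{\F_{q^t}}\cU=1$, in which case any two distinct cyclic shifts of $\cU$ meet only in $0$ and $\ell=0$; or $\dim_{\F_{q^t}}\cU\ge2$, in which case for $\F_{q^t}$‑independent $u,w\in\cU$ and $\alpha=w/u$ one has $\alpha\,\cU\neq\cU$ while $\F_{q^t}w\subseteq\cU\cap\alpha\,\cU$, so $\ell\ge t\ge2$; both contradict $\ell=1$. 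Now $\ell=1$ forces $j_\alpha\in\{0,1\}$ for $[\alpha]\neq[1]$, so \eqref{e-plan-DC} collapses to $\omega_{2k-2}=Q$; since $\ds$ takes only even values and $\sum_i\omega_i=|\Orb(\cU)|=N$, we obtain $1+\omega_{2k-2}+\omega_{2k}=N$, i.e.\ $\omega_{2k}=N-Q-1$.

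For part~(b), with $\ell=2$ and $|\Orb(\cU)|=N$ we have $j_\alpha\in\{0,1,2\}$ for $[\alpha]\neq[1]$, so \eqref{e-plan-DC} reads $\omega_{2k-2}+(q+1)\,\omega_{2k-4}=Q$, which together with $1+\omega_{2k-4}+\omega_{2k-2}+\omega_{2k}=N$ gives the stated formulas for $\omega_{2k-2}$ and $\omega_{2k}$ in terms of $m:=\omega_{2k-4}=|D_2|$, where $D_2:=\{[\alpha]\neq[1]\mid j_\alpha=2\}$. To pin down $m$, attach to each two‑dimensional $V\subseteq\cU$ the set $T_V:=\{a\in\Fqn^*\mid aV\subseteq\cU\}\cup\{0\}$; the key point is that $T_V$ is an $\F_q$‑subspace of $\Fqn$ (if $a_1V,a_2V\subseteq\cU$ then $(a_1+a_2)V\subseteq a_1V+a_2V\subseteq\cU$) containing $\F_q$. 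Using $\ell=2$ one checks that $[\alpha]\mapsto\cU\cap\alpha^{-1}\cU$ is a well‑defined map from $D_2$ onto the two‑dimensional subspaces $V\subseteq\cU$ with $\dim T_V\ge2$, whose fibre over $V$ is $\{[\alpha]\neq[1]\mid\langle\alpha\rangle\subseteq T_V\}$, of size $(q^{\dim T_V}-1)/(q-1)-1$; hence
\begin{equation}\label{e-plan-D2}
   m=\sum_{V}\Big(\frac{q^{\dim T_V}-1}{q-1}-1\Big)=q\sum_{V}\frac{q^{\dim T_V-1}-1}{q-1},
\end{equation}
the sum being over all two‑dimensional $V\subseteq\cU$ (the terms with $\dim T_V=1$ vanishing). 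In particular $q\mid m$.

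The main obstacle is the congruence $m/q\equiv\epsilon\pmod{q+1}$. The plan is to reduce \eqref{e-plan-D2} modulo $q+1$, using $(q^{d-1}-1)/(q-1)\equiv1$ for $d$ even and $\equiv0$ for $d$ odd, obtaining $m/q\equiv\#\{V\mid\dim T_V\text{ even}\}\pmod{q+1}$. Then put $V\sim aV$ for $a\in T_V$; since $T_{aV}=a^{-1}T_V$, all members of a class have the same $\dim T$, and a class either consists of ``generic'' two‑spaces (stabiliser $\F_q^*$, so of size $(q^{\dim T_V}-1)/(q-1)$, which is $\equiv0\pmod{q+1}$ when $\dim T_V$ is even) or of $\F_{q^2}$‑lines (for which $T_V$ turns out to be $\F_{q^2}$‑stable, hence of even dimension). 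So modulo $q+1$ only the $\F_{q^2}$‑line classes contribute; and since $\ell=2$ forces $\cU$ to contain at most one $\F_{q^2}$‑line — two distinct ones $L_1,L_2$ would give $L_1\oplus L_2\subseteq\cU\cap\alpha\,\cU$ of dimension $4$ for every $\alpha\in\F_{q^2}^*\setminus\F_q^*$ — that class, when present, is a singleton, so $m/q\equiv\#\{\F_{q^2}\text{-lines in }\cU\}\in\{0,1\}\pmod{q+1}$. Finally $\epsilon=1$ iff $\cU$ contains an $\F_{q^2}$‑line, equivalently — after replacing $\cU$ by a suitable shift in its orbit, which leaves the $\omega_i$ unchanged — iff $\F_{q^2}\subseteq\cU$ (which forces $2\mid n$). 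The genuinely delicate feature, and the reason the parameter $r$ in \eqref{e-plan-D2} admits no closed form (cf.\ \cite[Sec.~4]{GLL19}), is that the generic $\sim$‑classes and the values $\dim T_V$ are not otherwise controlled; but the theorem claims only the divisibility above, for which this bookkeeping suffices.
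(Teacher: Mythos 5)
The paper itself contains no proof of this theorem --- it is imported from \cite{GLL19} --- so your argument has to be judged on its own terms; it is essentially correct and, as far as I can tell, follows the same basic strategy as the source (an incidence double count giving the ``first moment'' of the intersection dimensions, plus an analysis of the stabilizers of the two-dimensional intersection spaces). Your double-counting identity is right; the deduction of $|\Orb(\cU)|=N$ from $\ell=1$, of part (a), and of the first two formulas of (b) from it is right; and the fibration of $D_2$ over the two-dimensional $V\subseteq\cU$ with fibre sizes $(q^{\dim T_V}-1)/(q-1)-1$ correctly yields $q\mid\omega_{2k-4}$ and, after reducing class by class modulo $q+1$, the congruence $\omega_{2k-4}/q\equiv\epsilon\pmod{q+1}$ with $r\geq 0$. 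Two small steps should be written out: in (a), $\alpha\,\cU\neq\cU$ for $\alpha=w/u$ because the stabilizer of $\cU$ in $\Fqn^*$ is exactly $\F_{q^t}^*$ while $u,w$ are $\F_{q^t}$-independent; and in (b), the evenness of $\dim T_V$ for an $\F_{q^2}$-line $V$ follows because $T_V\cdot\Stab(V)\subseteq T_V$ makes $T_V$ an $\F_{q^2}$-subspace, and the singleton claim uses that every $aV$ with $a\in T_V\setminus\{0\}$ is again an $\F_{q^2}$-line inside $\cU$.

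The one substantive remark concerns the characterization of $\epsilon$. What you actually prove is that $\epsilon=1$ iff $\cU$ contains a one-dimensional $\F_{q^2}$-subspace $\F_{q^2}\beta$; this coincides with the quoted condition ``$\F_{q^2}\subseteq\cU$'' only after replacing $\cU$ by the shift $\beta^{-1}\cU$, which still contains $1$ and generates the same orbit. This is not a gap in your argument but an imprecision in the literal reading of the statement: $\omega_{2k-4}$, and hence $\epsilon$, is unchanged when $\cU$ is replaced by any $\gamma\,\cU$ with $1\in\gamma\,\cU$, whereas the condition $\F_{q^2}\subseteq\cU$ is not (compare $\cU=\F_q+\F_{q^2}\beta$ with $\beta^{-1}\cU$ when $1\notin\F_{q^2}\beta$). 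So the shift-invariant form you establish is the correct one, and the ``iff'' in the theorem must be read up to such a normalization of the orbit representative, exactly as you indicate.
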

%%%%%%%%%%%%%%%

In the following examples we list all isometry classes of the subspaces in question along with their automorphism group.
In most cases the size of the isometry class is determined by the automorphism group as follows.

%%%%%%%%%%%%
\begin{rem}\label{R-AutGroupIsoClass}
Let~$\cC=\Orb_{\Fqn^*}(\cU)$ be a cyclic orbit code with automorphism group $A$ contained in $N_{\GL_n(q)}(\Fqn^*)$.
Then the isometry class of~$\cC$ consists of~$\nu$ cyclic orbit codes, where $\nu=n(q^n-1)/|A|$.
This is due to \cref{T-IsomCOC}, which tells us that two cyclic orbit codes are isometric iff they belong to the same 
orbit under the normalizer of the Singer subgroup $\Fqn^*$.
\end{rem}
%%%%%%%%%%%%%%%%%

In the following examples, the total number of orbits also follows from the formula for the number of Singer orbits of a given length that is
provided in \cite[Thm.~2.1]{Dru02} for general $(q,n,k)$.

%%%%%%%%%%%%%%%%%%%%%
\begin{exa}\label{E-263}
Let $(q,n,k)=(2,6,3)$.
There exist $23$ cyclic orbit codes generated by $3$-dimensional subspaces. 
One of them is $\Orb(\F_{2^3})$, which is a spread code 
(i.e., it consists of~$9$ subspaces and its subspace distance is~$6$; hence the union of its subspaces is~$\F_{2^6}$). 
Its automorphism group is $\Aut(\Orb(\F_{2^3}))=N_{\GL_6(2)}( \GL_2(2^3))$.
This follows directly from \cref{T-AutOrb} along with the fact that $\Gal(\F_{2^3}\mmid\F_2)$ acts trivially on $\Orb(\F_{2^3})$.
Clearly, this is the only orbit generated by a non-generic subspace of $\F_{2^6}$.
Even more, it is the only orbit with a generating subspace~$\cU$ such that $\delta_3(\cU)\neq2$ (see \cref{D-Uhat}).  
The other $22$ orbits have length $2^6-1$, and their automorphism group is contained in $N_{\GL_6(2)}(\F_{2^6}^*)$ thanks to \cref{T-AutOrb}.
They classify as follows.
Note that distance~$4$ corresponds to Case (a) of the above theorem and distance~$2$ to Case~(b). 
In the latter case we also present the value of $\omega_{2k-4}=\omega_2$ (which fully determines the weight distribution).
It is, of course, invariant under isometry and thus identical for all orbits in the isometry class. 
Finally, we also present~$\delta_2(\cU)$ for any subspace~$\cU$ in any of the orbits.
\begin{alphalist}
\item Orbits with automorphism group~$\F_{2^6}^*$:
        \\
        -- 1 isometry class, consisting of orbits with distance 4 ($\delta_2(\cU)=3$).
        \\
        -- 1 isometry class, consisting of orbits with distance 2 and $\omega_2=6$ ($\delta_2(\cU)=3$). 
\item  Orbits with automorphism group $\Gal(\F_{2^6}\mmid\F_{2^3})\rtimes \F_{2^6}^*$:
         \\
        -- 1 isometry class, consisting of orbits with distance~$2$ and $\omega_2=2$ ($\delta_2(\cU)=2$). 
        \\
         -- 1 isometry class, consisting of orbits with distance~$2$ and $\omega_2=6$ ($\delta_2(\cU)=3$). 
\item  Orbits with automorphism group $\Gal(\F_{2^6}\mmid\F_{2^2})\rtimes \F_{2^6}^*$:
         \\
         -- 1 isometry class, consisting of orbits with distance~$4$ ($\delta_2(\cU)=3$).
         \\
         -- 1 isometry class, consisting of orbits with distance~$2$ and $\omega_2=2$ ($\delta_2(\cU)=2$). 
\end{alphalist}
\end{exa}
%%%%%%%%%%%%%%%%%%

%%%%%%%%%%%%%%%%%%%%%
\begin{exa}\label{E-273}
Let $(q,n,k)=(2,7,3)$.
In this case, there are no proper subfields of $\F_{2^7}$ to be taken into account, and in particular $\epsilon=0$
in Case~(b) of \cref{T-DistDistr}.
There exist $93$ cyclic orbit codes generated by $3$-dimensional subspaces. 
All of them have length $2^7-1$.
They classify  as follows.
\begin{alphalist}
\item Orbits with automorphism group~$\F_{2^7}^*$:
        \\
        -- 10 isometry classes, consisting of orbits with distance 4.
        \\
        -- 3 isometry classes, consisting of orbits with distance 2 and $\omega_2=6$.
\item  Orbits with automorphism group $\Gal(\F_{2^7}\mmid\F_2)\rtimes \F_{2^7}^*$:
         \\
         --  2 isometry classes, each consisting of a single orbit with distance~$4$.
\end{alphalist}
\end{exa}
%%%%%%%%%%%%%%%%%%

%%%%%%%%%%%%%%%%%%%%%
\begin{exa}\label{E-283}
Let $(q,n,k)=(2,8,3)$. 
There exist $381$ cyclic orbit codes generated by a $3$-dimensional subspace. All orbits have length $2^8-1$.
Exactly one orbit is generated by a subspace contained in~$\F_{2^4}$.
Clearly, all other orbits are generated by subspaces~$\cU$ with $\delta_4(\cU)=2$.
The orbits classify as follows. We present the data as in \cref{E-263}.
\begin{alphalist}
\item Orbits with automorphism group~$\F_{2^8}^*$:
        \\
        -- 38 isometry classes, consisting of orbits with distance 4 ($\delta_2(\cU)=3$).
        \\
        -- 4 isometry classes, consisting of orbits with distance 2 and $\omega_2=6$ ($\delta_2(\cU)=3$).
        \\
        -- 2 isometry classes, consisting of orbits with distance 2 and $\omega_2=2$ ($\delta_2(\cU)=2$).  
\item  Orbits with automorphism group $\Gal(\F_{2^8}\mmid\F_{2^4})\rtimes \F_{2^8}^*$:
        \\
        -- 3 isometry classes, consisting of orbits with distance 4 ($\delta_2(\cU)=3$).
        \\
        -- 2 isometry classes, consisting of orbits with distance 2 and $\omega_2=6$ ($\delta_2(\cU)=3$).
        \\
        -- 1 isometry class, consisting of orbits with distance 2 and $\omega_2=2$ ($\delta_2(\cU)=2$).
   
\item  Orbits with automorphism group $\Gal(\F_{2^8}\mmid\F_{2^2})\rtimes \F_{2^8}^*$:
        \\
        -- 2 isometry classes, consisting of orbits with distance 4 ($\delta_2(\cU)=3$).
\item  Orbits with automorphism group $\Gal(\F_{2^4}\mmid\F_{2})\rtimes\GL_2(2^4)$:
         \\
        -- 1 isometry class, consisting of a single orbit with distance 2 and $\omega_2=14$ ($\delta_2(\cU)=2$).
        This cyclic orbit code is the only orbit generated by a subspace contained in $\F_{2^4}$ (and it contains $\F_{2^2}$).
\end{alphalist}
\end{exa}
%%%%%%%%%%%%%%%%%%

\section*{Conclusion and Open Problems}
We studied orbits of $\F_q$-subspaces of~$\Fqn$ under the Singer subgroup and under the normalizer of the Singer group.
For cyclic orbit codes generated by generic subspaces we proved that a linear isometry between such orbits
is contained in the normalizer of the Singer group.
The result implies that, for most parameter cases, distinct orbits under the normalizer of the Singer subgroup  are not linearly isometric. 
The following questions remain.
\begin{alphalist}
\item We strongly believe that the isometry result for orbits under the normalizer is true for all parameter cases.
This would follow if \cref{C-Conjec} can be established, that is: the automorphism group of a normalizer orbit generated by a 
subspace~$\cU$ does not contain the field-extension subgroup $\GLhat$ if~$\cU$ is not contained in~$\Fqs$ -- unless~$\cU$ is one of the exceptional cases from Examples~\ref{E-SmallNormSingerGLhatOrb}, \ref{E-NormSingerGLhatOrb}, and \ref{E-qns=251}.
\item Furthermore, our isometry result in \cref{T-IsomCOC}  is true only for orbits generated by generic subspaces. 
It is an open question whether the same result is true for arbitrary orbits.
\item Finally, as we briefly address in Section~\ref{S-AutGroupSingerNorm} we believe that any subspace $\cU\subseteq\Fqn$ satisfies
$\Orb_N(\cU)\subseteq\Orb_{\GLhat}(\cU)$, where $N=N_{\GL_n(q)}(\Fqn^*)$ and $s\leq n/2$ is any divisor of~$n$.
We have to leave this to future research.
\end{alphalist}

%%%%%%%%%%%%%%%%%%
\section*{Appendix}
\renewcommand{\theequation}{A.\arabic{equation}}

\noindent\emph{Proof of \cref{T-AdjGroups}:}
Let~$\omega$ be a primitive element of~$\Fqn$ and $f=X^n-\sum_{i=0}^{n-1}f_i X^i\in\F_q[X]$ be its minimal polynomial.
We proceed in several steps.
\\
\underline{Step 1:} Recall the maps $m_a$ from~\eqref{e-ma}. According to \cref{D-GLhat} we identify $\Fqn^*$ with 
\[
            \Fqn^*=\subgroup{m_{\omega}}=\{m_{\omega^i}\mid i=0,\ldots,q^n-2\}. 
\]
We determine all maps $\rho\in\GL_n(q)$ such that $\rho^{-1}\circ m_\omega^\dagger\circ\rho=m_\omega$.
These maps then clearly satisfy $\rho^{-1}(\Fqn^*)^\dagger\rho=\Fqn^*$, which is what we want.
The reader may recall the fact that any matrix~$A$ is similar to its transpose $A\T$
(use for instance the fact that they share the same invariant factors).
Hence there exists at least one such map $\rho\in\GL_n(q)$. 
However, we need to determine all of them explicitly in order to select a suitable one in a later step.
Consider the recurrence relation
\begin{equation}\label{e-recurrence}
   x_{j+n}=\sum_{i=0}^{n-1}f_ix_{j+i}\ \text{ for }\ j\geq0.
\end{equation}
For every initial condition $x_0=a_0,\ldots,x_{n-1}=a_{n-1}$ the recurrence~\eqref{e-recurrence} has a unique solution, 
which we denote by $(a_i)_{i\in\N_0}$. 
Set $\cR:=\{\rho\in\GL_n(q)\mid \rho^{-1}\circ m_\omega^\dagger\circ\rho=m_\omega\}$. 
Thus every $\rho\in\cR$ satisfies $\rho^{-1}(\Fqn^*)^\dagger\rho=\Fqn^*$.
We show 
\begin{equation}\label{e-rhoset}
   \cR =\bigg\{\rho\in\text{End}_{\F_q}(\Fqn)\bigg|\begin{array}{l}\exists (a_0,\ldots,a_{n-1})\in\Fqn\setminus0:\\[.6ex]
                \rho(\omega^i)=\sum_{j=0}^{n-1}a_{j+i}\omega^j\ \text{ for }\ i\in\N_0\end{array}\bigg\}.
\end{equation}
Note that this identity tells us that the maps~$\rho\in\cR$ are fully determined by the value of~$\rho(1)$, which is given as $\sum_{j=0}^{n-1}a_j\omega^j$.
\\
`$\subseteq$'
Let $\rho\in\cR$.
Set $\rho(1)=a\in\Fqn^*$ and write $a=\sum_{j=0}^{n-1}a_j\omega^j$.
Since $\inner{\omega^i}{\omega^j}=\delta_{i,j}$ for $i,j=0,\ldots,n-1$ this means $\inner{\rho(1)}{\omega^j}=a_j$ for $j=0,\ldots,n-1$.
Inducting on~$i$ we show now that
\begin{equation}\label{e-rho}
    \rho(\omega^i)=\sum_{j=0}^{n-1}a_{j+i}\omega^j\ \text{ for all }\ i\in\N_0. 
\end{equation}
It is clearly true for $i=0$.
For the induction step note first that the identity $m_\omega^\dagger\circ\rho=\rho\circ m_\omega$ is equivalent to 
\begin{equation}\label{e-omegarhoadj}
    \inner{\rho(\omega y)}{z}=\inner{\rho(y)}{\omega z} \text{ for all }y,z\in\Fqn.
\end{equation}
Assuming now~\eqref{e-rho} and using~\eqref{e-omegarhoadj} we obtain
$\inner{\rho(\omega^{i+1})}{\omega^j}=\inner{\rho(\omega^{i})}{\omega^{j+1}}=a_{j+1+i}$ for $j=0,\ldots,n-2$ and 
$\inner{\rho(\omega^{i+1})}{\omega^{n-1}}=\inner{\rho(\omega^{i})}{\omega^{n}}=\sum_{j=0}^{n-1}f_j\inner{\rho(\omega^{i})}{\omega^{j}}
     =\sum_{j=0}^{n-1}f_ja_{j+i}=a_{j+n}$,
where the last step follows from~\eqref{e-recurrence}.
Hence $\rho(\omega^{i+1})=\sum_{j=0}^{n-1}a_{j+i+1}\omega^j$, and this establishes~\eqref{e-rho}.
All of this shows that~$\rho$ is in the set on the right hand side of~\eqref{e-rhoset}.
\\
`$\supseteq$' Let~$\rho$ in the set on the right hand side of~\eqref{e-rhoset}.
In order to establish~\eqref{e-omegarhoadj} it suffices to show that
\begin{equation}\label{e-omegaij}
  \inner{\rho(\omega^{i+1})}{\omega^j}=\inner{\rho(\omega^i)}{\omega^{j+1}}\ \text{ for all }\ i,j=0,\ldots,n-1.
\end{equation}
The left hand side simplifies to 
$\inner{\rho(\omega^{i+1})}{\omega^j}=\sum_{\ell=0}^{n-1}a_{\ell+i+1}\inner{\omega^\ell}{\omega^j}=a_{j+i+1}$ for all $j=0,\ldots,n-1$.
For $j=0,\ldots,n-2$ the right hand side of~\eqref{e-omegaij} turns into
$\inner{\rho(\omega^i)}{\omega^{j+1}}=\sum_{\ell=0}^{n-1}a_{\ell+i}\inner{\omega^\ell}{\omega^{j+1}}=a_{j+1+i}$, while for $j=n-1$ we have
\[
   \inner{\rho(\omega^i)}{\omega^{j+1}}= \inner{\rho(\omega^i)}{\omega^{n}}=\sum_{\ell=0}^{n-1}a_{\ell+i}\sum_{r=0}^{n-1}f_r\inner{\omega^\ell}{\omega^r}
    =\sum_{\ell=0}^{n-1}a_{\ell+i}f_\ell=a_{n+i}=a_{j+1+i},
\]
where the penultimate identity follows from~\eqref{e-recurrence}.
All of this establishes~\eqref{e-omegaij}.
In order to complete the proof of~\eqref{e-rhoset} it remains to show that~$\rho$ is an isomorphism. 
Assume $\rho(b)=0$ for some $b\in\Fqn$. 
Then~\eqref{e-omegarhoadj} implies
\[
    0=\inner{\rho(b)}{z}=\inner{\rho(\omega b)}{\omega^{-1} z}\ \text{ for all }\ z\in\Fqn.
\]
Hence $\rho(\omega b)=0$ and thus $\rho(\omega^ib)=0$ for all~$i=0,\ldots,q^n-2$.
Since $\rho(1)=a\neq0$, this implies $b=0$. Thus~$\rho$ is injective and an isomorphism.
\\[.6ex]
\underline{Step 2:} Let $\sigma:\Fqn\longrightarrow\Fqn$ be the Frobenius homomorphism, thus $\sigma(z)=z^q$ for all $z\in\Fqn$.
We now want to determine a map~$\rho\in\cR$ satisfying $\rho^{-1}\circ\sigma^\dagger\circ\rho=\sigma^{-1}$.
Consider the $\F_q$-linear map $\xi:\Fqn\longrightarrow\Fqn,\ z\longmapsto \sigma(z)-z$. Clearly, $\ker\xi=\F_q$.
Thus, $\dim(\im\xi)=n-1$.
Pick now 
\[
       \rho(1)\in(\im\xi)^\perp\setminus0\ \text{(which is unique up to $\F_q$-scalar multiples).}
\]
Thanks to~\eqref{e-rhoset} this determines a  unique map~$\rho\in\cR$. 
The choice of $\rho(1)$ implies
\begin{equation}\label{e-rho1}
        \inner{\rho(1)}{\sigma(z)}=\inner{\rho(1)}{z} \ \text{ for all }z\in\Fqn.
\end{equation}
With the aid of~\eqref{e-omegarhoadj} we obtain
\[
   \inner{\rho(\omega^i)}{\omega^j}=\inner{\rho(1)}{\omega^{i+j}}=\inner{\rho(1)}{\omega^{(i+j)q}}
   =\inner{\rho(\omega^{iq})}{\omega^{jq}}\ \text{ for all }\ i,j\in\N_0.
\]
Since $1,\omega,\ldots,\omega^{n-1}$ is an $\F_q$-basis of $\Fqn$, this implies $\inner{\rho(y)}{z}=\inner{\rho(\sigma(y))}{\sigma(z)}$ 
for all $z,y\in\Fqn$.
The latter is equivalent to $\inner{\rho(\sigma^{-1}(y)}{z}=\inner{\rho(y)}{\sigma(z)}$ for all $z,y\in\Fqn$, and this means
$\rho\circ\sigma^{-1}=\sigma^\dagger\circ\rho$. 
All of this implies $\rho^{-1}\Gal(\Fqn\mmid\F_q)^\dagger\rho=\Gal(\Fqn\mmid\F_q)$.
\\[.6ex]
\underline{Step 3:} Let $s$ be a divisor of~$n$ and consider the subgroup $\GLhat$ of $\GL_n(q)$. Let $\gamma\in\GLhat$. 
We have to show that $\rho^{-1}\circ\gamma^\dagger\circ\rho=:\hat{\gamma}$ is in $\GLhat$, i.e., that $\hat{\gamma}$ is $\Fqs$-linear.
Let $N=(q^n-1)/(q^s-1)$. 
Then $\Fqs=\F_q[\omega^N]$ and thus it suffices to prove that
\begin{equation}\label{e-gammahat}
      \hat{\gamma}(\omega^N y)=\omega^N\hat{\gamma}(y)\ \text{ for all }\ y\in\Fqn.
\end{equation}
Using $\rho\circ\hat{\gamma}=\gamma^\dagger\circ\rho$ together with~\eqref{e-omegarhoadj} and the $\Fqs$-linearity of~$\gamma$, 
we compute for $y,z\in\Fqn$
\begin{align*}
      \inner{\rho\circ\hat{\gamma}(\omega^N y)}{z}&=\inner{\rho(\omega^N y)}{\gamma(z)}=\inner{\rho(y)}{\omega^N\gamma(z)}
                    =\inner{\rho(y)}{\gamma(\omega^N z)}\\
          &=\inner{\gamma^\dagger(\rho(y))}{\omega^N z}=\inner{\rho(\hat{\gamma}(y))}{\omega^N z}
                   =\inner{\rho(\omega^N\hat{\gamma}(y))}{z}.
\end{align*} 
Since this is true for all $z\in\Fqn$ and since~$\rho$ is an isomorphism, this implies~\eqref{e-gammahat}.
All of this proves $\rho^{-1}\GLhat^\dagger\rho^{-1}=\GLhat$.
\hfill $\square$
%%%%%%%%%%%%%%%%%

\bibliographystyle{abbrv}

\end{document}